\newcommand\SHORTERICALP[1]{}
\newcommand\manonclass{\linearderivlength^\bullet}
\newcommand\MANONlim{ELim}
\newcommand\manonclasslim{\bar{\manonclass}}
\newcommand{\signb}[1]{\bar{\fonction{cond}}(#1)}
\newcommand\Encode{\textit{Encode}}
\newcommand\Decode{\textit{Decode}}
\newcommand{\composition}{\textit{composition}}
\newcommand{\linearLODE}{\textit{linear length ODE}}
\newcommand{\robODE}{\textit{robust ODE}}
\newcommand{\contClasslim}{\bar{\mathbb{RCD}}}
\newcommand{\contClass}{\mathbb{RCD}}
\newcommand{\contClasslight}{\mathbb{RCD_{*}}}
\newcommand\spaceclasstanhlim{\bar{\mathbb{RLD}^\circ}}
\newcommand\manonclasslighttanh{\linearderivlength^\circ}
\newcommand\manonclasslighttanhlim{\bar{\manonclasslighttanh}}
\newcommand\EncodeMul{\textit{EncodeMul}}
\newcommand\DP{\operatorname{DP}}
\newcommand\send{\operatorname{send}}
\newcommand\smooth{\mathcal{C}}
\newcommand\sendtanh{\operatorname{\smooth-send}}
\newcommand\sendsymbol{\mapsto}
\newcommand\TTtanh{\operatorname{\smooth-if}}
\newcommand\relu{\operatorname{ReLU}}
\newcommand\sig{\operatorname{\mathfrak{s}}}
\newcommand\sigtanh{\operatorname{\mathfrak{\smooth-s}}}
\newcommand\relutanh{\operatorname{ReLU\mathfrak{-s}}}
\newcommand\vectorl[1]{{\mathbf#1}}
\newcommand\vn{\vectorl{n}}
\newcommand\vx{\vectorl{x}}
\newcommand\R{\mathbb{R}}
\newcommand\N{\mathbb{N}}
\newcommand\Z{\mathbb{Z}}
\newcommand{\fonction}[1]{\textrm{#1}}
\newcommand\projection[2]{\mathbf{\pi}_{#1}^{#2}}
\newcommand\plus{\mathbf{+}}
\newcommand\minus{\mathbf{-}}
\newcommand{\sign}[1]{\fonction{sg}(#1)}
\newcommand{\tu}[1]{\mathbf{#1}}
\newcommand{\cp}[1]{\mathbf{#1}}
\newcommand{\Ptime}{\cp{PTIME}} 
\newcommand{\FPtime}{\cp{FPTIME}}
\newcommand{\Pspace}{\cp{PSPACE}}
\newcommand{\NPspace}{\cp{NPSPACE}}
\newcommand{\FPspace}{\cp{FPSPACE}}
\newcommand{\NLOGSPACE}{\ensuremath{\operatorname{NLOGSPACE}}}
\newcommand{\NSPACE}{\ensuremath{\operatorname{NSPACE}}}
\newcommand{\sSPACE}{\ensuremath{\operatorname{SPACE}}}
\renewcommand\bar[1]{\overline{#1}}
\newcommand{\dderiv}[2]{\frac{\delta #1}{\delta #2}}
\newcommand{\dderivL}[1]{\frac{\delta #1}{\delta \lengt}}
\newcommand{\dderivl}[1]{\frac{\partial #1}{\partial \ell}}
\newcommand{\lengt}{\mathcal{L}}
\newcommand\lengthnotation{\ell}
\newcommand{\length}[1]{\mathrm{\lengthnotation}(#1)}
\newcommand{\linearderivlength}{\mathbb{LDL}}
\newcommand\encodagemot{\gamma_{word}}
\newcommand\base{4}
\newcommand\symboleun{1}
\newcommand\symboledeux{3}
\newcommand\Image{\mathcal{I}}
\newcommand{\Next}{\textit{Next}}
\newtheorem{theorem}{Theorem}
\newtheorem{proposition}{Proposition}
\newtheorem{lemma}{Lemma}
\newtheorem{corollary}{Corollary}
\newtheorem{definition}{Definition}
\newtheorem{remark}{NB}
\newtheorem{example}{Example}
\title{The complexity of computing 
in continuous time: space complexity  is precision}
\author{Manon Blanc \\ 
	manon.blanc@lix.polytechnique.fr \\
	École Polytechnique, LIX, Palaiseau, France, Université Paris-Saclay, LISN, Orsay, France
	\and Olivier Bournez \\
	olivier.bournez@lix.polytechnique.fr\\
	École Polytechnique, LIX, Palaiseau, France}
\date{ }
\begin{document}
	
	\maketitle

\begin{abstract}
Models of computations over the integers are equivalent from a computability and complexity theory point of view by the Church-Turing thesis. It is not possible to unify discrete-time models over the reals. 
The situation is unclear but simpler for continuous-time models, as there is a unifying mathematical model, provided by ordinary differential
equations (ODEs). Each model corresponds to a particular class of  ODEs. 
For example, the General Purpose Analog Computer model of Claude Shannon, introduced as a mathematical model of analogue machines (Differential Analyzers),  is known to correspond to polynomial ODEs.  However, the question of a robust complexity theory for such models and its relations to classical (discrete) computation theory is an old problem.
There was some recent significant progress:  it has been proved that (classical) time complexity corresponds to the length of the involved curves, i.e. to the length of the solutions of the corresponding polynomial ODEs. The question of whether there is a simple and robust way to measure space complexity remains. We argue that space complexity corresponds to precision and conversely.

Concretely, we propose and prove an algebraic characterisation of $\FPspace$, using 
\emph{continuous} ODEs. 
Recent papers proposed algebraic 
characterisations of polynomial-time and -space complexity classes over the reals, but with a discrete-time: those algebras rely on discrete ODE schemes.
Here, we use classical (continuous) ODEs, with the classic definition of derivation and hence with the more natural context of continuous-time associated with ODEs. We characterise both the case of polynomial space functions over the integers and the reals. 
This is done by proving two inclusions. The first is obtained using some original polynomial space method for solving ODEs. For the other, 
we prove that Turing machines, with a proper representation of real numbers, can be simulated by continuous ODEs and not just discrete ODEs. 
A major consequence is that the associated space complexity is provably related to the numerical stability of involved schemas and the associated required precision. We obtain that a problem can be solved in polynomial space if and only if it can be simulated by some numerically stable ODE, using a polynomial precision. 

\end{abstract}

\section{Introduction}

Recently, there has been a renewed interest in models of computations over the reals and their associated complexity classes. The fact that these models appear in complexity issues of deep learning models (a.k.a. neural networks) partially explains it. For example, various problems, such as the training of fully connected neural networks, have been proved to be a $\exists \R$-complete problem \cite{bertschinger10training}. 
Complexity classes like $\cp{FIXP}$ were introduced to discuss the complexity of continuous functions' fixed points in various contexts, such as game theory \cite{etessami2010complexity}. These classes and statements are related to discrete-time models of computation over the reals. 

For discrete-time models of computations over the reals, the most famous approaches are computable analysis, based on the Turing machine model in \cite{Tur36} and \cite{Wei00} and algebraic models such as the Blum Shub Smale (BSS) model of computation \cite{BSS89,BCSS98}. The class $\exists \mathbb{R}$ corresponds to the (constant-free, equivalently uniform) non-deterministic time of the BSS model of computation. An extensive list of decision problems was proved recently to be in this class.
Both models were tailored for very different applications and it is well-known we cannot unify existing models with the equivalent of a Church-Turing thesis. For example, computable functions in a computable analysis model need to be continuous, while the BSS model intends to consider functions and problems over the polynomials that are not. It is also explained by the fact that some models have not been introduced with the idea of corresponding to actual physical machines but also to discuss abstract complexity (lower and upper bounds) for associated problems. 

Among models of computation over the reals, we can also distinguish continuous-time models. This includes models of old, first-ever built computers, such as the Differential Analysers \cite{ulmann2020analog}. A famous mathematical model of such machines is the General Purpose Analog Computer model of Claude Shannon \cite{Sha41}. It covers many historical machines and today's analogue devices \cite{LivreAnalogcomputing,veritasum} too. It also includes various recent approaches and models from deep learning such as Neural ODEs \cite{chen2018neural,kidger2022neural} with many variants. In the context of continuous-time, the situation is clearer than with discrete-time models, as there is a unifying way to describe these models, provided by ODEs. Each model corresponds to a particular class of  ODEs.  For example, the GPAC corresponds to polynomial ODEs \cite{GraCos03}, and Neural ODEs are made by selecting
the best solution among a parameterised class of ODEs: see, e.g. \cite{kidger2022neural}. 

Even if particular classes of ODES can describe such models, defining a robust and well-defined computation theory for continuous-time computations is not an easy problem: see \cite{bournez2021survey} for the most recent survey. In short, the problem with time complexity is that considering the time variable as a measure of time is not robust: a curve can always be re-parameterised using a change of variable. The problem with space complexity is similar: reparameterisation corresponds to a change of time variable, but also of space-variable, introducing space and time contractions: See e.g. \cite{bournez2021survey,JournalACM2017}. Furthermore, many problems for simple dynamical systems are known to be undecidable, hence forbid $\Pspace$-completeness see \cite{dsg06a} and \cite{GracaZhongHandbook}. 

There was a recent breakthrough in \cite{ICALP2016vantardise,JournalACM2017}, where the authors relate time with the length of the solution curve of an ODE. It holds for polynomial ODEs, and their projections, known to cover a very wide class of functions, including all common functions or functions that can be built from them \cite{TheseDaniel}.  As the length of a curve is an invariant, this solves the issue of a possible change of variable. Furthermore, the authors prove that for polynomial ODEs, this is polynomially related 
to the time required to solve an ODE, hence providing a robust notion of time for ODEs. 
These statements and underlying constructions, which allow the simulation of Turing machines, 
led to solving several open problems: the existence of a universal ODE \cite{ICALP2017}, the proof of the  Turing-completeness of chemical reactions \cite{CMSB17vantardise}, or statements about the hardness of several dynamical systems problems \cite{GracaZhongHandbook}. 

The question of whether we can give a simple equivalent defining \emph{space-complexity} remains. We argue here that space complexity is polynomially related and conversely to the numerical stability of ODEs and their associated precision. We prove that a problem can be solved in polynomial space
iff it can be simulated by some numerically stable ODE, using a polynomial precision.  
We prove this holds both for classical complexity over the discrete (functions over the integers) and also for space complexity for real functions in the model of computable analysis. 

\begin{remark}
	In the literature, there are two possible definitions for $\FPspace$, according to whether functions with non-polynomial size values are allowed or not. In this article, when we talk about $\FPspace$, we always assume the outputs remain of polynomial size.
	Otherwise, the class is not closed by composition: the issue is about not counting the output 
	as part of the total space used. Given $f$ computable in polynomial space and $g$ in logarithmic space, $f \circ g$ (and $g \circ f$) is computable in polynomial space. But, if exponential size output is allowed, this is not true: if we assumed only $f$ and $g$ to be computable in polynomial space since the first might give an output of exponential size.
\end{remark}

These questions of providing characterisations of classical complexity using ODEs can also be seen from the so-called ``implicit complexity'' point of view. 
Having "simple" characterisations of computability and complexity classes is useful
for various fundamental and applied science fields. We are interested here in "algebraic"
characterisations of those classes: we want to define them as the smallest set $ [f_1, \dots, f_k; o_1, \dots, o_l] $ where the $f_i$'s are functions, closed under the operators $o_i$'s. For example, the set of computable functions over the integers is well-known to be: 
$$[0, 1, \pi^i_k; \textit{composition, minimisation, primitive recursion}].$$
 Implicit complexity aims at giving similar algebras for classes of complexity theory: a reference survey is \cite{Clo95,clote2013boolean}.
The main benefit is to avoid the use of the framework of Turing machines, which is rather 
heavy and not necessarily well-known outside fundamental computer science.
Several characterisations for $\Ptime$ over the integers were proposed. 
The first is due to Cobham in \cite{Cob65}, but relies on explicit ah-hoc bounds. 
Other approaches have then been proposed, see surveys
\cite{Clo95,clote2013boolean}.  Recently, Bournez and Durand in \cite{MFCSJournal}
suggested an algebra using the so-called "linear-length" discrete ODEs. Instead of 
having explicit bounds, the linearity of the involved discrete ODE guarantees polynomial time complexity.

Using a similar approach, Blanc and Bournez in \cite{BlancBournezMCU22vantardise} and in \cite{BlancBournezMFCS2023vantardise} extended the constructions to a characterisation of $\Ptime$ for function over 
the reals. The latter extended the result to $\Pspace$, defining \emph{robust} ODEs as in Definition \ref{def:roblinear lengt ODE}. However, those models rely
on discrete ODEs, which are discrete-time and less natural than continuous ODEs.
We review all those results in Section \ref{sectionPrevioussoa}. 

This paper can be related to \cite{ICALP2016vantardise,JournalACM2017}: the authors of these articles provided a characterisation 
of $\Ptime$ with \emph{continuous} ODEs,  establishing that time complexity corresponds to the length of the involved curve, i.e. the motto time complexity = length. Here, we get a motto of the form \textbf{space complexity = precision}. Some of our constructions have similarities with statements in \cite{csl24}. In the later paper, the authors introduce various robustness concepts and prove they lead to tractability. 
See the references in \cite{csl24} for similar robustness statements. Robustness can also be associated with a dual motivation: the authors of \cite{gracca2023robust} introduced a concept of robust undecidability, while here we want a concept of robustness leading to tractability. 

This is not the first time $\FPspace$ is characterised using continuous ODES. However, the existing characterisation \cite{TheseRiccardo,BGDPRiccardo2022} is obtained with complicated conditions on ODEs, while we have a simpler statement, linking complexity to precision in a simple manner. Notice that the latter approach dealt with polynomial ODEs, while we do not restrict to polynomial ODEs. We obtain our statements by revisiting the approach of the latter papers but working over a compact domain and dealing with error correction more finely. 

Intuitively, this can also be read as being in $\Pspace$ for an ODE is consistent with having an attractor easily discretisable when there is one. We can also define the 
notion of robustness, as the insensitivity to ``small" perturbations.

While discussing all these issues, we propose an algebraic characterisation of $\Pspace$, using \emph{continuous} ODEs with the algebra ($\contClass$ is for Continuous Robust Differential).  Schema \robODE{}  is formally defined in Definition \ref{defRobODE}: 

$$\contClass = [0, 1,\pi_i^k, +, -, \times, \tanh, \cos, \pi, \frac{x}{2}, \frac{x}{3}; \composition, \robODE]$$ 

For a function $\tu f: \R^{d} \to \R^{d'}$ sending every integer $\vn \in \N^{d}$ to the vicinity of some integer of  $\N^{d}$, say at distance less than $1/4$, we write $\DP(f)$ for its discrete part: this is the function from $\N^{d} \to \N^{d'}$ mapping $\vn \in \N^{d}$ to the integer rounding of $\tu f(n)$. For a class $\mathcal{C}$ of such functions, we write $\DP(\mathcal{C})$ for the class of the discrete parts of the functions of $\mathcal{C}$. 

\begin{theorem} \label{th:mainone}
$\DP(\contClass)= \FPspace  \cap \N^{\N}.$
\end{theorem}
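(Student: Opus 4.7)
The plan is to prove the two inclusions separately, by structural induction for the easier direction and by an explicit simulation for the harder one. Before starting, I would record the auxiliary fact that every $f \in \contClass$ admits a polynomial modulus of continuity and polynomial growth bounds that propagate through the inductive constructions, and that the sub-algebra can implement smooth analogues of sign, conditional and selection (from $\tanh$, $\cos$ and the arithmetic base) with exponentially small error in a controllable parameter; these gadgets are the continuous counterparts of the discrete bookkeeping one needs throughout the proof.

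For the inclusion $\DP(\contClass) \subseteq \FPspace \cap \N^{\N}$, I would proceed by structural induction on $f \in \contClass$. The ten base functions are each computable to precision $2^{-n}$ in polynomial space. Composition is closed in $\FPspace$ under the polynomial-output convention recalled in the introductory remark. The delicate case is the $\robODE$ schema. Given a robust ODE $y'(t)=g(y(t),t)$ with right-hand side $g$ already FPSPACE-computable and polynomial time horizon, the robustness condition of Definition \ref{defRobODE} guarantees that small trajectory perturbations are damped rather than amplified. Consequently an Euler-type integration with step $2^{-q(n)}$ and working precision $2^{-r(n)}$, for suitable polynomials $q,r$, approximates $y$ at the final time within $2^{-n}$. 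Each step uses polynomial space, and because the integration is a sequential pipeline the space can be reused between steps, keeping the whole evaluation in polynomial space.

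For the reverse inclusion $\FPspace \cap \N^{\N} \subseteq \DP(\contClass)$, let $f$ be computed by a Turing machine $M$ in space $p(n)$, and hence in at most $2^{p(n)}$ steps. I would encode each configuration of $M$ as a single real $\encodageconfiguration$ in base $\base$, using only the digits $\symboleun$ and $\symboledeux$, so that integer-encoded configurations form a well-separated discrete subset stable under small perturbations. The one-step transition $\Next$ is piecewise constant on this encoded set and can be approximated uniformly to any desired precision by a smooth function $\Formula$ built from $\tanh$, $\cos$ and the arithmetic base, the precision being tunable by a scaling parameter. The simulation of $M$ is then realised as the solution of a single $\robODE$ on a time interval of polynomial length: each unit of time performs one application of $\Formula$, while an error-correcting sub-dynamic attracts the trajectory back to the nearest integer encoding after each step, so that the robustness condition of Definition \ref{defRobODE} is satisfied globally in spite of the $2^{p(n)}$ underlying simulation steps.

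The hard part is the design and verification of the error-correcting sub-dynamic in the reverse direction. A naive iteration of even a very accurate smooth $\Formula$ would accumulate errors geometrically over exponentially many steps and destroy all useful precision; the whole point of the $\robODE$ framework is precisely to permit an attractor at each integer configuration that re-centres the trajectory after every step. Calibrating this attractor so that it is strong enough to quench the per-step error, weak enough not to interfere with the simulation inside a step, and expressible within the finite basis of $\contClass$ while preserving the robustness required for the forward direction, is the technical heart of the argument. My approach would adapt the constructions of \cite{csl24} and \cite{BGDPRiccardo2022} to a compact domain with the finer error handling flagged in the introduction, building the attractor from $\tanh$ combined with the contractive scalings $x/2$ and $x/3$ present in the base of $\contClass$.
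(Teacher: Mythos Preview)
Your overall structure is right, and your identification of the error-correcting attractor as the technical heart of the $\supseteq$ direction matches the paper's emphasis. But there is a genuine gap in the $\subseteq$ direction, centred on your phrase ``polynomial time horizon''.

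The time horizon is \emph{not} polynomial. Functions in $\contClass$ are evaluated (via $\DP$) at integer arguments, and an integer $t$ given in binary can have value exponential in its length; the $\robODE$ schema must therefore be solvable at times $t$ that are exponential in the input size. This is precisely the scenario the paper's Section~2.3 discusses: over a horizon $T$, Euler's method needs working precision roughly $e^{\Lambda T}\cdot 2^{-n}$ to land within $2^{-n}$ of the solution, so if $T$ is exponential the precision---hence the space---explodes. Your sentence ``the robustness condition \dots guarantees that small trajectory perturbations are damped rather than amplified'' is the right intuition, but Definition~\ref{defRobODE} only asserts numerical stability on the \emph{bounded} interval $[0,\Delta]$ (item~2). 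What carries stability to arbitrarily large $t$ is item~3, which is not a damping statement at all: it is a \emph{bisection} clause, saying that $\tu f(t,\tu x)$ can be recovered from an $\epsilon(n)$-approximation of $\tu f(t/2,\tu x)$, recursively. The paper's proof (Theorem~\ref{main-direction-one}) exploits item~3 directly as a Savitch-style recursion of depth $O(\log t)$, giving polynomial space. Your sequential Euler pipeline does not use item~3, and without it you have no control on error accumulation across the exponentially many steps.

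The same confusion resurfaces in your $\supseteq$ sketch: a $\Pspace$ machine may take $2^{p(n)}$ steps, so ``a time interval of polynomial length'' with ``each unit of time perform[ing] one application of $\Formula$'' is a contradiction---the interval has length $2^{p(n)}$. You acknowledge the $2^{p(n)}$ steps two lines later, so perhaps you meant that the \emph{bit-length} of the endpoint is polynomial; if so, say that. Either way, the reason the construction lands in $\contClass$ is exactly that the per-step attractor makes the resulting ODE satisfy items~2 \emph{and}~3 of Definition~\ref{defRobODE}, so that the Savitch-style solver of Theorem~\ref{main-direction-one} applies; a merely ``robust to accumulated error'' intuition is not the mechanism the class is built around.
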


We also provide a characterisation of functions over the reals computable in polynomial space. Inspired by \cite{BlancBournezMFCS2023vantardise}, this is obtained by adding a limit schema $\MANONlim$ to $\contClass$. If we consider  $\contClasslim = [0, 1,\pi_i^k, +, -, \times, \tanh, \cos, \pi, \frac{x}{2}, \frac{x}{3}; \composition, \robODE, \MANONlim]$ then:

\begin{theorem}[Generic functions over the reals] \label{th:main:twopSpace} 
	{$\contClasslim \cap \R^{\R}  =  \FPspace \cap \R^{\R} $} \\
	More generally: \quad \quad \quad  \quad  \hspace{0.1cm}
	   {$\contClasslim \cap \R^{\N^{d} \times \R^{d'}}  = \FPspace \cap \R^{\N^{d} \times \R^{d}} $.} 
\end{theorem}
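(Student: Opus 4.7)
The plan is to leverage Theorem~\ref{th:mainone} as the discrete backbone, together with the limit schema $\MANONlim$, which by design should play the role of the effective-limit operation used to lift integer computations to real computations in the computable-analysis setting. The argument breaks into two inclusions; both rest on the standard fact that a real function $f \in \FPspace \cap \R^{\N^{d} \times \R^{d'}}$ is, up to encoding, completely determined by a polynomial-space computable sequence of dyadic approximations $f_{m}: \N^{d} \times \Z^{d'} \to \Q$ such that $|f_{m}(\vn,\vx\cdot 2^{-m}) - f(\vn,\vx\cdot 2^{-m})| \le 2^{-m}$ with a polynomial-space modulus of continuity. So the whole matter is to show that $\contClasslim$ captures exactly this dyadic-approximation paradigm.

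For the inclusion $\FPspace \cap \R^{\N^{d} \times \R^{d'}} \subseteq \contClasslim$, I would first reduce $f$ to its dyadic approximation scheme $\tilde f(\vn,\vx,m)$ defined over integers, which sits in $\FPspace \cap \N^{\N}$. By Theorem~\ref{th:mainone} we have $\tilde f \in \DP(\contClass)$, so there is some $F \in \contClass$ whose discrete part equals $\tilde f$. Applying $\MANONlim$ to a suitable assembly of $F$ (after an encoding step that rescales the $\vx$-input from $\R^{d'}$ to dyadic inputs on $\N$, using the functions $\cos$, $\tanh$, $x/2$ available in the base to produce the required bounded smooth interpolations) recovers $f$ as the effective limit of $F$ at precision $2^{-m}$. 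The polynomial-space modulus of continuity of $f$ is exactly what ensures that the hypotheses of $\MANONlim$ are met.

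For the converse inclusion $\contClasslim \cap \R^{\N^{d} \times \R^{d'}} \subseteq \FPspace$, I proceed by structural induction on the construction of $f \in \contClasslim$. The base functions $0,1,\pi_i^k,+,-,\times,\tanh,\cos,\pi,x/2,x/3$ are all in $\FPspace$ over the reals in the computable-analysis sense (classical result). Composition is closed under $\FPspace$ using the polynomial-space modulus of continuity together with the remark in the introduction about composition being well-behaved when outputs stay of polynomial size. The $\robODE$ schema is closed under $\FPspace$ because the proof of Theorem~\ref{th:mainone} already provides a polynomial-space ODE-solver for robust ODEs; the numerical stability assumed in $\robODE$ is what allows its discretization with polynomial precision, and this same stability transfers to the real-valued solution when the input is a real rather than an integer. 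Finally, $\MANONlim$ is closed under $\FPspace$ essentially by definition: a limit of polynomial-space computable approximations with a polynomial modulus is itself polynomial-space computable.

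The hardest step will be the $\robODE$ case in the inductive proof of the direct inclusion: one must show that the polynomial-space ODE-solving procedure from the proof of Theorem~\ref{th:mainone}, which was written for discrete (integer) inputs, still works uniformly when the initial condition and the driving parameters are real numbers known only approximately. This requires propagating the robustness hypothesis through the Euler-type discretization so that errors of order $2^{-m}$ in the inputs and in the intermediate steps do not blow up along the curve. Concretely, one must design a step size $h$ and a working precision $p$, both polynomial in $m$ and in the length of the solution curve, such that the total accumulated error stays below $2^{-m}$; the robustness condition of Definition~\ref{defRobODE} is precisely what guarantees that such $h$ and $p$ exist. Once this is in place, the rest is bookkeeping to combine this with the inductive hypotheses on the subfunctions appearing in the ODE.
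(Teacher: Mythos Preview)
Your structural induction for the inclusion $\contClasslim \subseteq \FPspace$ is the right approach and matches the paper: base functions, composition, $\robODE$ (via Theorem~\ref{main-direction-one}), and $\MANONlim$ each preserve $\FPspace$. One correction: the ODE-solving procedure underlying Theorem~\ref{main-direction-one} is \emph{not} an Euler-type discretization with a global step size $h$; it is the Savitch-style recursive bisection $\Phi(\tu f_0,T)=\Phi(\Phi(\tu f_0,T/2),T/2)$, which is precisely what avoids the exponential precision blowup you would face with any fixed-order step-based method. So your ``hardest step'' paragraph is aimed at the wrong mechanism, though the conclusion (that robustness controls the error propagation) is correct.

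The genuine gap is in the reverse inclusion $\FPspace \subseteq \contClasslim$. You write that one ``rescales the $\vx$-input from $\R^{d'}$ to dyadic inputs on $\N$'' via smooth interpolations, and then applies Theorem~\ref{th:mainone}. But every function in $\contClass$ is continuous, while any map sending a real $x$ to a discrete dyadic approximation is \emph{discontinuous} at the rounding boundaries. Hence there is no single encoding function in $\contClass$ that correctly feeds the Turing-machine simulator for \emph{all} real inputs; your decode--compute--encode pipeline fails on a dense set of $x$. The paper resolves this with an \emph{adaptive barycentre}: it builds two approximate integer-part functions $\sigma_1,\sigma_2$ whose ``bad'' zones are disjoint, together with a continuous selector $\lambda(x)\in[0,1]$, and outputs $\lambda\cdot \Formula_1 + (1-\lambda)\cdot \Formula_2$. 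When $\lambda$ is near $0$ or $1$ only one branch matters; when $\lambda$ is intermediate, $\sigma_1$ and $\sigma_2$ return adjacent integers and the polynomial modulus of continuity of $f$ guarantees the two computed values are $2^{-m}$-close, so any convex combination is still a valid approximation. This barycentre trick is the missing idea in your argument; without it the $\supseteq$ direction does not go through.
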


This article is organised as follows. 
In Section \ref{sectionDynSys}, we recall the concept of dynamical systems and discuss some associated complexity issues. We introduce the concept of robust ODE and prove that a robust ODE can be solved in polynomial time  (Theorem \ref{main-direction-one}). This is obtained, using an original method for solving ODE, optimising space, inspired by Savitch's theorem.  This provides one direction of all the above theorems.  The other direction is the object of the following sections, starting from Section  \ref{sectionPrevious}.  We first recall some previous results on discrete ODEs in  Section \ref{sectionPrevious}.
Using extensions of constructions from \cite{BlancBournezMFCS2023vantardise}, we then prove that we can simulate a Turing machine using robust continuous ODEs in Section \ref{sec:cstm}. This is obtained by simulating some discrete ODEs using continuous ODEs, dealing with error corrections, and using the fact that the functions are robust to a controlled error. The main result of Section \ref{sectionPrevious} is Theorem \ref{th:trendeux:new}. It states we can simulate Turing machines robustly with continuous ODEs when space remains polynomial. 
This theorem leads to the proof of Theorem \ref{th:mainone} in Section  \ref{sec:cstm}. In Section \ref{sectionMain}, we prove Theorem \ref{th:main:twopSpace}. In Section \ref{sec:conclusion}, we conclude and discuss future works.

\paragraph*{Some basic concepts} \label{sectionBasics}


When we say that a function over the real is computable this is always in the sense of computable analysis: see e.g. \cite{Wei00,Ko91,brattka2008tutorial}. A reference book for issues related to complexity theory in computable analysis is \cite{Ko91}.  See also appendix \ref{appendix:ca}.

%
%
%
%

\section{Dynamical systems and associated complexity issues}\label{sectionDynSys}

\subsection{Dynamical systems}

Dynamical Systems are often used to model natural phenomena and in many applied fields. The last decades have seen an impressive use of computers in studying and analysing dynamical systems, with several visible theoretical breakthroughs. A famous, notable example is the discovery of strange attractors in models such as Lorenz attractors through numerical simulations \cite{Lorenz63}, with only 40 years later the mathematical proof of their existence \cite{tucker2002rigorous}.  The mathematical proof was obtained by checking some quantitative invariant holds through computer-certified computations. 

From a mathematical point of view, a  \emph{discrete-time dynamical system} is given by a set $D$, called \emph{domain} and some (possibly partial) function  $\tu u$ from $D$ to $D$. A trajectory the system 
 is a sequence $\tu f(t)$ evolving according to $\tu u$: that is $\tu f(t+1)= \tu u\left(\tu f(t)\right)$ for all $t$. 
A dynamical system can equivalently be described by its flow $\Phi$: by definition, $\Phi(\tu f_{0}, t)$ gives the position of the dynamics at time $t$, for an initial position $\tu f(0)=\tu f_{0}$. It satisfies the flow property 
\begin{equation}\label{flowpropertyd}
\Phi(\tu f_{0}, 0) = \tu f_{0} \quad \Phi(\tu f_0, t+t') = \Phi(\Phi(\tu f_0, t), t')
\end{equation}
for all $t,t'$.  The transition dynamic $\tu u$ can be recovered from the flow function since $\tu u(.)= \Phi(.,1)$. Hence, describing a dynamical system by its dynamic $\tu u$ or by a flow function is equivalent. 
All of this can be parametrised by some parameter $\tu x$: $\tu u$ is also some function of $\tu x$, and $\tu f(t+1)= \tu u\left(\tu f(t),\tu x\right)$ for all $t$, and the flow function is $\Phi_{\tu x}(\tu f_{0},t)$. 
	
Up to that point, we were considering discrete-time systems, but we can also consider continuous-time dynamical systems: a \emph{continuous-time dynamical system}  is given by a
set $D \subseteq \mathbb{R}^d$ and some ODE  of the form ($\tu f'$ denotes the derivative with respect to time variable $t$) 
\begin{equation} \label{mycontoded}
 \tu f'= \tu u( {\tu f}(t))
\end{equation}
on $D$. A trajectory starting from $\tu f_{0}$ is a solution of the associated Initial Value Problem (IVP), with $\tu f(0)=\tu f_{0}$.  If we consider  $\Phi(\tu f_0, t)$ as giving the value of the solution at time $t$, starting from $\tu f_{0}$, it still satisfies the flow property \eqref{flowpropertyd}. Assuming sufficient regularity on $\Phi$ (namely that it is differentiable), $\Phi$ satisfies some ODE, and hence giving a dynamical system is equivalent to giving its flow: $\tu u(.)$ can be recovered by $\tu u'=\Phi'(.,0)$. Here, we can still consider that all of this is parameterised by some $\tu x$. 

Hence, in a very general view, a dynamical system is given by some function $\Phi$ satisfying the flow property. The fact that it is continuous time or discrete time, is related to the nature of its time (i.e. second)  variable, that belongs to $\N$ or $\Z$ in the latter case, and to $\R$ in the former. See \cite{HSD03} for a monography on the theory of dynamical systems from a mathematical point of view.  
 
 In all previous discussions, we considered homogeneous dynamics in the sense that $\tu u$ was only a function of $\tu f(t)$ and not also of $t$: but, for example, for discrete-time, to cover the case $\tu f(t+1)= \tu u\left(t,f(t)\right)$ for all $t$, it is sufficient to consider $\overline{\tu f}(t)=(\tu f(t),t)$, and we come back to the previous settings, as we can write $\overline{\tu f}(t+1)= \overline{\tu u}(\overline{\tu f}(t))$, with $\overline{\tu u}(\tu f,t)=(\tu u(\tu f,t),t+1)$, that is a homogeneous dynamic.

As this is well-known, dynamical systems exhibit a very rich class of possible behaviours. In particular, many dynamical systems are chaotic. We do not intend here to recall how chaoticity can be defined in mathematical terms (see \cite{Dev89a,HSD03}), but this includes at least high sensitivity to initial conditions. From our point of view, we just need to say this leads, in practice, to high unpredictability in the long run.  As observed in \cite{rojas2023algorithmic}, while countless papers exist in the literature about computations in dynamical systems, only a small fraction of them address the problem rigorously; i.e., how far is the sought actual quantity from the computed one? And can a computer perform such computation up to a very small pre-specified error? 
An important example where these questions are of interest is given by the reachability problem: given a (finite)  description of a dynamical  system, a description of its initial state, and the description of some ``unsafe'' states, the question is to tell whether a trajectory can reach an unsafe state.  

In the long run, dynamical systems may exhibit attractors.  Although there is a clear agreement about this intuitive concept, corresponding to the set of points to which most points evolve, it is a hard task to provide a mathematical definition covering all cases, and we refer to \cite{milnor1985concept} for discussions of many possible ways of defining this concept, and their relations. In the general case, the attractors of dynamical systems, even if the system is very simple, can be very rich. We refer to \cite{rojas2023algorithmic} for a characterisation of the hardness of computing attractors from a computable analysis point of view. We will somehow restrict to a very robust one (numerically stable ones) for which the problem is tractable. Somehow, our results state that the uncomputability discussed in \cite{rojas2023algorithmic} is intrinsically due to the non-numerical stability of the considered dynamical systems there.

\begin{example}
The van der Pol equation gives a very famous example of a dynamical system admitting an attractor 
$
y''-\mu\left(1-y^2\right) x' +y=0
$
where $\mu$ is a parameter. 
Introducing $z=y-y^3 / 3-\dot{y} / \mu$, it can be also described by $y'=\mu\left(y-\frac{1}{3} y^3-z\right)$, $z'=\frac{1}{\mu} y$. 
%
It is a classical mathematical exercise to prove that the system has a limit cycle: this is usually done as an application of Liénard's theorem, which is established by studying the flow of the dynamics, using qualitative arguments based on some formal mathematical statements: see, e.g. \cite{HSD03}. 
\end{example}

%
Formally, a  \emph{discrete-time dynamical system} is given by a set $D$, called \emph{domain} and some (possibly partial) function  $\tu u$ from $D$ to $D$. A trajectory the system 
 is a sequence $\tu f(t)$ evolving according to $\tu u$: that is $\tu f(t+1)= \tu u\left(\tu f(t)\right)$ for all $t$. 
 A \emph{continuous-time dynamical system}  is given by a
set $D \subseteq \mathbb{R}^d$ and some ODE of the form 
\begin{equation} \label{mycontode}
 \tu f'= \tu u( {\tu f}(t))
\end{equation}
on $D$. A trajectory starting from $\tu f_{0}$ is a solution of the associated Initial Value Problem (IVP), with $\tu f(0)=\tu f_{0}$. 
A dynamical system can equivalently be described by its flow: $\Phi(\tu f_{0}, t)$ gives the position of the dynamics at time $t$, for an initial position $\tu f(0)=\tu f_{0}$. It satisfies the flow property 
\begin{equation}\label{flowproperty}
\Phi(\tu f_{0}, 0) = \tu f_{0} \quad \Phi(\tu f_0, t+t') = \Phi(\Phi(\tu f_0, t), t').
\end{equation}
The dynamics or the flow can be parametrised by some parameter $\tu x$: $\tu u$ is also some function of $\tu x$  and the flow function is $\Phi_{\tu x}(\tu f_{0},t)$.

%
%
%
%
%
%
%
%
%
%
%

%

In the long run, dynamical systems may exhibit attractors.  We refer to \cite{milnor1985concept} for discussions of many possible ways of defining this concept, and  to \cite{rojas2023algorithmic} for a characterisation of the hardness of computing attractors from a computable analysis point of view. Somehow, our coming results state that the uncomputability discussed in \cite{rojas2023algorithmic} is intrinsically due to the non-numerical stability of the considered dynamical systems there.

\subsection{Some complexity results  on graphs}

We need to discuss the hardness of solving IVP, or equivalently of computing $\Phi(y, t)$. 
For pedagogical reasons, we first discuss the case of a simple setting, namely the case of a (deterministic) directed graph. Indeed, observe that a discrete-time dynamical system $(D,u)$ can also be seen as a particular (deterministic) directed graph $G=(V,\rightarrow)$, where, in the general case, $V$ is not necessarily finite: $G$ corresponds to $V=D$ and $\rightarrow$ to the graph of the function $u$, i.e. $\tu x_{t} \to \tu x_{t+1}$ iff ${\tu x}_{t+1}= \tu u\left({\tu x}_t\right)$. The obtained graph is deterministic because any vertex has an outdegree $1$. Starting from some point $\tu x_{0}$, there is at most one possible path, and consequently, for a given time $T$, we can talk about its position at time $t$, i.e. $\Phi(\tu x_{0},T)$ is $T$th element of this path: (as usual in complexity theory, the length of some integer $x$ is the length of its binary representation, denoted by $\ell(x)$). 

\SHORTERICALP{
	We then come to the space complexity of computing path, i.e. of computing $\Phi(\tu x_{0},T)$, when $V$ is finite (or equivalently when $V=D$ can be abstracted by some finite set). As usual in complexity theory, the length of some integer is the length of its binary representation. 
}

\begin{proposition}[The case of finite graphs] \label{corosuccint}
	Let $s(n) \ge \log(n)$ be space-constructible.
	Assume the vertices of $G=(V,\rightarrow)$ can be encoded in binary using words of length $s(n)$. Assume the relation $\rightarrow$ is decidable using a space polynomial in $s(n)$.
	Then,
	\begin{itemize}
		\item  given the encoding of $\tu u \in V$ and of $\tu v \in V$, we can decide whether there is some path from $\tu u$ to $\tu v$,  in a space polynomial in $s(n)$.
		\item given the encoding of $\tu u \in V$, and integer $T$ in binary, we can compute $\Phi(\tu u,T)$, in a space polynomial in $s(n)$ and the length of $T$.
	\end{itemize}
\end{proposition}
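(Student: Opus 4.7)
The plan is to prove both parts by divide-and-conquer recursions in the spirit of Savitch's theorem, applied to the succinct representation of $G$. Throughout, each vertex is encoded by $s(n)$ bits, so $V$ has at most $2^{s(n)}$ elements, and a single query of the transition relation $\rightarrow$ costs space polynomial in $s(n)$ by hypothesis. Space constructibility of $s(n)$ is used in both parts so that the recursion depth can itself be computed within the space budget.

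For the reachability part, I would introduce a Boolean predicate $\text{REACH}(\tu u, \tu v, k)$ meaning ``there is a directed path from $\tu u$ to $\tu v$ of length at most $2^{k}$'' and evaluate it by
\begin{equation*}
\text{REACH}(\tu u, \tu v, k) \;\equiv\; \text{REACH}(\tu u, \tu v, k-1) \;\lor\; \exists \tu w \in V \,\bigl(\text{REACH}(\tu u, \tu w, k-1) \land \text{REACH}(\tu w, \tu v, k-1)\bigr),
\end{equation*}
with base case $\text{REACH}(\tu u, \tu v, 0)\equiv (\tu u = \tu v \lor \tu u \rightarrow \tu v)$, decidable in the allowed space. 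Because every path in $G$ can be shortened to one of length at most $2^{s(n)}$, the reachability question is answered by $\text{REACH}(\tu u, \tu v, s(n))$. Each of the $s(n)$ recursive levels stores only a candidate intermediate vertex $\tu w$ of $s(n)$ bits together with a small counter, so the total workspace is $O(s(n)^{2})$ on top of the polynomial overhead for the base-case test.

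For the flow computation, the determinism of $G$ (outdegree $1$) permits a cleaner doubling scheme. I would define $\text{STEP}(\tu u, k) = \Phi(\tu u, 2^{k})$ and compute it by the recursion
\begin{equation*}
\text{STEP}(\tu u, k) \;=\; \text{STEP}\bigl(\text{STEP}(\tu u, k-1),\; k-1\bigr),
\end{equation*}
whose base case $\text{STEP}(\tu u, 0)$ is the unique $\tu v$ with $\tu u \rightarrow \tu v$, obtained by enumerating the $2^{s(n)}$ candidate encodings and invoking the $\rightarrow$-decision procedure, still within space polynomial in $s(n)$. Writing the binary expansion $T = \sum_{i : b_{i}=1} 2^{i}$ with $i < \ell(T)$, I then assemble $\Phi(\tu u, T)$ by iterating $\text{STEP}(\cdot, i)$ over the set bits, reusing the workspace between iterations. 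Each nested invocation of $\text{STEP}$ needs only to remember the intermediate vertex returned by its inner call (size $s(n)$) and the current depth (size $O(\log \ell(T))$), yielding total workspace $O(\ell(T) \cdot s(n))$ plus the polynomial-in-$s(n)$ base-case cost.

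The main obstacle is purely accounting: siblings and descendants in both recursions must share stack frames rather than accumulate them, which is exactly the content of Savitch's argument; once the recursion is written in this reusing style, the polynomial-space budget for querying $\rightarrow$ composes cleanly because each base-case call runs inside a fresh frame of the allowed size. A secondary subtlety is that for the flow we rely crucially on the determinism of $G$ to avoid the existential quantifier $\exists \tu w$ appearing in the reachability recursion, which is what keeps the per-level storage minimal and produces a genuine function rather than a predicate.
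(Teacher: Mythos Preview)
Your proof is correct and follows essentially the same Savitch-style divide-and-conquer as the paper; the paper simply phrases both items as membership in $\NSPACE(\mathrm{poly}(s(n),\ell(T)))$ via the standard guess-the-next-vertex argument and then invokes Savitch's theorem as a black box, whereas you unfold the deterministic recursion explicitly. Your $\mathrm{STEP}$ doubling scheme for the second item, which exploits the outdegree-$1$ hypothesis to avoid any existential quantifier, is a little more direct than the paper's uniform reduction to non-deterministic reachability, but the underlying idea is identical.
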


The second item is even a characterisation of the complexity of the problem. Indeed, the converse is true: If,  given the encoding of $\tu u \in V$, and integer $T$ in binary, we can compute $\Phi(\tu u,T)$, in a space polynomial in $s(n)$ and the length of $T$, then as $\rightarrow$ is given by $\Phi(.,1)$, then $\rightarrow$ is decidable using a space polynomial in $s(n)$.

\begin{proof}
	It is well-known that for finite graphs, 	given a directed graph $G=(V,\rightarrow)$ and some vertices $\tu u,\tu v \in V$, determine whether there is some path between $\tu u$ and $\tu v$ in $G$, denoted by $\tu u \stackrel{*}{\rightarrow} \tu v$ is in $\NLOGSPACE$: the rough idea is to guess non-deterministically the intermediate nodes. The formal proof is detailed in \cite{Sip97}.  The same algorithm, working over representations of vertices, when vertices are encoded using words of length $s(n)$ will work in $\NSPACE(s(n))$ (with the addition of the binary encoding of $T$ if for the second item, if it bigger than $s(n)$). 
	We then observe that $\NSPACE(s(n))=\sSPACE(s(n))$ from Savitch's theorem, recalled below. 
\end{proof}

\begin{theorem}[{Savitch's theorem, \cite[Theorem 8.5]{Sip97}}] \label{savitch} 
	For any space-constructible\footnote{As proved in \cite{Sip97}, this hypothesis can be avoided, at the price of a slightly more complicated proof.} function $s: \N \to \N$ with $s(n) \ge \log n$, we have 
	$\NSPACE(s(n)) \subseteq \sSPACE(s^{2}(n))$.
\end{theorem}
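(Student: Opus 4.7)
The plan is to follow the classical configuration-graph argument. Let $M$ be a nondeterministic Turing machine working in space $s(n)$ on inputs of length $n$. A configuration of $M$ (on a fixed input) is fully described by the state, the head positions, and the contents of the work tape, so it can be encoded by a binary string of length $c \cdot s(n)$ for some constant $c$ depending on $M$. Consequently the configuration graph has at most $N = 2^{c \cdot s(n)}$ vertices, and without loss of generality we may assume $M$ has a unique accepting configuration (otherwise clean up the tape before accepting). Deciding acceptance on input $x$ reduces to deciding whether the accepting configuration is reachable from the initial configuration in the configuration graph.

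The central idea is a recursive divide-and-conquer procedure, call it $\textsc{Reach}(C_{1},C_{2},i)$, which returns true iff $C_{2}$ is reachable from $C_{1}$ in at most $2^{i}$ steps. The base case $i=0$ is resolved by directly checking whether $C_{1}=C_{2}$ or $C_{1}\to C_{2}$ is a valid one-step transition of $M$, which needs only $O(s(n))$ space. For the recursive case, we enumerate all candidate midpoint configurations $C_{m}$ of length $c\cdot s(n)$, and for each we recursively test $\textsc{Reach}(C_{1},C_{m},i-1)$ and $\textsc{Reach}(C_{m},C_{2},i-1)$, returning true as soon as both succeed. I would then call $\textsc{Reach}(C_{\mathrm{init}},C_{\mathrm{acc}}, \lceil \log N \rceil)$; since any reachable configuration is reached within $N$ steps, this is sound and complete.

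For the space analysis, each recursive activation stores a constant number of configurations $C_{1},C_{2},C_{m}$ together with the counter $i$, using $O(s(n))$ space per level; crucially, the two recursive calls are executed sequentially so their space can be reused. The recursion depth is $\lceil \log N \rceil = O(s(n))$. Hence the total deterministic space used is $O(s(n))\cdot O(s(n)) = O(s(n)^{2})$. Space-constructibility of $s$ is what allows us to set up the recursion counter and allocate buffers of size $c\cdot s(n)$ deterministically within the target bound.

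The main subtlety, which I would treat carefully in the write-up, is the bookkeeping that justifies "reusing" the space of the two recursive calls: one must store only the current triple $(C_{1},C_{2},C_{m},i)$ and a cursor enumerating $C_{m}$, so that the stack at each level is $O(s(n))$ and the levels compose additively rather than multiplicatively with the fan-out. A minor but necessary point is handling the $i=0$ case and the boundary where $s(n)<\log n$ does not arise by hypothesis, which ensures the counter itself fits in $O(s(n))$ bits. Everything else is routine.
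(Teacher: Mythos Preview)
Your proposal is correct and follows exactly the approach the paper recalls (and attributes to \cite{Sip97}): the recursive divide-and-conquer procedure on path length, called $\textit{CANYIELD}(\tu x,\tu y,t)$ there and $\textsc{Reach}(C_{1},C_{2},i)$ in your write-up, with the same $O(s(n))$ recursion depth and $O(s(n))$ space per level. The paper does not give a self-contained proof but only sketches this same key argument, so there is nothing to distinguish.
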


Recall that the key argument of the proof of Theorem \ref{savitch} is to express the question as a recursive procedure (expressing reachability in less than $2^{t}$ steps, called ${CANYIELD(\tu x}, \tu y ,t)$ in \cite{Sip97}) guaranteeing the required space complexity: we write that relation $CANYIELD(\tu x,\tu y, t)$  is relation $\tu x \rightarrow \tu y$ when $t=1$, and is relation $\exists \tu z$ such that $CANYIELD(\tu x,\tu z, t/2)$ and $CANYIELD(\tu z,\tu y, t/2)$ otherwise. If one prefers,  this can also be understood as ``guessing'' some intermediate node $\tu z$. 

\begin{remark}[Attractor point of view] We presented the above statement in terms of computing the flow $\Phi(\tu x,T)$. This could alternatively be interpreted in terms of attractors. Indeed, when the above hypothesis holds, then dynamics is captured by a graph. In the long run, in particular if $T$ is greater than the number of vertices, any trajectory loops (i.e. reaches an attractor).  
	The above statement could then also be read as the fact that such an attractor is then polynomial space computable.
\end{remark}


\subsection{Solving  efficiently ODEs: what is known}

This idea leads to an original method for solving ODEs. At least, this is original for the numerical analysis literature, as far as we know.

We  review what is known about the complexity of ODE solving. A more complete survey is \cite{GracaZhongHandbook}. First, it is important to distinguish the case where we want to solve the ODE on a bounded  (hence a compact) domain, from the case of the full domain $\R$: in the latter case,  we might ask questions about the evolution of the system on the long run, which is harder. 
Over a compact domain, it is known that there exists some polynomial-time computable function 
$u:   [-1,1] \times [0,1] \rightarrow \R$ such that $f'=u(f,t)$ has no computable solution, even over $[0,\delta]$, for any $\delta>0$: see \cite{Ko83}, extending \cite{PouRic79,Abe71}. The involved ODE has no unique solution. It is known over compact or non-compact domains that if unicity holds, then its solution is computable \cite{collins2008effectivesimpl,collins2009effective,Ruo96}.  However, the complexity can be arbitrarily high \cite{Ko91,Miller70}. 

If we want to get to tractability, then some regularity hypotheses must be assumed. A classical hypothesis is to assume the ODE to be Lipschitz.  

Over a compact domain, it has been observed in several references (see e.g. \cite{Ko91}) that a careful analysis of Euler's method proves that, if  $u:B(0,1)  \times [0,1]  \rightarrow \mathbb{R}^n$, with $B(0,1) \subseteq \mathbb{R}^n$, is a polynomial time computable (right-)Lipschitz function then any solution $f:[0,1] \rightarrow B(0,1)$ of   $f'=u(f,t)$  must be polynomial-space computable: see the discussions around Theorem 3.2 in \cite{GracaZhongHandbook} with the several references.   Kawamura has proved in \cite{kawamura2009lipschitz} that there exists a polynomial-time computable function $u: [-1,1] \times [0,1]  \rightarrow \mathbb{R}$, which satisfies a Lipschitz condition, such that the unique solution $f: [0,1] \rightarrow \mathbb{R}$ takes values in $[-1,1]$ and computing it leads to a  $\Pspace
$-complete problem. Hence, the question of solving ODEs over a compact domain in polynomial time is exactly the question $\Ptime=\Pspace$ \cite{kawamura2009lipschitz},  even for $\mathcal{C}^{\infty}$-functions \cite{kawamura2014computational}.

However, all these results are over compact domains, and dealing with non-compact domains, i.e. in the long run, is harder.  $\Pspace$ membership is not true, without stronger hypotheses. The difficulty comes from the possibility of simulating any Turing machine by some finite-dimensional polynomial ODE \cite{dsg06a} over a non-compact domain.  This leads to many undecidability results for analytic, and even very simple ODEs. For example, it follows that there is an analytic and computable function $u: \mathbb{R} \rightarrow \mathbb{R}$ such that the unique solution of the associated homogeneous ODE is defined on a non-computable maximal interval of existence \cite{dsg06a}.   Futhermore, if we consider $f_1(t)=e^t$, and  $f_{i+1}(t)=e^{y_i(t)-1}$ then $f_d(t)$ is $e^{e^{. e^{e^t}-1}}-1$, while all these functions are solutions of a simple polynomial ODE over $\R^{d}$, namely $f'_{1}(t)=f_{1}(t)$ and $f'_{d}(t)= f_{1} \dots f_{d}(t)$: i.e. a solution can grow faster than a tower of exponentials in the description of the ODE, and hence is necessarily intractable for time or space: see the discussion in \cite[Section 3.2]{GracaZhongHandbook}. A possible way to analyse efficiency is then to analyse the complexity of the solution assuming a bound on the growth of the function (i.e. using parameterised complexity). It was proved in \cite{bournez2012complexity} that one can solve a polynomial ODE in polynomial time assuming a bound on $\tu Y(T)=\max _{0 \leq t \leq T}\|\tu f(t)\|$. 
This result can be extended to non-polynomial ODEs assuming polynomial-time computability of the higher derivatives of $\tu f$ and an appropriate (polynomial) bound on the growth of those derivatives \cite{bournez2012complexity}. 

The result for polynomial ODEs was later improved in \cite{PoulyGraca16}, where it is proved that the time $T$ and parameter $\tu Y$ can be replaced by a single parameter, namely the length of the curve for polynomial ODEs.  Furthermore, this parameter does not need to be given as input to the algorithm.  This is a key argument for one direction of the moto ``time complexity = length'' we mentioned several times. 
This is obtained using a non-classical method, from the point of view of classical numerical analysis: this is not a fixed-order numerical method, but somehow a method whose order is chosen as a function of the inputs. A similar method was independently proposed in \cite{HolgerThiesPhD,kawamura2018parameterized}, considering parametrised complexity for analytic functions.

To get polynomial-time complexity over a non-compact domain, it is mandatory not to use most classical methods from numerical analysis. Indeed, from the general theory of numerical methods, presented for example in \cite{Dem96}, every such numerical method come with a given fixed order $k$, and from the general theory of step-based methods, interval $[0,T]$ is divided into $N$ steps, each of width $h_{n} \le h$, and the error $\theta_{n}$ at step $n$ can be bounded by $\theta_{n+1} \le (1+\Lambda h_n) \theta_n + |\epsilon_n|$: there is a multiplicative error due to the Lipschitz constant $\Lambda$ of $u$, and some numerical additive error $\epsilon_{n}$ due to the used precision. 
Using Discrete Grönwall Lemma (see \cite[page 213]{Dem96}, or lemma \ref{discretegronwall} in appendix)  the final error satisfies between computed approximation $\widetilde{f}_{n}$ and exact solution ${f}_{n}$ at step $n$ satisfies $
\max _{0 \leq n \leq N}\left|\widetilde{f}_{n}-f_{n}\right| \leq 
e^{\Lambda T} \left|\widetilde{f}_{0}-f_{0}\right| + \frac{e^{\Lambda T}-1}{\Lambda}  \left| \max \frac{\varepsilon_{n}}{h} \right|$. If $N=T/h$, we have 
$\max _{0 \leq n \leq N}\left|\widetilde{f}_{n}-f_{n}\right| \leq 
e^{\Lambda T} \left|\widetilde{f}_{0}-f_{0}\right| + \frac{e^{\Lambda T}-1}{\Lambda}  \left| \max \frac{\varepsilon_{n}}{h} \right|$. To go to zero, the last factor is chosen to be of the form $\exp(-\mathcal{O}(T))$.

The same problem happens when discussing space complexity: a non-classical method is required to guarantee polynomial space complexity in the long run (i.e. on the non-compact domain). As far as we know, no such method has yet been proposed, and this is the purpose of the coming subsection. Actually, for space complexity,  in addition to all the problems mentioned, in all the above space or time analyses, the problem is that the complexity is (possibly implicitly) dependent on the Lipschitz constant or the length of the solution. In a system as simple as linear dynamics, the state at time $T$ depends in Lipchitz way from the state at time $0$, and the number or additional bits required to guarantee some precision $2^{-n}$ growth linearly with $T$. But the problem is that in a space polynomial in the input size, $T$ has no reason to remain polynomial (consider, for example, a system simulating a Turing machine, as we will consider soon). Hence, the required precision is possibly exponential in the input size.

The above comment can be interpreted informally as the fact that ``most'' (this could be ``generic'' in the sense of \cite{rojas2023algorithmic}, i.e. (effective) descriptive theory)
dynamical systems are intrinsically unstable, and an error method introduced at some step can make the method unavoidably incorrect in the long run unless we have a means to ``guess'' what will happen. 

 \begin{remark}[Attractor point of view] We presented the above statement in terms of computing the flow $\Phi(\tu f_{0},t)$. But, this could alternatively be interpreted in terms of attractors. The point is that computing the attractors of a given dynamical system is hard in general, as this involves long-run behaviours. This explains all the undecidability results obtained in \cite{rojas2023algorithmic}, even for very simple dynamics. However, as we will see, this is also explained by the fact that the latter paper is discussing numerically unstable systems. 
\end{remark}

\subsection{Solving  efficiently ODEs: a  space efficient method} 
 
This leads to an alternative approach to optimize space complexity: this can be seen as either using a non-deterministic algorithm that ``guesses'' the correct intermediate positions of the dynamics or, from the proof of Savitch's theorem approach, as an original recursive method to solve ODEs. As far as we know, we have never seen such a method discussed in the literature for solving ODEs. 

Concretely: from the flow property, a strategy to compute $\Phi(\tu f_{0},T)$ is either to use a particular numerical method if $T$ is small, says smaller than $\Delta>0$. Otherwise, we know that $\Phi(\tu f_{0},T)=\Phi(\tu z,T/2)$, where $\tu z= \Phi(\tu f_{0},T/2)$. This always holds, so if we can compute both quantities, we will solve the problem. The difficulty is that we cannot precisely compute $\tu z$ in practice, but some numerical approximation $\widetilde{\tu z}$. If the system is numerically stable, we may assume this strategy works. The case when this strategy will not work is if the trajectory starting from $\widetilde{\tu z}$, for the second half of the work from time $T/2$ to $T$,  has a behaviour different from the one starting in $\tu z$: in other words, if there is a high instability somewhere, namely in $\tu z$. 

This leads to the following concept: we write $\tu a=_{n} \tu b$ for $\| \tu a-\tu b \| \le 2^{-n}$ for conciseness.

\begin{definition}[Robust (continuous) ODE]\label{defRobODE}
	
	A function $\tu f : \R \rightarrow \R$ is robustly ODE definable (from initial condition $\tu g$, and dynamic $\tu u$) if
	\begin{enumerate}
		\item it corresponds to the
		solution of the following continuous ODE:
		\begin{equation}\label{contODE}
			\tu f(0,\tu x) 
			= \tu g(\tu x)   \quad  and \quad
			\frac{\partial \tu f(t,\tu x)}{\partial t} 
			=   \tu u(\tu f(t,\tu x), \tu h(t,\tu x),
			t,\tu x),
		\end{equation}

		\item and there is some rational $\Delta >0$, 	and some polynomial $p$ such that  the schema \ref{contODE} is (polynomially) numerically stable on $[0, \Delta]$:
		 for all integer $n$, considering $\epsilon(n)=p(n+\ell(\tu x))$ we can compute $\tu f(t,\tu x)$ by working a precision $\epsilon(n)$:
			if you consider any solution of  
{
$\tilde{\tu x} =_{\epsilon(n)} \tu x \quad and \quad \tilde{\tu h}(t,\tilde{\tu x}) =_{\epsilon(n)} {\tu h}(t,\tilde{\tu x})$,
} and 
$
 \tilde{\tu f}(0,\tilde{\tu x}) =_{\epsilon(n)}
\tu g(\tu x)$  and 
$
 \frac{\partial \tilde{\tu f}(t,\tilde{\tu x})}{\partial t} =_{\epsilon(n)}
   \tu u(\tilde{\tu f}(t,\tilde{\tu x}), \tilde{\tu h}(t,\tilde{\tu x}),
t,\tilde{\tu x}) 
$
then
$\tilde{\tu f}(t,\tilde{\tu x}) =_{\epsilon(n)} \tu f(t,\tu x)$ when  $0 \le t \le \Delta$.

	\item For $t \ge \Delta$,  we can compute $\tu f(t,\tu x)$ by computing some approximation $\widetilde{\tu f(t/2,\tu x)}$ of 
 $\tu f(t/2,\tu x)$ at precision  $\epsilon(n)$, i.e. of $\Phi(\tu g(x),t/2)$, and then
 some approximation of  $\Phi(\widetilde{\tu f(t/2,\tu y)},t/2)$, working at precision  $\epsilon(n)$.	 
		

	\end{enumerate}	
\end{definition}

\begin{theorem} \label{main-direction-one}
	Consider an IVP as in the previous definition. If $\tu g, ~\tu h ~\text{and} ~\tu u$ are computable in polynomial space, then the solution $\tu f$ can be computed in polynomial space. 
\end{theorem}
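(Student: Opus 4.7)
The plan is to exploit the three-part structure of Definition \ref{defRobODE} to design a recursive algorithm in the spirit of Savitch's theorem. Where Savitch decomposed graph reachability using the midpoint of a path, I decompose the flow using the midpoint in time: $\Phi(\tu y_0, t) = \Phi(\Phi(\tu y_0, t/2), t/2)$. The robustness condition --- that perturbations of size $2^{-\epsilon(n)}$ of the initial data and of the dynamics propagate to perturbations of size at most $2^{-\epsilon(n)}$ of the final value --- is exactly what allows this divide-and-conquer strategy to succeed using only polynomial precision.

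First, I introduce a recursive procedure $\textit{ComputeFlow}(\tu y_0, t, n)$ that returns an approximation of $\Phi(\tu y_0, t)$ within $2^{-\epsilon(n)}$, where $\epsilon(n) = p(n + \ell(\tu y_0))$ is the polynomial provided by the robustness definition (enlarged if necessary to absorb constants and the length of $t$). In the base case $t \leq \Delta$, I implement a classical one-step scheme such as Euler's method, with step size and mantissa chosen so that each computed increment lies within $2^{-\epsilon(n)}$ of the true value of $\tu u(\tu f, \tu h, s, \tu x)$; condition (2) of Definition \ref{defRobODE} then guarantees that the endpoint lies within $2^{-\epsilon(n)}$ of $\Phi(\tu y_0, t)$. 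Since $\tu g$, $\tu h$, $\tu u$ are in $\FPspace$, each evaluation uses polynomial workspace, and the number of steps is polynomial in $\epsilon(n)$, hence polynomial in $n + \ell(\tu y_0)$; reusing workspace across steps, the base case runs in polynomial space.

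For the recursive case $t > \Delta$, condition (3) of the definition prescribes exactly the reduction: call $\textit{ComputeFlow}(\tu y_0, t/2, n)$ to obtain $\tilde{\tu y}$, then call $\textit{ComputeFlow}(\tilde{\tu y}, t/2, n)$ and return its result. By induction on the recursion depth $d = O(\log(t/\Delta))$, which is polynomial in the bit length of $t$ (hence in the total input length), the returned value is within $2^{-\epsilon(n)}$ of $\Phi(\tu y_0, t)$, because the robustness spelled out in (3) means a $2^{-\epsilon(n)}$ perturbation of the midpoint state propagates to at most a $2^{-\epsilon(n)}$ perturbation of the endpoint. The top-level computation of $\tu f(t, \tu x)$ is obtained by first evaluating $\tu g(\tu x)$ in polynomial space and then invoking $\textit{ComputeFlow}$ on it.

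The main obstacle, and the reason a naive analysis would fail, is the catastrophic error blow-up in long-time ODE integration: classical Gr\"onwall-type bounds would force precision exponential in $t$ along a single forward pass, whereas only a polynomial bit budget is available. The robustness hypothesis is exactly what prevents this blow-up by axiomatising that errors do not amplify at each halving step. For the space analysis, I appeal to the standard observation that a recursive procedure with depth $d$ and per-call workspace $s$ can be executed in total space $O(d \cdot s)$ by overwriting each returned frame; here $d$ is polynomial in the input length and $s$ is polynomial (all arithmetic is carried at precision $\epsilon(n)$ and all basic functions are in $\FPspace$). Combining these two facts yields a polynomial-space algorithm computing $\tu f$.
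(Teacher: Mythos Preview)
Your proof is correct and follows essentially the same approach as the paper: both exploit the flow property $\Phi(\tu y_0,t)=\Phi(\Phi(\tu y_0,t/2),t/2)$ together with the robustness conditions (2) and (3) of Definition~\ref{defRobODE} to implement a Savitch-style recursive halving, with the base case $t\le\Delta$ handled by a standard numerical scheme at precision $\epsilon(n)$. The paper's own proof is in fact much terser---it merely observes that the bits of $\Phi(\tu y,t)$ are computable in $\NPspace=\Pspace$ by guessing midpoints and then invokes Savitch's construction---so your explicit deterministic recursion and space accounting (depth $O(\log(t/\Delta))$ times polynomial per-frame workspace) spell out what the paper leaves implicit.
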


\begin{proof}
	From definitions and above arguments,  all bits of $\Phi(\tu y,t)$ can be computed non-deterministically with precision $n$ (i.e. at $2^{-n}$) using computations with precision $\epsilon(n)$, hence is in  $\NPspace=\Pspace$. From the argument of the proof of Savitch's theorem, this can also be turned into a deterministic polynomial space recursive algorithm. 
\end{proof}

The above theorem is the key argument to obtain one direction of our main theorems.  We now go in the reverse direction. This requires talking about discrete ODEs, and some previous constructions. 

\section{Discrete ODEs: some previous results and constructions}
\label{sectionPrevious}

\subsection{Preliminary}

%
%

We will use the concept of discrete ODE defined as follows (notice that we will write $\frac{\delta \tu f}{\delta n}$ for discrete derivation, by opposition of the classical $\frac{\partial \tu f}{\partial n}$ to help to distinguish discrete vs continuous ODEs. \SHORTERICALP{: i.e. we use symbol $\delta$ for discrete, by opposition of symbol $\partial$. In some contexts, when this is clear, we will also write $\tu f'$ for $\frac{\delta \tu f}{\delta n}$ or $\frac{\partial \tu f}{\partial n}$).})

\begin{definition}[Discrete derivation, notation $\delta$]
	For $\tu f: \N \to \R^{d}  \rightarrow \R^{d'} $, the \emph{discrete derivation} of $\tu f$ is 
	$\frac{\delta \tu f}{\delta n}(n,\tu x) = \tu f(n+1,\tu x) - \tu f(n,\tu x)$.
\end{definition}


\subsection{Algebraic characterisation with discrete ODEs: state of the art}
 \label{sectionPrevioussoa}
 
In this subsection, we review some of the results already obtained using discrete ODEs. 

\begin{remark}
Notice that we do not need any of these statements directly, even if we will sometimes reuse some of their constructions (and some of their ideas). 
\SHORTERICALP{Consequently, a reader can skip this section. However, of course, it is important to state what was known before the coming constructions. }
\end{remark}

\newcommand\myparagraph[1]{\noindent {\textbf{#1}}:\\ }

\myparagraph{Characterising $\Ptime$ over the integers}
The concept of derivation along the length was introduced in \cite{MFCS2019}. 
\SHORTERICALP{
First, we must define the notion of derivation along the length. We use the notation $\dderivl{\tu f(x,\tu y)}$ which corresponds to the derivation of $\tu f$ along the length function: given some function $\lengt:\N^{p+1} \rightarrow \Z$ and in particular for the case where $\lengt(t,\tu y)=\ell(t)$,
\begin{align}\label{lode}
	\dderivL{\tu f(t,\tu x)}= \dderiv{\tu f(t,\tu x)}{\lengt(t,\tu
		x)} = \tu h(\tu f(t,\tu x),t,\tu x)
\end{align}
is a formal synonym for
$ \tu f(t+1,\tu x)= \tu f(t,\tu x) + (\lengt(t+1,\tu x)-\lengt(t,\tu x)) \cdot
\tu h(\tu f(t,\tu x),t,\tu x).$
}
A characterisation of $\FPtime$ for functions over the integers has then been obtained in \cite{MFCS2019}:
\begin{theorem}[Functions over the integers \cite{MFCS2019}] 
	$\linearderivlength \cap \N^{\N}= \FPtime \cap \N^{\N}$, for $\linearderivlength =$ $ [\mathbf{0},\mathbf{1},\projection{i}{k}, \length{x}, \plus, \minus, \times, \sign{x} \ ; \composition, \linearLODE],$
	with $\projection{i}{k}$ the projection function, 
	and $\sign{x}$ is 0 for $x<0$ and 1 for $x>0$.
\end{theorem}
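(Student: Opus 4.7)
The plan is to prove the two inclusions $\linearderivlength \cap \N^{\N} \subseteq \FPtime \cap \N^{\N}$ (soundness) and $\FPtime \cap \N^{\N} \subseteq \linearderivlength \cap \N^{\N}$ (completeness) separately, by induction on the structure of the algebra for the first, and by simulating Turing machines for the second.

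\medskip

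For soundness, I would first verify that each basic function ($\mathbf{0}, \mathbf{1}, \projection{i}{k}, \length{\cdot}, +, -, \times, \sign{\cdot}$) is polynomial-time computable and has output of polynomial size in its inputs. Composition of two $\FPtime$ functions whose outputs are polynomially bounded is clearly again in $\FPtime$. The key and only delicate point is the closure under the \linearLODE{} schema. Suppose $\tu f$ is defined by $\dderivL{\tu f(t,\tu x)} = \tu h(\tu f(t,\tu x), t, \tu x)$ with linear dependence on $\tu f$ and $\tu h$ already in the class with polynomial-size outputs. The "derivation along length" means the effective update at integer step $t$ is weighted by $\length{t+1}-\length{t}$, which is $0$ except on the (logarithmically sparse) points where the binary length increases. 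Combined with the \emph{linear} dependence in $\tu f$, one gets a bound of the form $\|\tu f(t,\tu x)\| \le C(\tu x)^{\length{t}}$ which remains polynomial in $\ell(t)+\ell(\tu x)$. Thus all intermediate values are of polynomial size and the straightforward iterative evaluation takes polynomial time.

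\medskip

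For completeness, the plan is to simulate an arbitrary polynomial-time Turing machine $M$ within the algebra. I would first show that standard bit-level operations --- extraction and insertion of a bit, Euclidean division and multiplication by powers of $2$, conditional $\cond{b}{u}{v}$, equality and ordering tests, bounded sums and products --- are all definable in $\linearderivlength$, using $\sign{\cdot}$ to encode tests and using the basic arithmetic for everything else. Next, encode a configuration of $M$ (state, head position, tape content) as a single integer $c$ whose length is proportional to the current tape usage, and define the one-step transition function $\Next: c \mapsto c'$ in $\linearderivlength$ using the bit operations. Given that $M$ runs in time $p(\ell(x))$ for some polynomial $p$, the final configuration is $\Next^{p(\ell(x))}(\encodageconfiguration(x))$; by further composition one can ensure the number of iterations equals $\ell(T)$ for a suitably chosen $T$ of polynomial length in $x$, so this iteration is achievable through a single \linearLODE{} schema $\dderivL{\tu f(t,x)} = \Next(\tu f(t,x)) - \tu f(t,x)$, which is linear in $\tu f$ when $\Next$ is itself expressible in the algebra (since $\Next$ does not depend linearly on $\tu f$, one would instead use the standard trick of writing $\tu f(t+1,x)=\Next(\tu f(t,x))$ as a length-ODE with a linear right-hand side obtained by padding and splitting the step across successive length levels). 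Finally, decode the output field of the configuration by a projection-like polynomial expression.

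\medskip

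The main obstacle, and the technical heart of the completeness direction, is to cast the Turing machine simulation as a genuinely \emph{linear} length-ODE: the transition function $\Next$ depends on $\tu f$ in a highly non-linear bit-level way, yet the schema \linearLODE{} only allows a right-hand side affine in $\tu f$. The standard workaround is to carry two "time resolutions" simultaneously --- one updating a configuration only at length-jumps of $t$, the other precomputing $\Next(\tu f)$ using auxiliary variables defined by their own length-ODEs --- so that the final update reads as an affine combination $\tu f(t+1,\tu x) = (1-\chi)\tu f(t,\tu x) + \chi \cdot \tu g(t,\tu x)$ where $\chi \in \{0,1\}$ is a length-jump indicator built from $\length{\cdot}$ and $\sign{\cdot}$. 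Verifying that this decomposition stays inside $\linearderivlength$ while faithfully simulating $M$ for $p(\ell(x))$ steps is the crux of the proof, and the rest amounts to bookkeeping.
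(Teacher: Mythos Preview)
This theorem is not proved in the present paper: it is quoted verbatim as a known result from \cite{MFCS2019} in the state-of-the-art subsection, with no accompanying argument. So there is no ``paper's own proof'' to compare against here.

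That said, your outline matches the overall architecture of the original proof in \cite{MFCS2019}: two inclusions, structural induction for soundness (with the length-derivative's logarithmic sparsity plus linearity giving polynomial growth), and Turing-machine simulation for completeness. The soundness sketch is essentially right.

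The completeness sketch, however, has a real gap exactly where you flag the ``main obstacle''. You correctly observe that $\Next$ is not linear in $\tu f$, but your proposed workaround does not close the gap. Writing the update as $(1-\chi)\tu f + \chi\cdot \tu g$ only shifts the problem: you still need $\tu g(t,\tu x)$ to equal $\Next(\tu f(t,\tu x))$, and if $\tu g$ is produced by ``auxiliary variables defined by their own length-ODEs'', those auxiliary ODEs are subject to the \emph{same} linearity constraint and cannot themselves depend non-linearly on $\tu f$. There is no free lunch here by adding layers of auxiliary variables. What \cite{MFCS2019} actually does is choose the \emph{encoding} of configurations so that one Turing-machine step becomes a genuinely affine map of the encoded configuration: the tape is encoded in a fixed radix so that reading the scanned symbol is a residue computation, moving the head is multiplication or division by the base, and writing is a bounded additive correction; the finite control is handled by a case analysis over the (constant-size) state/symbol pair using $\sign{\cdot}$, yielding coefficients that depend only on $t,\tu x$ and not on $\tu f$. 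That choice of encoding is the missing idea in your sketch; once it is in place, the length-ODE schema applies directly without any ``two time resolutions'' device.
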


\myparagraph{Toward the real numbers: characterising real sequences}
Later, the authors of \cite{BlancBournezMCU22vantardise} introduced
\begin{definition}[Operation $\MANONlim$] Given $\tilde{\tu f}:\R^{d}\times \N \to \R^{d'} \in \manonclass$ such that
	for all $\tu x \in \R^{d}$, $n \in \N$,
	$\|\tilde{\tu f}(\tu x,2^{n}) - \tu f(\tu x) \| \le 2^{-n}$ for some function $\tu f$, then 
	$\MANONlim(\tilde{\tu f})$ is the (uniquely defined) corresponding function  $\tu f: \R^{d} \to \R^{d'}$.
\end{definition}
and then considered the class  $$\manonclasslim = [\mathbf{0},\mathbf{1},\projection{i}{k},   \length{x}, \plus, \minus, \times,\signb{x},\frac{x}{2};{\composition, \linearLODE, \MANONlim}],$$
with $\signb{x}$ a sigmoid valuing 0 when $x<\frac{1}{4}$ and 1 when $x>\frac{3}{4}$. 
They proved this provides a characterisation of functions from $\N$ to $\R$ computable in polynomial time. 

\begin{theorem}[Sequences of reals \cite{BlancBournezMCU22vantardise}]
	$\manonclasslim = \FPtime \cap \R^\N$.
\end{theorem}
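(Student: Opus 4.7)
The plan is to prove the equality $\manonclasslim = \FPtime \cap \R^\N$ by a double inclusion, leveraging the already-established characterisation $\linearderivlength \cap \N^\N = \FPtime \cap \N^\N$ from \cite{MFCS2019}.

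For the inclusion $\manonclasslim \subseteq \FPtime \cap \R^\N$, I would proceed by structural induction on the construction of functions in the algebra. Each basic function ($\mathbf{0}$, $\mathbf{1}$, $\projection{i}{k}$, $\length{x}$, $\plus$, $\minus$, $\times$, $\signb{x}$, $\frac{x}{2}$) is polynomial-time computable in the computable analysis sense. Closure under \composition\ is routine, and closure under \linearLODE\ transfers polynomial-time computability because the linearity in $\tu f$ keeps the magnitude of the solution polynomial in the length of the arguments, so iterating along $\length{t}$ steps costs only polynomial time. The crucial case is $\MANONlim$: if $\tilde{\tu f}$ is polynomial-time computable and $\|\tilde{\tu f}(\tu x, 2^n) - \tu f(\tu x)\| \le 2^{-n}$, then given $\tu x \in \N$ and precision $m$, computing $\tilde{\tu f}(\tu x, 2^m)$ takes time polynomial in $\ell(\tu x) + m$, which produces a $2^{-m}$-approximation of $\tu f(\tu x)$; this is exactly the definition of polynomial-time computability over the reals.

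For the reverse inclusion $\FPtime \cap \R^\N \subseteq \manonclasslim$, let $\tu f : \N \to \R$ be polynomial-time computable. By the definition of polynomial-time computability in computable analysis, there is a polynomial-time Turing machine $M$ that, on input $(n,m) \in \N^{2}$, outputs a dyadic rational $q_{n,m} = a_{n,m} / 2^{b_{n,m}}$ satisfying $|q_{n,m} - \tu f(n)| \le 2^{-m}$. Using the integer characterisation $\linearderivlength \cap \N^\N = \FPtime \cap \N^\N$, the integer-valued maps $(n,m) \mapsto a_{n,m}$ and $(n,m) \mapsto b_{n,m}$ both belong to $\linearderivlength$. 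I then define $\tilde{\tu f}(n,k) = a_{n,\ell(k)-1} \cdot 2^{-b_{n,\ell(k)-1}}$, where the scaling by $2^{-b}$ is realised inside $\manonclass$ by a linear length ODE that iterates the basic function $\frac{x}{2}$ exactly $b$ times. By construction $\tilde{\tu f}(n,2^m) = q_{n,m}$, so $\tu f = \MANONlim(\tilde{\tu f}) \in \manonclasslim$.

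The main obstacle is the careful bridging of the integer- and real-valued sides in the second direction. Concretely, one must verify that, for an integer parameter $b$ produced by a function in $\linearderivlength$, the operation ``divide $a$ by $2^{b}$'' can be realised inside $\manonclass$ from only $\frac{x}{2}$, the arithmetic primitives, and the length-based iteration schema, while respecting the linearity condition required by \linearLODE. This amounts to showing that the composition of an integer-valued discrete computation with dyadic scaling fits a linear length ODE, which is plausible because successive halvings keep the dynamics linear in $\tu f$; verifying the precise bookkeeping of precision and magnitudes along the length iteration is where most of the technical work concentrates.
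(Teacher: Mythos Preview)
The paper does not prove this theorem: it appears in the state-of-the-art subsection (Section~\ref{sectionPrevioussoa}) as a cited result from \cite{BlancBournezMCU22vantardise}, with no proof given here. The authors explicitly remark that these prior results are only reviewed and not directly needed for the present contributions. So there is no ``paper's own proof'' to compare against.

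That said, your sketch is the right shape for how such a result is obtained, and it matches the strategy of the cited work: structural induction for the forward direction, and for the reverse direction simulating a polynomial-time Turing machine producing dyadic approximations, then recovering the real value via $\MANONlim$. Two points deserve more care than you give them. First, you invoke $\linearderivlength$ (which has $\sign{x}$ as a primitive) and then silently work in $\manonclass$ (which has the continuous sigmoid $\signb{x}$ instead); the passage from the integer algebra to the real-valued one is not automatic and is precisely one of the contributions of \cite{BlancBournezMCU22vantardise}. Second, your construction of $\tilde{\tu f}(n,k)$ requires iterating $x \mapsto x/2$ a number of times given by $b_{n,\ell(k)-1}$, i.e.\ by the \emph{value} of a computed quantity rather than by the length of an input; fitting this into the \linearLODE\ schema (which iterates along $\ell(t)$) needs the padding/encoding tricks that turn a value into a length, and that bookkeeping is where the actual work of \cite{BlancBournezMCU22vantardise} lies.
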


\myparagraph{Characterisation of $\Ptime$ and $\Pspace$ for functions over the real with discrete ODEs}
The same authors later succeeded in obtaining a characterisation of functions over the real computable in polynomial time and even space. 

%
\begin{theorem}[$\FPtime$, , Generic functions over the reals \cite{BlancBournezMFCS2023vantardise}] \label{th:main:twop} 
	$\manonclasslighttanhlim \cap \R^{\N^{d} \times \R^{d'}} = \FPtime \cap \R^{\N^{d} \times \R^{d}} $, with 
	$$\manonclasslighttanhlim = [\mathbf{0},\mathbf{1},\projection{i}{k},   \length{x}, \plus, \minus,\tanh,\frac{x}{2},\frac{x}{3};{\composition, \linearLODE, \MANONlim}].$$
\end{theorem}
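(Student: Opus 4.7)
The plan is to establish both inclusions of the claimed equality by leveraging the characterizations recalled in Section \ref{sectionPrevioussoa}, in particular the integer characterization of $\FPtime$ via $\linearderivlength$ and the sequence characterization $\manonclasslim = \FPtime \cap \R^{\N}$, together with the $\MANONlim$ operator that bridges sequence-level and function-level computability. The novelty here is the real-input case, handled by combining a smooth encoding of the real coordinates into the algebra with a known simulation of Turing machines via $\linearLODE$.

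For the inclusion $\manonclasslighttanhlim \cap \R^{\N^{d} \times \R^{d'}} \subseteq \FPtime$, I would proceed by structural induction on the algebra. The basic functions $\mathbf{0}, \mathbf{1}, \projection{i}{k}, \length{x}, \plus, \minus, \tanh, x/2, x/3$ are all polynomial-time computable in the computable analysis sense. Closure under \composition{} is standard provided one tracks a modulus of continuity: computing $(\tu f \circ \tu g)(\tu x)$ to precision $2^{-n}$ amounts to evaluating $\tu g$ at a precision governed by the modulus of $\tu f$. The schema $\linearLODE$ yields, at integer time $t$, a value obtained by $\ell(t)$ updates that are affine in the current state, so the magnitude stays polynomial in $t$; executing these updates in length-indexed fashion costs time polynomial in $\ell(t)$ plus the input size. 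Finally, $\MANONlim$ turns a polynomial-time computable $\tilde{\tu f}$ with the guaranteed rate $\|\tilde{\tu f}(\tu x, 2^{n}) - \tu f(\tu x)\| \le 2^{-n}$ directly into a polynomial-time computable real function in the sense of computable analysis.

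For the reverse inclusion, I take $\tu f \in \FPtime \cap \R^{\N^{d} \times \R^{d'}}$ and must exhibit it as $\MANONlim(\tilde{\tu f})$ for some $\tilde{\tu f} \in \manonclasslighttanhlim$ producing $2^{-n}$-approximations of $\tu f(\tu n, \tu x)$. By definition of polynomial-time computability in computable analysis, there is a polynomial $\phi$ and an oracle TM $M$ that, given $n$ together with dyadic rationals approximating each coordinate of $\tu x$ at precision $2^{-\phi(n+\ell(\tu n))}$, outputs a $2^{-n}$-approximation of $\tu f(\tu n, \tu x)$ in time polynomial in $n+\ell(\tu n)$. The construction then has two pieces. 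First, an approximator inside $\manonclasslighttanhlim$ mapping $(\tu x, n)$ to a suitable dyadic encoding of $\tu x$ at precision $2^{-\phi(n+\ell(\tu n))}$; this is built from $\tanh, x/2, x/3$ and the smooth conditional $\signbtanh{\cdot}{\cdot}$ to perform rounding and base-$2$ expansion while preserving smoothness. Second, the simulator for $M$ on integer data, obtained from the discrete characterization of $\FPtime$ by $\linearderivlength$ and transported into the tanh variant by the reductions developed in \cite{BlancBournezMFCS2023vantardise}. Composing the two pieces yields $\tilde{\tu f}(\tu n, \tu x, n)$ with the required approximation rate, and applying $\MANONlim$ recovers $\tu f$.

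The main obstacle is the Turing machine simulation inside the algebra using only smooth primitives and a \emph{linear} length ODE. Replacing each discrete test by a $\tanh$-based surrogate introduces a small perturbation at every step, so bounding the cumulative error along the $\ell(t)$ updates by the target $2^{-n}$ at query time requires choosing the internal precision as a polynomial in $n$ and $\ell(\tu n)$, and verifying that the linearity of $\linearLODE$ absorbs the smoothing error rather than amplifying it exponentially. A second delicate point is the smooth encoding of a real-valued input into dyadic digits, which must be continuous in $\tu x$ (any discontinuity is forbidden by the computable analysis framework) yet agree with integer rounding up to the target precision; this is where the replacement of $\sign$ by a $\tanh$-based $\signbtanh{\cdot}{\cdot}$ is essential. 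Both obstacles are already addressed in \cite{BlancBournezMFCS2023vantardise} for the discrete-time setting, and the plan is to adapt those lemmas rather than redo the simulation from scratch, and then lift the sequence-level result to functions of real inputs via the approximator described above.
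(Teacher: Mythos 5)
First, note that the paper does not actually prove this statement: it is recalled verbatim from \cite{BlancBournezMFCS2023vantardise} in the state-of-the-art section, and the closest argument in the paper itself is the proof of Theorem \ref{th:main:twopSpace}, which follows the same decode--simulate--encode strategy you describe. Your forward inclusion (structural induction: base functions, \composition, \linearLODE{} with polynomial growth from linearity, and $\MANONlim$ preserving polynomial-time computability thanks to the $2^{-n}$ rate) matches the intended argument and is fine.

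There is, however, a genuine gap in your reverse inclusion, precisely at the step you describe as building ``an approximator inside $\manonclasslighttanhlim$ mapping $(\tu x,n)$ to a suitable dyadic encoding of $\tu x$ \dots built from $\tanh$, $x/2$, $x/3$ and the smooth conditional to perform rounding and base-$2$ expansion while preserving smoothness.'' No such single continuous encoder can be correct for all real inputs: the set of valid radix encodings is discrete, so by connectedness any continuous surrogate must, near the discontinuity points of the ideal rounding, output an interpolation of two different digit strings, which is not a valid encoding of anything; the simulated machine run on such a garbage encoding gives an uncontrolled output, and choosing the internal precision polynomially large does not repair this. Replacing $\sign$ by a $\tanh$-based conditional inside the encoder therefore does not resolve the obstruction. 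The actual fix (used in \cite{BlancBournezMFCS2023vantardise} and reproduced in this paper's proof of Theorem \ref{th:main:twopSpace}) is to run \emph{two} complete decode--compute--encode pipelines based on two shifted roundings $\sigma_1,\sigma_2$ whose domains of validity overlap and cover all reals (Corollary \ref{corobestiary}), and to blend them \emph{at the output level} by the adaptive barycenter $\lambda\cdot\Formula_{1}+(1-\lambda)\cdot\Formula_{2}$, where $\lambda$ vanishes (resp.\ equals $1$) exactly where the corresponding branch may be invalid; correctness then relies on the fact that where both roundings are valid approximations of $\tu x$, both branches output $2^{-n}$-approximations of $\tu f(\tu n,\tu x)$, so any convex combination does too. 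Your proposal identifies the continuity obstacle but locates the smoothing in the wrong place (inside the encoding rather than as an output-level barycenter over two branches), so as written the construction of $\tilde{\tu f}$ would fail on inputs near the rounding boundaries.
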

%
%
Consider the following schema:
\begin{definition}[Robust Discrete ODE \cite{BlancBournezMFCS2023vantardise}]\label{def:roblinear lengt ODE} \label{schema:space}
	A bounded function $\tu f$ is robustly ODE definable if:
	\begin{enumerate}
		\item it corresponds to the
		solution of the following discrete ODE:
		\begin{equation} \label{dynamique}
			\tu f(0,\tu x) 
			= \tu g(\tu x)   \quad  and \quad
			\frac{\delta \tu f(t,\tu x)}{\delta  t} 
			=   \tu u(\tu f(t,\tu x), \tu h(t,\tu x),
			t,\tu x),
		\end{equation}
		\item where  the schema \eqref{dynamique} is (polynomially) numerically stable: there exists some polynomial $p$ such that, for all integer $n$, writing $\epsilon(n)=p(n+\ell(\tu y))$, if you consider any solution of  
{
$\tilde{\tu y} =_{\epsilon(n)} \tu y \quad and \quad \tilde{\tu h}(x,\tilde{\tu y}) =_{\epsilon(n)} {\tu h}(x,\tilde{\tu y})$,
} and 
$
 \tilde{\tu f}(0,\tilde{\tu y}) =_{\epsilon(n)}
\tu g(\tu y)$  and 
$
 \frac{\partial \tilde{\tu f}(x,\tilde{\tu y})}{\partial x} =_{\epsilon(n)}
   \tu u(\tilde{\tu f}(x,\tilde{\tu y}), \tilde{\tu h}(x,\tilde{\tu y}),
x,\tilde{\tu y}) 
$
then
$\tilde{\tu f}(x,\tilde{\tu y}) =_{\epsilon(n)} \tu f(x,\tu y)$. 

	\end{enumerate}
\end{definition}

A robust discrete ODE is said to be  \textit{linear} if $\tu u$ is essentially linear in $\tu f$ and $\tu h$.

Consider 
$$\spaceclasstanhlim = [\mathbf{0},\mathbf{1},\projection{i}{k},   \length{x}, \plus, \minus,\tanh,\frac{x}{2},\frac{x}{3};{composition, robust~linear~ODE, \MANONlim}].$$

\begin{theorem}[$\FPspace$, Generic functions over the reals \cite{BlancBournezMFCS2023vantardise}] 
	{$\spaceclasstanhlim \cap \R^{\R}  =  \FPspace \cap \R^{\R} $} \\
	More generally: \quad \quad \quad  \quad  \hspace{0.1cm}
	{$\spaceclasstanhlim \cap \R^{\N^{d} \times \R^{d'}}  = \FPspace \cap \R^{\N^{d} \times \R^{d}} $.} 
\end{theorem}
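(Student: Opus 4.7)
My plan is to prove the two inclusions separately, mirroring the structure later used for the continuous-time analog (Theorem \ref{th:main:twopSpace}) but working with the discrete derivative operator $\delta$ in place of $\partial$.

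For the easy direction, $\spaceclasstanhlim \cap \R^{\N^{d} \times \R^{d'}} \subseteq \FPspace$, I would show inductively that every function in the algebra is polynomial-space computable. The base functions $\mathbf{0}, \mathbf{1}, \projection{i}{k}, \length{x}, \plus, \minus, \tanh, x/2, x/3$ are all clearly in $\FPspace$, and composition preserves $\FPspace$ thanks to the polynomial-output convention emphasised in the opening remark. The key case is the robust linear discrete ODE schema: the argument is the discrete analog of Theorem \ref{main-direction-one}. Robustness in Definition \ref{def:roblinear lengt ODE} guarantees that running the iteration while carrying values at precision $\epsilon(n) = p(n+\ell(\tu y))$ is enough to recover $\tu f(t,\tu y)$ at precision $2^{-n}$. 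Since $t$ may be exponential in the input length, one invokes the flow property and a Savitch-style divide-and-conquer (equivalently, guessing intermediate approximations non-deterministically and appealing to $\NPspace = \Pspace$) so the whole evaluation fits in polynomial space. Finally, $\MANONlim$ preserves $\FPspace$ by definition: computing the limit at precision $2^{-n}$ amounts to one evaluation of the internal sequence at index $2^n$, with polynomial overhead.

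For the hard direction, $\FPspace \cap \R^{\N^{d} \times \R^{d'}} \subseteq \spaceclasstanhlim$, the strategy is to simulate a polynomial-space Turing machine by a robust linear discrete ODE sitting inside the algebra. First I would fix a dyadic encoding of a Turing machine configuration (tape contents, state, head position) as a real number in a suitable integer base, so that reading, writing and moving the head correspond to bounded-precision arithmetic. Next I would build, inside $\linearderivlength^\circ$, the smooth primitives $\sigtanh$, $\Ttanh$ and $\sendtanh$ that mimic sign, if-then-else and case-analysis using $\tanh$ as the sole non-linearity: a sharpened $\tanh(k\cdot x)$ approaches a step function with error exponentially small in $k$. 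With these primitives the one-step transition of the machine becomes an expression that is essentially-linear in the current encoded configuration, so iterating for $t$ steps is realised by a linear discrete ODE matching the form of Definition \ref{def:roblinear lengt ODE}. Because the real input $\tu x$ is only available through dyadic approximations, I would feed a sufficiently precise rational approximation into the simulation and apply the outermost $\MANONlim$ to extract $\tu f(\tu x)$.

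The main obstacle, as I see it, is establishing the robustness clause (item 2 of Definition \ref{def:roblinear lengt ODE}). A naive Turing machine simulation accumulates both a multiplicative (Lipschitz) and an additive per-step error, and over the exponentially many steps allowed by $\FPspace$ this would blow up far beyond $2^{-\epsilon(n)}$. The fix is a re-synchronisation mechanism baked into every transition: after each simulated step, every coordinate is smoothly snapped back to a dyadic value through a $\tanh$-based rounding whose error is uniformly $2^{-\epsilon(n)}$ on the relevant compact region of phase space. The delicate bookkeeping is to balance the sharpness of those $\tanh$ approximations (which controls how strong the rounding is) against the essentially-linear-in-$\tu f,\tu h$ constraint on the ODE (which limits how nonlinearly $\tanh$ may appear). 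One must finally check that the polynomial-output-size convention meshes cleanly with $\MANONlim$: to output $\tu f(\tu x)$ at precision $2^{-n}$ one only needs $\mathrm{poly}(n)$ bits of the encoded input, which is exactly how a polynomial-space oracle Turing machine queries a real oracle.
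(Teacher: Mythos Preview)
This theorem is not proved in the present paper: it appears in the state-of-the-art subsection and is quoted from \cite{BlancBournezMFCS2023vantardise}, with no proof environment attached. So there is no ``paper's own proof'' to compare against here; what you can compare to is the proof of the continuous analog (Theorem~\ref{th:main:twopSpace}) and the constructions recalled from \cite{BlancBournezMFCS2023vantardise}. With that caveat, your overall plan is the right one and matches the strategy used there: structural induction for $\subseteq$, with the robust discrete ODE case handled by a Savitch/flow-property argument and $\MANONlim$ trivially preserving $\FPspace$; and for $\supseteq$, a $\tanh$-based one-step transition $\overline{\Next}$ (Lemma~\ref{robrob}) iterated via a robust linear discrete ODE, with robustness ensured by per-step rounding back onto the discrete configuration grid.

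There is one genuine gap in your hard-direction sketch. You write that you would ``feed a sufficiently precise rational approximation'' of the real input into the simulation and then apply $\MANONlim$. This does not work as stated: the map $\tu x \mapsto (\text{dyadic encoding of an approximation of } \tu x)$ is discontinuous at every dyadic boundary, so the composite ``decode, run the TM, encode'' is not a well-defined continuous function of $\tu x$ and cannot lie in the algebra. The paper flags exactly this obstruction in the proof of Theorem~\ref{th:main:twopSpace} (``the strategy of decoding, working with the Turing machine, and encoding is not guaranteed to work for all inputs'') and resolves it with the adaptive barycenter technique: one builds two rounding schemes $\sigma_1,\sigma_2$ whose failure zones are disjoint, together with a selector $\lambda \in [0,1]$ (Corollary~\ref{corobestiary}), and outputs the convex combination $\lambda\cdot\textit{Formula}_1 + (1-\lambda)\cdot\textit{Formula}_2$. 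This is the missing idea in your plan; without it the $\supseteq$ direction for real-valued arguments does not go through.
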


\SHORTERICALP{This ends our state of the art. }
Notice that previous classes mix functions with integer and real arguments. Furthermore, they all involve some various types of discrete ODEs. We need to avoid all these issues, as we consider only continuous ODEs.

\subsection{Simulating a discrete ODE using a continuous ODE}
\label{ideal:branicky} 

We first prove that it is possible to simulate a discrete ODE with a continuous ODE. The underlying idea can be attributed to \cite{Bra95}, and has been improved in many ways by several authors. We present here the basic ideas, reformulated in our context. A more precise analysis  will come (Proposition \ref{the:manon:trick}).

\begin{definition}["Ideal iteration trick", \cite{Bra95}] \label{defBranicky}
	Consider the following initial value problem for a discrete ODE, given by functions $\tu g$ and $\tu u$:
	\begin{equation} \label{dode}
		\left\{
		\begin{aligned}
			\tu f(0,\tu x) &= \tu g(\tu x)\\
			\frac{\delta \tu f}{\delta t}(t,\tu x) &= \tu u (\tu f(t,\tu x),t,\tu x) \\
		\end{aligned}
		\right.
	\end{equation}	
	
	Then, let $\tu G(\tu v, t, \tu x) = \tu u(\tu v,t, \tu x) + \tu v$, 
	and consider the (continuous) IVP: 
	\begin{equation} \label{code}
		\left\{
		\begin{aligned}
			&\tu y_1(0,\tu x) = \tu y_2(0,\tu x) = \tu g(\tu x)\\
			&\tu y_1' = c (\tu G(r(\tu y_2),r(t),\tu x) - \tu y_1 )^3 \theta(\sin(2\pi t)) \\
			&\tu y_2' = c (r(\tu y_1) - \tu y_2 )^3 \theta(- \sin(2\pi t))\\
		\end{aligned}
		\right.
	\end{equation}
	where $c$ a constant, $\theta(x)=0$ if $x\leq 0$ and $\theta(x) >0$ if $x>0$.  We abusively write $r(\tu y)$ for the application of function  $r: \R \to \R$ componentwise on vector $\tu y$.  Here, $r$ is a rounding function: we mean, by construction, $G$ preserves the integers, and $r$ is a function that maps a real value close to some integer to this integer: assume, say,  that for $z \in [n-\frac14,n+\frac12]$, $r(z)=n$, for any integer $n \in \Z$.
\end{definition}

Then, the solution of continuous ODE \eqref{code} simulates in a continuous way the discrete ODE \eqref{dode}: Indeed,   $\tu y_1$ 
corresponds to the actual computation of the iterates  of $\tu G$ (and hence computes the successive values of $\tu f$)  and $\tu y_2$ acts as a ``memory" equation. Let us detail how it works.  
We denote by $t =_\epsilon z$ the fact that $\left| t-z \right| \leq \epsilon$.

\begin{remark}
	We describe here an ``ideal" computation, as $\theta(x)$ is exacly $0$ when
	$x\leq 0 $, and $r(z)$ is exactly some integer on suitable domains. 
	 Later in the paper, we will deal with a not-so-ideal $\theta$ and $r$. 
\end{remark}

Initially, $\tu f(0,\tu x) = \tu y_{1}(0,\tu x) = \tu y_{2}(0,\tu x) = \tu g(\tu x)$.
For $t \in [0,1/2]$, we have $\theta(- \sin(2\pi t))=0$, and hence 
$\tu y_2' = 0$, so $\tu y_2$ is fixed and kept at value $\tu g(\tu x)$ for $t \in[0,\frac{1}{2}]$. 
Consequently, for $t \in [0,1/2]$, $r(\tu y_{2})$ is also fixed and kept at value $\tu g(\tu x)$, and $r(t)$ is also fixed and kept at value $0$. Consequently, on this interval, 
if we write $C(t) = c\theta(\sin(2\pi t))$, then the dynamics of $\tu y_1$ is given by 
\begin{equation} \label{eq:de:branicky}
\tu y_1' = C (t) (\tu G(\tu g(\tu x),0,\tu x) - \tu y_1 )^3
\end{equation}

\begin{lemma}[Analysis of ODE \eqref{eq:de:branicky}] \label{bran:funda}
The solution $\tu y_{1}(t,\tu x)$ of ODE \eqref{eq:de:branicky} is converging to $G(\tu g(\tu x),0,\tu x)$ for any initial condition. Furthermore, for any initial condition $\tu y_{1}(0,\tu x) \neq G(\tu g(\tu x),0,\tu x)$, we have $\left\| \tu y_1(\frac{1}{2}, \tu x) - \tu G(\tu g(x),0,\tu x) \right\| \leq \frac{\sqrt{2}}{2\sqrt{\int_{0}^{\frac{1}{2}} C(z)dz}}$.
In particular, 
for any $m \in \N$, we can select constant $c$ such that for any initial condition $\tu y_{1}(0,\tu x)$, 
$$\left\| \tu y_1(\frac{1}{2}, \tu x) - \tu G(\tu g(x),0,\tu x) \right\| \le 2^{-m}.$$
\end{lemma}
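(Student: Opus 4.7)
The plan is to reduce the analysis to a componentwise scalar study of the error curve $\tu e(t) = \tu G(\tu g(\tu x),0,\tu x) - \tu y_1(t,\tu x)$. Since $\tu G(\tu g(\tu x),0,\tu x)$ is constant in $t$, we get $\tu e'(t) = -\tu y_1'(t) = -C(t)\,\tu e(t)^3$ (cube taken componentwise). So each scalar component $e_i$ satisfies the same decoupled, separable ODE $e_i' = -C(t)\, e_i^3$, and I can analyse one coordinate at a time. Note that $|e_i|$ is nonincreasing because $e_i$ and $e_i^3$ have the same sign and $C(t)\ge 0$, which immediately gives monotone contraction of the error on $[0,\tfrac{1}{2}]$.

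Next, I would integrate explicitly. If $e_i(0)\neq 0$, separation of variables gives
\[
\frac{1}{e_i(t)^2} \;=\; \frac{1}{e_i(0)^2} \,+\, 2\int_0^t C(s)\,ds,
\]
(and the case $e_i(0)=0$ is trivial: $e_i\equiv 0$). Dropping the positive $1/e_i(0)^2$ term yields the key \emph{initial-condition-free} bound
\[
|e_i(t)| \;\le\; \frac{1}{\sqrt{2\int_0^t C(s)\,ds}},
\]
valid for every $t$ with $\int_0^t C(s)\,ds>0$. This already establishes convergence to $\tu G(\tu g(\tu x),0,\tu x)$: on $(0,\tfrac12)$ one has $\sin(2\pi t)>0$, so $\theta(\sin(2\pi t))>0$ and the denominator grows. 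Evaluating at $t=\tfrac12$ and rewriting $1/\sqrt{2a}=\sqrt{2}/(2\sqrt{a})$ gives exactly
\[
\bigl\|\tu y_1(\tfrac12,\tu x) - \tu G(\tu g(\tu x),0,\tu x)\bigr\| \;\le\; \frac{\sqrt{2}}{2\sqrt{\int_0^{1/2} C(z)\,dz}},
\]
up to a dimensional constant absorbed into $\|\cdot\|$.

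Finally, for the quantitative statement, I factor out $c$: writing $I = \int_0^{1/2}\theta(\sin(2\pi s))\,ds$, which is a fixed positive constant independent of $c$ and of the initial condition, the bound becomes $\sqrt{2}/(2\sqrt{cI})$. Choosing $c \ge 2^{2m+1}/I$ forces this quantity to be at most $2^{-m}$, which proves the last part of the lemma.

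The only real subtlety — and the part worth highlighting — is that a linear ODE $e' = -C(t)\,e$ would give only exponential decay, and the resulting bound would still depend on $|e_i(0)|$. The cubic nonlinearity is precisely what turns the explicit solution into one whose large-$t$ behaviour is dominated by $\int C$ rather than by the initial error, yielding a convergence rate that is uniform in the starting point. This is the mechanism that lets the ``ideal iteration trick'' work robustly, and it is the reason the constant $c$ can be chosen in advance to enforce any prescribed precision $2^{-m}$.
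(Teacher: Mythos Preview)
Your proof is correct and follows essentially the same route as the paper: separate variables in the scalar ODE $e' = -C(t)e^3$, integrate to obtain $\frac{1}{2e(t)^2} - \frac{1}{2e(0)^2} = \int_0^t C$, drop the initial-condition term, and read off the uniform bound. Your write-up is in fact more explicit than the paper's (componentwise analysis, the monotonicity remark, and the concrete choice $c \ge 2^{2m+1}/I$ for the final claim), but the underlying argument is identical.
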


\begin{proof}[Proof of Lemma \ref{bran:funda}, Adapted from \cite{Bra95}]
	If initially, or at any instant  $\tu y_{1}(0,\tu x)=G(\tu g(x),0,\tu x)$ then the result holds, as $\tu y'_{1}=0$, and $\tu y_{1}$ remains constant. Otherwise, we have 
	$$\frac{\tu y_{1}'}{\tu G(\tu g(\tu x),0,\tu x) - \tu y_1 )^3} = C (t).$$
	Integrating this equality between $0$ and $t$, we obtain  
	$$\frac{\tu 1}{2(\tu G(\tu g(\tu x),0,\tu x) - \tu y_1(t) )^2} -  \frac{\tu 1}{2(\tu G(\tu g(\tu x),0,\tu x) - \tu y_1(0) )^2}= \int_{0}^{t} C(z) dz,$$ hence $$\frac{\tu 1}{2(\tu G(\tu g(\tu x),0,\tu x) - \tu y_1(t) )^2}  \ge \int_{0}^{t} C(z) dz.$$ This yields the property.
\end{proof}

Consequently, $\tu y_1(t,\tu x)$ will approach $\tu G(\tu g(\tu x),0,\tu x)=\tu f(1,\tu x)$ on this interval.
 Thus, $\tu y_1(\frac{1}{2},\tu x) =_\epsilon \tu f(1,\tu x)$ and $\tu y_2(\frac{1}{2},\tu x) = \tu g(\tu x)$, for some $\epsilon > 0$, that we can consider less than $\frac14=2^{-2}$, by selecting a big enough constant $c$ (just taking $m=2$ above). 
At $t=\frac{1}{2}$, $\tu y_1$ will hence have simulated one step of discrete ODE \eqref{dode}. 

Now, for $t \in [\frac{1}{2}, 1]$ the roles of $\tu y_1$ and $\tu y_2$ are exchanged : $\tu y_1'(t, \tu x) = 0$, so $\tu y_1$ is kept fixed, $\tu y_2$ approaches $r(\tu y_{1})=\tu f(1, \tu x)$, thus 
 $\tu y_1(1,\tu x) =_\epsilon \tu y_2(1, x) =_\epsilon \tu f(1, \tu x)$.

By induction, from the same reasoning, we obtain that, for all $n\in\N$, $\tu y_1(n,\tu x) =_\epsilon \tu y_2(n,\tu x) =_\epsilon \tu f(n,\tu x)$, and actually, we also have $\tu y_1(t+\frac12,\tu x) =_\epsilon \tu y_2(t,\tu x) =_\epsilon \tu f(n,\tu x)$ for all $t \in [n,n+\frac12]$, for any integer $n$.



To implement such an ODE, we have to fix a function $\theta(x)$ with the above property. Taking 
$\relu(x)=\max(0, x)$ would satisfy it, but it is not a derivable function, and hence would not lead to a (classical) ODE. We could then take $\theta(x)=0$ for $x \le 0$, and $\exp(-1/x)$ for $x>0$. The point is that such a function is not real analytic. The base functions we consider in our class $\contClass$  are all real analytic, and real analytic functions are preserved by composition, so we cannot get such a function by compositions from our base functions. Futhermore, it is known that a real analytic function that is constant on some interval (we assumed it is $0$ for $x\le 0$!)  is constant. Hence the above-considered function $\theta(x)$ cannot be real analytic.  So, implementing this trick cannot be done directly using our base functions, using only compositions. 

In Proposition \ref{the:manon:trick}, we will do a similar construction, but dealing with errors and not exact function $\theta(z)$ and $r(x)$. 
Furthermore, here the purpose of function $r$ was to correct errors around integers, i.e. around $\N$: this will be possibly around other $\N\delta$ for some $\delta>0$.

\subsection{Encoding of Turing machines configurations}

Our proofs rely on some constructions from \cite{BlancBournezMFCS2023vantardise}. Concretely, we need to simulate the execution of a Turing machine (TM) $\mathcal{M}$ by some dynamical system over the reals. This requires to encode the configurations of a Turing machine into some real numbers. We recall some of the definitions and constructions from \cite{BlancBournezMFCS2023vantardise}.  

Consider a Turing machine defined by $\mathcal{M} = (\Sigma, Q, I, F, \delta)$, with $\Sigma$ the working alphabet, $Q$ the set of states, $I,F \subseteq Q$ respectively the sets of initial and final states, $\delta : Q \times \Sigma \rightarrow Q\times\Sigma\times\{\leftarrow, \rightarrow\}$ the transition function.
For some practical reasons, similar to the ones in \cite{BlancBournezMFCS2023vantardise}, we assume that the working alphabet is made of the symbols $1$ and $3$, and that the blank symbol is symbol $0$.

We explicit the encoding we will use. 
We assume $Q=\{0,1,\dots,|Q|-1\}$.  Let 
$$ \dots  l_{-k} l_{-k+1} \dots l_{-1} l_{0} r_0 r_1 \dots r_n .\dots$$
denote the content of the tape of the Turing machine $M$. In this representation, the head is in front of symbol $r_{0}$, and $l_i, r_{i} \in  \{0,1,3\}$ for all $i$.  Furthermore, we assume that there are no non-blank symbols between two blank symbols, i.e. that blank symbols, i.e. symbol $0$, can only be eventually on the right, or eventually on the left. 
Such a configuration $C$ can be denoted by $C=(q,l,r)$,
where $l,r \in \Sigma^{\omega}$ are words over alphabet
$\Sigma=\{0, 1,3\}$ and $q \in Q$ denotes the internal state of $M$.

Now, write: $\encodagemot: \Sigma^{\omega} \to \R$ for the function that
maps a word $w=w_{0} w_{1} w_{2} \dots$ to the dyadic  (hence real) number
$\encodagemot(w)= \sum_{n \geq 0} w_n \base^{-(n+1)}$.

The idea is that configuration $C$ can also be encoded by some element $\bar C=(q, \bar l,\bar r) \in \N \times \R^{2}$, by considering $\bar r= \encodagemot(r)$ and $\bar l=\encodagemot(l)$. 
In other words,  we encode the configuration of a bi-infinite tape Turing machine $M$ by real numbers using their radix \base{}  encoding, but using only digits $1$,$3$.
Notice that this lives in $Q \times [0,1]^{2}$. Denoting the image of $\encodagemot: \Sigma^{\omega} \to \R$ by $\Image$, this even lives in $Q \times \Image^{2}$. 

In other words, we consider the following encodings:
$\gamma_{config}(C) = (q, \bar{l}, \bar{r})$
with
$\bar{l} = l_0 4^{-1} + l_{-1}4^{-2} + \dots + l_{-k} 4^{-(k+1)} + \dots $ and
$ \bar{r} = r_0 4^{-1} + r_{1}4^{-2} + \dots + l_{n} 4^{-(n+1)} + \dots$.


\subsection{Revisiting some previous constructions}

We denote by $\contClasslight$ the algebra $[0, 1,\pi_i^k, +, -, \times, \tanh, \cos, \pi, \frac{x}{2}, \frac{x}{3}; \composition] $.
This is close to the class $$\manonclasslighttanh = [\mathbf{0},\mathbf{1},\projection{i}{k},   \length{x}, \plus, \minus,\tanh,\frac{x}{2},\frac{x}{3};{composition, linear~length~ODE}],$$ considered in \cite{BlancBournezMFCS2023Journal,BlancBournezMFCS2023vantardise}, but without the function $\length{x}$, and wihtout the possiblity of defining functions using linear~length~ODE (and with multiplication added). 

We will reuse some of the construction from \cite{BlancBournezMFCS2023vantardise} (some corrections and more details can be found in \cite{BlancBournezMFCS2023Journal}) but avoid systematically any use of linear length ODE and the length function $\ell(x)$. Furthermore, the class considered in \cite{BlancBournezMFCS2023vantardise} is mixing functions from the integers to the reals, and from the reals to the reals, and we need to keep only functions over the reals.
\SHORTERICALP{, so we need to avoid completely any discrete function that can be considered there, in particular any computation of $2^{m}$ from some $m$.}

The following was stated in  \cite[Lemma 19]{BlancBournezMFCS2023vantardise}.
\begin{lemma}\label{lemmeRelu}
	We denote by $Y(x,2^{m+2})$ the 
	function  $\frac{1+\tanh(2^{m+2} x)}{2} $. For all integer $m$, for all $x\in \R$,
	$| \relu(x) - xY(x, 2^{m+2})| \leq 2^{-m}$, where $\relu(x)=\max(0, x)$. 
\end{lemma}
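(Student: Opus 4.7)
The approach is direct case analysis on the sign of $x$, reducing both cases to the same one-variable bound on a sigmoid tail.

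First, I would rewrite $Y(x,2^{m+2}) = \tfrac{1+\tanh(2^{m+2}x)}{2}$ and observe that in both cases the error $\relu(x) - x Y(x,2^{m+2})$ equals $x \cdot \bigl(\mathbf{1}_{x \ge 0} - \tfrac{1+\tanh(2^{m+2}x)}{2}\bigr)$. For $x \ge 0$ this simplifies to $\tfrac{x(1-\tanh(2^{m+2}x))}{2}$, and for $x < 0$ to $-\tfrac{x(1+\tanh(2^{m+2}x))}{2} = \tfrac{|x|(1-\tanh(2^{m+2}|x|))}{2}$ using the oddness of $\tanh$. So both cases reduce to bounding the same quantity $\tfrac{|x|(1-\tanh(2^{m+2}|x|))}{2}$ by $2^{-m}$.

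Next, I would change variables, setting $u = 2^{m+2}|x| \ge 0$, so that $|x| = u/2^{m+2}$ and the quantity to bound becomes $\tfrac{u(1-\tanh u)}{2^{m+3}}$. Hence it suffices to prove the $m$-independent estimate $u(1-\tanh u) \le 8$ for every $u \ge 0$ (we actually have far more slack than needed).

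For this last estimate I would use the identity $1-\tanh u = \tfrac{2e^{-u}}{e^u + e^{-u}} \le 2e^{-2u}$ valid for $u \ge 0$, giving $u(1-\tanh u) \le 2u e^{-2u}$. A routine derivative calculation shows that $2u e^{-2u}$ is maximised at $u = 1/2$, with maximum value $1/e < 1$. Plugging this back yields $\tfrac{|x|(1-\tanh(2^{m+2}|x|))}{2} \le \tfrac{1}{e \cdot 2^{m+3}} \le 2^{-m}$, which is even sharper than what is claimed. There is no real obstacle here: the $+2$ in the exponent $2^{m+2}$ was engineered precisely to absorb the constant coming from the peak of $u \mapsto u(1-\tanh u)$, so the only thing that requires care is consistently handling the sign in the reduction of the two cases to a single bound.
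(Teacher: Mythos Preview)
Your argument is correct. The reduction of both sign cases to the single expression $\tfrac{|x|\bigl(1-\tanh(2^{m+2}|x|)\bigr)}{2}$ via the oddness of $\tanh$ is clean, the substitution $u=2^{m+2}|x|$ is the natural move, and the bound $u(1-\tanh u)\le 2ue^{-2u}\le 1/e$ is valid and gives more than enough slack.

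As for comparison with the paper: there is nothing to compare, since the paper does not prove this lemma at all. It merely quotes the statement from \cite[Lemma 19]{BlancBournezMFCS2023vantardise} and moves on. Your self-contained proof therefore fills in a detail the paper leaves to a citation.
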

First, we observe that considering $Y(x,z)=\frac{1+\tanh(4 x z)}{2}$ would yield a function in $\contClasslight$ with the same property: we avoid the computation of $2^{m}$ by a substitution of a variable, and using a multiplication. We then write $\relutanh(Y,x)$ for $xY(x,z)$: we have $| \relutanh(2^{m},x) -\relu(x)| \leq 2^{-m}$.

In particular, this was used to prove we can uniformly approximate the continuous sigmoid functions (when $1/(b-a)$ is in $\manonclasslighttanh$) defined as: $\sig(a,b,x) = 0$ whenever $w \leq a$, $\frac{x-a}{b-a}$ whenever $a \le x \le b$, and $1$ whenever $b \leq x$. The above trick provides a new version of  \cite[Lemma 20]{BlancBournezMFCS2023vantardise}. 

\begin{lemma}[Uniform approximation of any piecewise continuous sigmoid] \label{lem:solution:a:tout}
	Assume $a,b,\frac1{b-a}$ is in $\contClasslight$. Then there is some function 
	$\sigtanh(z,a,b,x) \in \contClasslight$  
	such that for all integer $m$, $|\sigtanh(2^{m},a,b,x) - \sig(a,b,x)|\le 2^{-m}$.
\end{lemma}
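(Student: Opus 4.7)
The plan is to express the piecewise linear sigmoid as a difference of two ReLUs and then invoke the approximation of ReLU already available via Lemma~\ref{lemmeRelu}. Concretely, for any $a<b$, one verifies by a case analysis on $x\le a$, $a\le x\le b$, $x\ge b$ that
\[
\sig(a,b,x) \;=\; \frac{1}{b-a}\bigl(\relu(x-a) - \relu(x-b)\bigr).
\]
This identity is exact, so the whole approximation error will come from replacing each $\relu$ by $\relutanh$.

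Next, I would sharpen (or re-read) the bound from Lemma~\ref{lemmeRelu}. Essentially the same computation that gives $|\relutanh(2^{m+2},x)-\relu(x)|\le 2^{-m}$ actually yields an inequality of the form $|\relutanh(z,x)-\relu(x)|\le K/z$ for all $z\ge 1$ and all $x\in\R$, where $K$ is an absolute constant (it is essentially the maximum over $x$ of $|x|\,(1-\tanh(4|x|z))/2$, which scales as $1/z$). Granting this, set
\[
\sigtanh(z,a,b,x) \;=\; \frac{1}{b-a}\Bigl(\relutanh\!\bigl(\tfrac{2Kz}{b-a},\,x-a\bigr) \;-\; \relutanh\!\bigl(\tfrac{2Kz}{b-a},\,x-b\bigr)\Bigr).
\]
Then by the triangle inequality
\[
|\sigtanh(z,a,b,x) - \sig(a,b,x)| \;\le\; \frac{1}{b-a}\cdot 2\cdot \frac{K}{\,2Kz/(b-a)\,} \;=\; \frac{1}{z},
\]
so substituting $z=2^m$ yields the desired bound $2^{-m}$, uniformly in $x$.

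Finally, one must check that $\sigtanh$ lies in $\contClasslight$. The base functions $a$, $b$, and in particular $\tfrac{1}{b-a}$ are assumed to be in $\contClasslight$; the constant $2K$ is fixed and can be obtained from the base constants using $+$, $\times$, $\tfrac{x}{2}$, $\tfrac{x}{3}$ (or absorbed by taking any sufficiently large rational overestimate of $K$); $\relutanh$ itself is built from $\tanh$, $+$, $\times$ and $\tfrac{x}{2}$; and composition of all these is allowed. Hence $\sigtanh\in\contClasslight$.

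The only non-routine point is the uniform bound $|\relutanh(z,x)-\relu(x)|\le K/z$; this is really just the observation, underlying the proof of Lemma~\ref{lemmeRelu}, that the maximum of $|x|\,(1\pm\tanh(4xz))/2$ over $x\in\R$ is attained at $|x|=\Theta(1/z)$ and is itself of order $1/z$. Once this is in hand, the rest of the argument is purely algebraic and the scaling in $z$ does all the work; there is no genuine obstacle.
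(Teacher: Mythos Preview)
Your proposal is correct and follows essentially the same route as the paper: write $(b-a)\,\sig(a,b,x)=\relu(x-a)-\relu(x-b)$, replace each $\relu$ by $\relutanh$ with a rescaled first argument so that the factor $1/(b-a)$ is absorbed, and conclude by the triangle inequality. The only cosmetic difference is in the rescaling: the paper multiplies $z$ by a fixed power of two $2^{1+c}$ with $2^{c}\ge 1/(b-a)$ (so that the lemma on $\relutanh$ can be invoked verbatim at a power of $2$), whereas you multiply by $2K/(b-a)$ and appeal to the slightly sharper continuous bound $|\relutanh(z,x)-\relu(x)|\le K/z$; both choices work and lead to the same conclusion.
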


\begin{proof}
	Take $\sigtanh(z, a, b, x)=  \frac{(x-a) Y(x-a, z2^{1+c}) - (x-b) Y(x-b, z2^{1+c}) }{b-a}$. 
	observing that $(b-a) \sig(a,b,x) = {\relu(x-a)-\relu(x-b)}$. 
	From triangle inequality,  it will hold, choosing   $c$ with $\frac1{b-a} \le 2^{c}$.
\end{proof}

The authors of \cite{BlancBournezMFCS2023vantardise} proved the existence of some function corresponding to a continuous (controlled) approximation of the fractional part function:

\begin{theorem}[{\cite[Lemma 28]{BlancBournezMFCS2023vantardise}}]  \label{thXi}
	There exists some function $\xi: \N^{2}\to \R$ in $\manonclasslighttanh$ such that for all $n,m\in \N$ and $x\in [- 2^{n} , 2^{n}]$, whenever $ x \in [\lfloor x \rfloor + \frac{1}{8}, \lfloor x \rfloor + \frac{7}{8}] $, $$\left|\xi(2^m,2^{n},x)  - \{ x -\frac18 \}\right| \le 2^{-m}.$$
\end{theorem}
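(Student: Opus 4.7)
On the good region $\{x\}\in[1/8,7/8]$, one has $\lfloor x-1/8\rfloor=\lfloor x\rfloor$ and hence $\{x-1/8\}=x-1/8-\lfloor x\rfloor$, so it suffices to approximate the integer part $\lfloor x\rfloor$ to error at most $2^{-m}$ as a function in $\manonclasslighttanh$ for $x\in[-2^n,2^n]$. My plan is to extract the bits of $\lfloor x+2^n\rfloor\in\{0,\ldots,2^{n+1}\}$ from most to least significant in $n+1$ smoothed rounds, organized as a linear length discrete ODE whose time variable runs up to $2^{n+1}$ so that $\length{2^{n+1}}=n+2$ triggers exactly the right number of effective updates. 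Starting from $y_0=x+2^n$, $c_0=0$, and an auxiliary $s_0=2^n$ that is halved at each round via the base operation $x/2$, one sets
\[
b_k=\sigtanh(z,\,s_k-\tfrac{1}{16},\,s_k+\tfrac{1}{16},\,y_k),\qquad (y_{k+1},c_{k+1},s_{k+1})=(y_k-b_k s_k,\,c_k+b_k s_k,\,\tfrac{s_k}{2}),
\]
and outputs $\xi(2^m,2^n,x)=x-\tfrac{1}{8}-c_{n+1}+2^n$; the product $b_k s_k$ is produced by repeated doubling (available from $\plus$ alone, since each $s_k$ is a specific power of two that appears also as an ODE state).

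Correctness rests on the invariant that each subtraction removes an integer multiple of a power of two from $y_k$, so every $y_k$ carries the fractional part $\{x\}\in[1/8,7/8]$ and therefore lies at distance at least $1/8$ from every integer --- in particular from each threshold $s_k$. Every sigmoid is thus evaluated in its saturated regime, so Lemma~\ref{lem:solution:a:tout} delivers $|b_k-\mathbf{1}[y_k\ge s_k]|\le 2^{-s}$ with $s$ tunable through $z$. A telescoped estimate then gives $|\xi(2^m,2^n,x)-\{x-1/8\}|\le C\cdot 2^n\cdot 2^{-s}$ for a universal constant $C$, and choosing $z=2^{m+n+5}$ (available from the integer inputs $2^m,2^n$ by the standard closure properties of $\manonclasslighttanh$) closes the estimate at $2^{-m}$.

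The main obstacle --- and the reason the theorem is stated with a $1/8$ safety window rather than tolerating $x$ near integers --- is that accumulated numerical error in $y_k$ could in principle drift it into the sigmoid transition window $[s_k-\tfrac{1}{16},s_k+\tfrac{1}{16}]$ at a later round, breaking the saturation argument and amplifying the error catastrophically. The $1/8$ margin from the good region is exactly what absorbs this risk: as long as the running error stays below $1/16$, every subsequent comparison remains clean. The same polynomial choice of $z$ therefore enforces both the invariant (keeping every $\tilde y_k$ at least $1/16$ from each $s_k$) and the target precision $2^{-m}$ on the output, closing the argument.
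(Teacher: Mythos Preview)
The paper does not prove this statement itself; it is quoted from \cite[Lemma~28]{BlancBournezMFCS2023vantardise}, so there is no in-paper proof to compare against directly. The only related construction the present paper carries out is Lemma~\ref{lem:torture}, which produces a real extension of $\xi$ in the \emph{different} algebra $\contClasslight$ (where $\times$, $\cos$, $\pi$ are primitives) by a completely different, trigonometric route: one manufactures a $1$-periodic gate $t_e(x)$ from $\sin(2\pi x)$ and $\sin(4\pi x)$ thresholded by sigmoids, and integrates it so that the primitive approximates $\tfrac18\lfloor x\rfloor$ on the good region.

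Your bit-extraction strategy is the natural discrete plan, and the error analysis --- the $1/8$ margin keeping every residue clear of the sigmoid's transition band, with a single polynomial choice of $z$ absorbing the telescoped error --- is sound in spirit. What is not justified is that your recursion is a \emph{linear} length ODE in the sense of $\manonclasslighttanh$. The update $y_{k+1}=y_k-b_ks_k$ has $b_k=\sigtanh(z,s_k-\tfrac1{16},s_k+\tfrac1{16},y_k)$, whose threshold is a \emph{state} variable $s_k$, and the term $b_ks_k$ then pairs that with another state component; ``repeated doubling'' cannot manufacture such a product inside a single ODE step, and $\manonclasslighttanh$ has no base multiplication to fall back on. The usual fix is to reorganise so that the comparison is against a fixed \emph{parameter}: e.g.\ double $y$ at each step instead of halving $s$, so the threshold is always $2^n$ and both the sigmoid argument and the subtracted quantity are affine in the state with parameter coefficients --- this is what fits the ``essentially linear'' schema. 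As written, that packaging step is asserted rather than carried out.
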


We say that some real function is a real extension of a function over the integers if they coincide for integer arguments 
It is not clear that we have a real extension of $\xi$ in our algebra $\contClasslight$, but if we add a real extension of such a function, from the proof of {\cite[Corollary 22]{BlancBournezMFCS2023vantardise}}, we obtain the bestiary of functions considered in {\cite[Corollary 22]{BlancBournezMFCS2023vantardise}}: we write $\contClasslight+\xi$ for the algebra where some real extension of function $\xi$ is added as a base function.

\begin{corollary}[A bestiary of  functions] \label{corobestiary}
	There exist
	\begin{enumerate}
		\item $\xi_1, \xi_2 : \N^2 \times \R \mapsto \R ~  \in \contClasslight+\xi$  
		such that, for all $n,m \in \N$,  $\lfloor x \rfloor \in [- 2^{n}+1, 2^{n}]$, 
		whenever $ x \in [\lfloor x \rfloor - \frac{1}{2}, \lfloor x \rfloor + \frac{1}{4}] $  , $|\xi_1(2^m, 2^n,x) - \{ x \}| \le 2^{-m}$, 
		and whenever $ x \in [\lfloor x \rfloor, \lfloor x \rfloor + \frac{3}{4}] $  , $|\xi_2(2^m, 2^n,x)-\{ x \}| \le 2^{-m}$.
		\item $\sigma_1, \sigma_2 : \N^2 \times \R \mapsto \R ~  \in \contClasslight+\xi$  
		such that, for all $n,m \in \N$,  $\lfloor x \rfloor \in [-2^{n}+1 , 2^{n}]$,
		whenever  $ x \in [\lfloor x \rfloor - \frac{1}{2}, \lfloor x \rfloor + \frac{1}{4}]$, $|\sigma_1(2^m, 2^n,x)-\lfloor x \rfloor| \le 2^{-m}$, 
		and whenever  $ x \in I_{2}=[\lfloor x \rfloor, \lfloor x \rfloor + \frac{3}{4}]$, $|\sigma_2(2^m, 2^n,x)-\lfloor x \rfloor|\le 2^{-m}$.
		\item $\lambda : \N^2 \times\R \mapsto [0,1]   \in \contClasslight+\xi$  
		such that for all $m,n\in\N$,  $\lfloor x \rfloor \in [-2^{n}+1 , 2^{n}]$, 
		whenever  $ x \in [\lfloor x \rfloor + \frac{1}{4}, \lfloor x \rfloor + \frac{1}{2}] $, $|\lambda(2^m, 2^n,x)-0| \le 2^{-m}$, 
		and whenever $  x \in [\lfloor x \rfloor + \frac{3}{4}, \lfloor x \rfloor +1] $, $|\lambda(2^m, 2^n,x)-1|\le 2^{-m}$.
		\item $\mod_{2} : \N^2 \times\R \mapsto [0,1]   \in \contClasslight+\xi$  
		such that for all $m,n\in\N$,  $\lfloor x \rfloor \in [-2^{n}+1 , 2^{n}]$, 
		whenever  $ x \in [\lfloor x \rfloor -\frac14, \lfloor x \rfloor
		+ \frac{1}{4}] $, $|\mod_{2}(2^m, 2^n, x)$-$\lfloor x \rfloor \mod 2| \le 2^{-m}$.
		\item $\div_{2} : \N^2 \times\R \mapsto [0,1]  \in \contClasslight+\xi$  
		such that for all $m,n\in\N$,  $\lfloor x \rfloor \in [-2^{n}+1 , 2^{n}]$, 
		whenever  $ x \in [\lfloor x \rfloor - \frac14, \lfloor x \rfloor +
		\frac{1}{4}] $, $|\div_{2}(2^m, 2^n, x)-\lfloor x \rfloor / / 2| \le 2^{-m}$, with $/ /$ the integer division.
	\end{enumerate}
\end{corollary}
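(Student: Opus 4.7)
The plan is to derive each of the five families in the bestiary from the two building blocks already available in $\contClasslight + \xi$: the approximate fractional part $\xi$ of Theorem~\ref{thXi} and the approximate piecewise linear sigmoid $\sigtanh$ of Lemma~\ref{lem:solution:a:tout}. The recipe throughout is the same: shift the argument so as to relocate the guaranteed-accurate region of $\xi$, and, when a Boolean-like decision is required, post-compose with $\sigtanh$.

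First I would set $\xi_1(2^m,2^n,x) := \xi_2(2^m,2^n,x) := \xi(2^m,2^n,x+\frac{1}{8})$. The inner shift by $\frac{1}{8}$ pushes the guaranteed-accurate region $\{y\}\in[\frac{1}{8},\frac{7}{8}]$ of $\xi$ down to $\{x\}\in[0,\frac{3}{4}]$, which contains both claimed intervals, while the identity $\{(x+\frac{1}{8})-\frac{1}{8}\}=\{x\}$ shows that the value approximated is exactly $\{x\}$. The floor approximations then come from $\lfloor x\rfloor = x-\{x\}$, giving $\sigma_i(2^m,2^n,x) := x - \xi_i(2^m,2^n,x)$ with the same precision, so items~(1) and~(2) are obtained by composition alone.

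For $\lambda$, the target region $[\lfloor x\rfloor + \frac{3}{4}, \lfloor x\rfloor + 1]$ is not cleanly covered by $\xi_1$ or $\xi_2$, so I would instead evaluate $\xi$ at $x-\frac{1}{8}$: a short check shows that this approximates $\{x-\frac{1}{4}\}$, which lies in $[0,\frac{1}{4}]$ on the middle piece $[\lfloor x\rfloor+\frac{1}{4},\lfloor x\rfloor+\frac{1}{2}]$ and in $[\frac{1}{2},\frac{3}{4}]$ on the upper piece, so that $\sigtanh(2^{m+c},\frac{1}{4},\frac{1}{2},\cdot)$ collapses the two pieces to the values $0$ and $1$ respectively. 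For items~(4) and~(5), I would bootstrap from $\sigma_2$: the quantity $\sigma_2(2^{m'},2^n,x)/2$ lies near $\lfloor x\rfloor/2$, which is an integer if $\lfloor x\rfloor$ is even and a half-integer otherwise; after a small additive shift (by $\frac{1}{8}$) to land safely inside the good region of $\xi_2$, I would apply $\xi_2$ followed by $\sigtanh(2^{m+c},\frac{1}{4},\frac{1}{2},\cdot)$ to extract the parity bit for item~(4); and I would apply $\sigma_2$ directly to $\sigma_2(x)/2$ to recover $\lfloor\lfloor x\rfloor/2\rfloor$, giving item~(5).

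The main obstacle will be the careful tracking of error propagation through these compositions: each subtraction $x-\xi_i(x)$, each halving, and each sigmoid step amplifies the tolerated input error by a factor polynomial in $m$ and $n$, so the internal precision parameter of each inner call must be chosen as a polynomial in $m+n$ larger than the external target in order to absorb the amplification. A related subtlety, already visible in the construction for item~(4), is the need to ensure that intermediate quantities land safely inside the good region of the next function; this is why the additive shifts by $\frac{1}{8}$ are needed, and verifying this reduces to checking that the $2^{-m'}$ precision of the inner call is smaller than the margin of the target good region. Since the required inner precision parameters are products of the externally-provided $2^m$ and $2^n$, which are accessible by base multiplications of $\contClasslight$, the entire construction remains inside $\contClasslight+\xi$ as claimed.
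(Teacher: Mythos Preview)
Your overall strategy---shift the argument of $\xi$ to relocate its accurate window, subtract from $x$ to get the floor, and post-compose with $\sigtanh$ for the Boolean-valued functions---is exactly the paper's. The paper's proof is just a list of explicit formulas of this shape.

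There is, however, a concrete gap in your choice of shifts, caused by a misreading of the intervals. In the statement, the interval $[\lfloor x\rfloor-\tfrac12,\lfloor x\rfloor+\tfrac14]$ for $\xi_1$ (and likewise $[\lfloor x\rfloor-\tfrac14,\lfloor x\rfloor+\tfrac14]$ for $\bmod_2,\div_2$) is to be read as a neighbourhood $[n-\tfrac12,n+\tfrac14]$ of an integer $n$, with ``$\{x\}$'' meaning $x-n\in[-\tfrac12,\tfrac14]$ (possibly negative). This is why two \emph{different} functions $\xi_1,\xi_2$ are needed and why, later in the paper, $\sigma_1$ can round $x$ to $\lfloor x\rfloor+1$. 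Your choice $\xi_1=\xi_2=\xi(\cdot,\cdot,x+\tfrac18)$ only covers $\{x\}\in[0,\tfrac34]$: for $x\in[n-\tfrac12,n)$ the shifted point $x+\tfrac18$ has fractional part in $(\tfrac58,1)$, partly outside the guaranteed window of Theorem~\ref{thXi}, and even where $\xi$ is accurate it returns $x-(n-1)$ rather than $x-n$. The paper takes $\xi_1(M,N,x)=\xi(M,N,x-\tfrac38)-\tfrac12$, where the extra $-\tfrac12$ is precisely what produces outputs in $[-\tfrac12,\tfrac14]$.

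The same issue propagates to your items (4)--(5): bootstrapping from $\sigma_2$ is only reliable on $[n,n+\tfrac34]$, so on $[n-\tfrac14,n)$ your construction has no guarantee. The paper instead builds $\bmod_2$ directly as $1-\lambda(M,N/2,\tfrac{x}{2}+\tfrac78)$, which maps $[n-\tfrac14,n+\tfrac14]$ into the good region of $\lambda$ for either parity of $n$, and then sets $\div_2(M,N,x)=\tfrac12(\sigma_1(M,N,x)-\bmod_2(M,N,x))$, using $\sigma_1$ (whose good region is centred on the integer) rather than $\sigma_2$. The repairs are mechanical once the intended reading of the intervals is clear.
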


\begin{proof}
	There were given by  $\xi_{1}(M,N,x)=\xi(M,N,x-\frac38) -\frac12$,  $\xi_{2}(M,N,x)=\xi(N,x-\frac78)$, $\sigma_i (M, N,x) = x - \xi_i(M, N,x)$,
	$\lambda(M, N,x)=\sigtanh(2M,1/4,1/2,\xi(2M, N,x-9/8))$, $\mod_{2}(M, N,x)=1-\lambda(M, N/2,\frac12x+\frac78)$,
	$\div{2}(M,N,x)=\frac12(\sigma_{1}(M,N,x)-\mod_{2}(M, N,x))$.
\end{proof}

Similarly, the equivalent of {\cite[Lemmas 23,24 and 25]{BlancBournezMFCS2023vantardise}} still hold in $\contClasslight+\xi$. Namely:

\begin{lemma} \label{tricksigmoidtanh}
There exists $\TTtanh \in \contClasslight+\xi$ such that, $l \in [0,1]$, if we take $|d'-0| \le 1/4$, 
then $|\TTtanh(2^{m},d',l)-0| \le 2^{-m}$, and if we take $|d'-1| \le 1/4$, then $|\TTtanh(2^{m},d',l)-l| \le 2^{-m}$.
\end{lemma}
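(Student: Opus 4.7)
The plan is to realise $\TTtanh$ as a smooth selector between $0$ and $l$, driven by $d'$: we use the smooth sigmoid $\sigtanh$ from Lemma \ref{lem:solution:a:tout} to produce a value close to $0$ when $d'$ is near $0$ and close to $1$ when $d'$ is near $1$, then multiply by $l$. Concretely, define
\[
\TTtanh(z,d',l) \;:=\; l \cdot \sigtanh\!\left(z,\tfrac14,\tfrac34,d'\right).
\]
The constants $\tfrac14$, $\tfrac34$ and $1/(\tfrac34-\tfrac14)=2$ are all obtained from $0$ and $1$ via the base operations $+$, $-$, $\tfrac{x}{2}$, so they belong to $\contClasslight$. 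Hence Lemma \ref{lem:solution:a:tout} applies with $a=\tfrac14$, $b=\tfrac34$, and produces a function $\sigtanh \in \contClasslight$; multiplication by $l$ keeps the whole expression in $\contClasslight \subseteq \contClasslight+\xi$.

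For the error analysis, recall that the ideal sigmoid $\sig(\tfrac14,\tfrac34,d')$ equals $0$ as soon as $d' \le \tfrac14$ and equals $1$ as soon as $d' \ge \tfrac34$. If $|d'-0|\le \tfrac14$, then $d'\le \tfrac14$, so $\sig(\tfrac14,\tfrac34,d')=0$, and Lemma \ref{lem:solution:a:tout} gives $|\sigtanh(2^m,\tfrac14,\tfrac34,d')| \le 2^{-m}$. Since $l\in[0,1]$,
\[
|\TTtanh(2^m,d',l)-0| \;=\; |l|\cdot |\sigtanh(2^m,\tfrac14,\tfrac34,d')| \;\le\; 2^{-m}.
\]
Symmetrically, if $|d'-1|\le \tfrac14$, then $d'\ge \tfrac34$, so $\sig(\tfrac14,\tfrac34,d')=1$, and
\[
|\TTtanh(2^m,d',l)-l| \;=\; |l|\cdot |\sigtanh(2^m,\tfrac14,\tfrac34,d')-1| \;\le\; 2^{-m}.
\]

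The construction is essentially immediate once $\sigtanh$ is available; the only point worth checking is that the assumption $l\in[0,1]$ lets us absorb the multiplicative factor $|l|$ into the bound without rescaling the precision parameter $z$. Had $l$ been allowed in a larger range, one would have to substitute $z$ by $z$ times an upper bound on $|l|$ inside $\sigtanh$, but here no such adjustment is needed. Thus no genuine obstacle arises, and the lemma follows directly from the sigmoid approximation of Lemma \ref{lem:solution:a:tout} together with the closure of $\contClasslight+\xi$ under multiplication.
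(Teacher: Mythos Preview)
Your proof is correct and follows essentially the same approach as the paper, which does not spell out an explicit proof but simply states that this is the analogue of \cite[Lemma 23]{BlancBournezMFCS2023vantardise} transported to $\contClasslight+\xi$. The natural construction there is exactly what you give: multiply $l$ by a sigmoid in $d'$ whose plateaus at $0$ and $1$ cover the intervals $[-\tfrac14,\tfrac14]$ and $[\tfrac34,\tfrac54]$ respectively, and use $l\in[0,1]$ to absorb the multiplicative factor into the error bound. Your observation that the construction in fact lands in $\contClasslight$ (not merely $\contClasslight+\xi$) is also correct, since $\sigtanh$ from Lemma~\ref{lem:solution:a:tout} uses only $\tanh$, arithmetic and the constants $\tfrac14,\tfrac34,2$.
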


\begin{lemma}\label{lem:switch:tanh}
Let $\alpha_{1},  \alpha_{2},\dots,$ $\alpha_{n}$ be some integers, and $V_{1}, V_{2}, \dots,V_{n}$ some constants. 
We write  
$\send({\alpha_{i} \sendsymbol V_{i}})_{i \in \{1,\dots,n\}}$ for the function that maps any $x \in [\alpha_{i}-1/4, \alpha_{i}+1/4]$ to $V_{i}$, for all $i \in \{1,\dots,n\}$. 

There is some  function in $\contClasslight+\xi$, that we write 
$\sendtanh(2^{m},{\alpha_{i} \sendsymbol V_{i}})_{i \in \{1,\dots,n\}}$, that maps any $x \in [\alpha_{i}-1/4, \alpha_{i}+1/4]$ to a real at distance at most $2^{-m}$ of $V_{i}$, for all $i \in \{1,\dots,n\}$. 
\end{lemma}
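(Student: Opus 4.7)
The plan is to build the switch function as a weighted sum of smooth indicator bumps, one per integer $\alpha_i$, where each bump is constructed from the smooth sigmoid $\sigtanh$ provided by Lemma \ref{lem:solution:a:tout}. Concretely, for each $i$ I would define
\[
\chi_i(z,x) \;=\; \sigtanh\bigl(z,\,\alpha_i-\tfrac{1}{2},\,\alpha_i-\tfrac{1}{4},\,x\bigr)\;-\;\sigtanh\bigl(z,\,\alpha_i+\tfrac{1}{4},\,\alpha_i+\tfrac{1}{2},\,x\bigr).
\]
Because every $\alpha_i$ is a fixed integer, the endpoints $\alpha_i\pm\tfrac{1}{2}$, $\alpha_i\pm\tfrac{1}{4}$ and the reciprocal gap $1/(b-a)=4$ are all constants and hence in $\contClasslight$, so Lemma \ref{lem:solution:a:tout} applies. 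In the ideal limit, the first sigmoid is $0$ on $(-\infty,\alpha_i-\tfrac12]$ and $1$ on $[\alpha_i-\tfrac14,+\infty)$, and symmetrically for the second, so the difference is exactly $1$ on $[\alpha_i-\tfrac14,\alpha_i+\tfrac14]$ and $0$ outside $(\alpha_i-\tfrac12,\alpha_i+\tfrac12)$.

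Then I would define
\[
\sendtanh\bigl(2^m,(\alpha_i\sendsymbol V_i)_{i}\bigr)(x)\;=\;\sum_{i=1}^{n} V_i\,\chi_i\!\bigl(c\cdot 2^m,\,x\bigr),
\]
for a constant $c$ (depending only on $n$ and $\max_i|V_i|$) chosen large enough to absorb the final error budget. The whole expression belongs to $\contClasslight+\xi$ because $\sigtanh\in\contClasslight$ (Lemma \ref{lem:solution:a:tout}), sums and products by constants are in $\contClasslight$, and the precision argument $c\cdot 2^m$ is obtained by multiplying the input $2^m$ by a constant.

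For the error analysis, fix any $j$ and any $x\in[\alpha_j-\tfrac14,\alpha_j+\tfrac14]$. Since the $\alpha_i$ are distinct integers, for $i\ne j$ we have $|x-\alpha_i|\ge \tfrac34$, so $x$ lies in the constant-$0$ region of both sigmoids defining $\chi_i$; by the approximation bound in Lemma \ref{lem:solution:a:tout}, $|\chi_i(c\cdot 2^m,x)|\le 2\cdot 2^{-m}/c$. For $i=j$, by the same lemma $|\chi_j(c\cdot 2^m,x)-1|\le 2\cdot 2^{-m}/c$. The triangle inequality then gives
\[
\Bigl|\sendtanh(2^m,\dots)(x)-V_j\Bigr|\;\le\;\frac{2\cdot 2^{-m}}{c}\sum_{i=1}^{n}|V_i|,
\]
so choosing $c$ with $c\ge 2\sum_i |V_i|$ yields the claimed bound $2^{-m}$. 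The main mild subtlety is just making sure the precision parameter passed to each $\sigtanh$ scales linearly in the requested $2^m$ while still lying inside the algebra; the closure of $\contClasslight$ under multiplication by constants makes this immediate, so no genuine obstacle remains.
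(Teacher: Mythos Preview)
Your proof is correct. The paper itself does not give an explicit argument for this lemma; it simply states that the analogues of \cite[Lemmas 23--25]{BlancBournezMFCS2023vantardise} still hold in $\contClasslight+\xi$, so there is no in-paper proof to compare against directly. Your sigmoid-bump construction is the natural and standard route (and is essentially what the cited reference does). Two minor remarks: first, your construction actually lands in $\contClasslight$ alone, since you only use $\sigtanh$ from Lemma~\ref{lem:solution:a:tout} and never $\xi$; this is slightly stronger than required. Second, to make the precision-scaling step airtight, take $c$ to be a power of $2$, say $c=2^{k}$ with $2^{k}\ge 2\sum_i|V_i|$; then $c\cdot 2^{m}=2^{m+k}$ matches the hypothesis of Lemma~\ref{lem:solution:a:tout} exactly and yields the claimed $2^{-m}/c$ bound without any monotonicity argument.
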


\begin{lemma} \label{lem:switch:pairs:tanh}
Let $N$ be some integer. Let $\alpha_{1},  \alpha_{2},\dots, \alpha_{n}$ be some integers, and   $V_{i,j}$ for $1 \le i \le n$ some constants, with $0 \le j < N$. 
We write  
$\send({(\alpha_{i},j) \sendsymbol V_{i,j}})_{i \in \{1,\dots,n\}, j \in \{0,\dots,N-1\}}$ for the function that maps any $x  \in  [\alpha_{i}-1/4, \alpha_{i}+1/4]$
 and $y \in [j-1/4,j+1/4]$ to $V_{i,j}$, for all $i \in \{1,\dots,n\}$, $j \in \{0,\dots,N-1\}$.

There is some  function in $\contClasslight+\xi$, that we write 
$\sendtanh(2^{m},{(\alpha_{i},j) \sendsymbol V_{i,j}})_{i \in \{1,\dots,n\}, j \in \{0,\dots,N-1\}}$, that maps any $x  \in  [\alpha_{i}-1/4, \alpha_{i}+1/4]$
 and $y \in [j-1/4,j+1/4]$ to a real at distance at most $2^{-m}$ of $V_{i,j}$, for all $i \in \{1,\dots,n\}$, $j \in \{0,\dots,N-1\}$.
\end{lemma}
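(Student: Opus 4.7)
The plan is to reduce the two-argument selector to a tensorization of two one-argument selectors from Lemma \ref{lem:switch:tanh}, tied together by approximate indicator functions in $y$.

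First I would, for each fixed $j \in \{0,1,\ldots,N-1\}$, apply Lemma \ref{lem:switch:tanh} to the one-variable family $\{\alpha_i \sendsymbol V_{i,j}\}_i$ in order to obtain a function $G_j(x) \in \contClasslight + \xi$ such that $G_j(x)$ is $2^{-M}$-close to $V_{i,j}$ whenever $x \in [\alpha_i - 1/4, \alpha_i + 1/4]$, where $M$ is a precision parameter that I will fix at the end. Since there are only $N$ values of $j$, this produces a fixed finite family of functions in the class.

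Next I would build approximate indicators $\mu_j(y) \in \contClasslight + \xi$ for $y$ being close to the integer $j \in \{0, \ldots, N-1\}$. These can be obtained from the piecewise sigmoid $\sigtanh$ of Lemma \ref{lem:solution:a:tout} applied on the intervals $[j-1/4, j-1/8]$ and $[j+1/8, j+1/4]$, so that $\mu_j(y)$ is within $2^{-M'}$ of $1$ when $|y-j| \le 1/4$ and within $2^{-M'}$ of $0$ when $|y - j'| \le 1/4$ with $j' \ne j$. Equivalently one can combine the function $\lambda$ from Corollary \ref{corobestiary} with appropriate shifts. Then define
\[
\sendtanh\bigl(2^m,\{(\alpha_i,j) \sendsymbol V_{i,j}\}\bigr)(x,y) \;:=\; \sum_{j=0}^{N-1} G_j(x)\,\mu_j(y),
\]
which lies in $\contClasslight + \xi$ as a finite sum (of length $N$) of products of functions in the class.

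The error analysis proceeds by fixing $(x,y)$ with $x \in [\alpha_i - 1/4, \alpha_i + 1/4]$ and $y \in [j_0 - 1/4, j_0 + 1/4]$, and writing $C = \max_{i,j}|V_{i,j}|$. The ``correct'' term $G_{j_0}(x)\mu_{j_0}(y)$ differs from $V_{i,j_0}$ by at most $2^{-M} + (C + 2^{-M})\cdot 2^{-M'}$, while each of the $N-1$ off-terms $G_j(x)\mu_j(y)$ with $j \ne j_0$ is bounded in absolute value by $(C + 2^{-M})\cdot 2^{-M'}$. Taking $M := m + \lceil \log_2(N+1) \rceil + 1$ and $M' := m + \lceil \log_2(N(C+1)+1)\rceil + 2$ (both polynomial, in fact constant, in $m$ since $n$, $N$ and $C$ are parameters of the lemma rather than inputs) makes the total error at most $2^{-m}$. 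In the final formula this amounts to substituting $2^{m+c}$ for a fixed constant $c$ into the earlier lemmas that build $G_j$ and $\mu_j$.

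The main obstacle is not any genuine new idea but the bookkeeping of the precision parameters: one must ensure that the internal precisions in the two nested applications of the one-variable selector and of the sigmoid, after being multiplied together and summed over the $N$ branches, still combine to at most $2^{-m}$. Since $n$, $N$ and the bound $C$ on the $V_{i,j}$ are fixed in the statement, the required blow-up is only a constant offset, and the argument goes through cleanly.
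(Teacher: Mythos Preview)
The paper does not give its own proof of this lemma; it merely records that the argument of \cite[Lemma~25]{BlancBournezMFCS2023vantardise} carries over verbatim to $\contClasslight+\xi$. Your construction --- tensorizing the one-variable selector of Lemma~\ref{lem:switch:tanh} against approximate bump functions $\mu_j$ in the second variable and summing over $j$ --- is exactly the standard route behind that citation, and your error bookkeeping (fixed $N$, fixed bound $C$, constant offset in the precision parameter) is correct.

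One small slip: the transition intervals you specify for $\mu_j$, namely $[j-1/4,j-1/8]$ and $[j+1/8,j+1/4]$, lie \emph{inside} the target plateau $[j-1/4,j+1/4]$, so with those choices $\mu_j(j\pm 1/4)\approx 0$ rather than $\approx 1$ and the main term of your sum vanishes at the endpoints. The transitions must sit in the gaps between consecutive integers, e.g.\ on $[j-3/4,j-1/4]$ and $[j+1/4,j+3/4]$. Your stated alternative (shifts of $\lambda$ from Corollary~\ref{corobestiary}) already does the right thing, so the argument goes through once this is corrected.
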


\paragraph*{Working with one step of a Turing machine}
As the proof of {\cite[Lemmas 30]{BlancBournezMFCS2023vantardise}} is done using all the  functions provided by these lemmas, we  obtain:

\SHORTERICALP{\cite[Lemma 28]{BlancBournezMFCS2023vantardise} is done using all the  functions provided by these lemmas, we  obtain:
	
	\begin{lemma}
		We can construct some function $\overline{\Next}$ in $\contClasslight+\xi$
		that simulates one step of $M$: given a configuration $C$, writing $C'$
		for the next configuration, we have for all integer $m$,
		$\| \bar{\Next}(2^{m},\bar C) - \bar C' \| \le 2^{-m}$.
	\end{lemma}
	
	We can get  the improvement  {\cite[Lemmas 30]{BlancBournezMFCS2023vantardise}}: }

\begin{lemma} \label{robrob}
	We can construct some function $\bar {\Next}$ in $\contClasslight+\xi$ that simulates one step of $M$, i.e. that computes the $\Next$ function sending a configuration $\bar C$ of Turing machine $M$ to $\bar C'$, where $C'$ is the next one:  $\| \Next(2^{m}, 2^{S},\bar C) - \bar C' \| \le 2^{-m}$.  Furthermore, it is robust to errors on its input, up to space $S$: considering $\|\tilde{C}-\bar C\| \le 4^{-(S+2)}$, $\| \Next(2^{m}, 2^{S},\tilde C) - \bar C' \| \le 2^{-m}$ remains true.
\end{lemma}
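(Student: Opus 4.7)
\medskip
\noindent\textbf{Proof plan.} The plan is to follow the construction of \cite[Lemma 30]{BlancBournezMFCS2023vantardise}, which already yields the first inequality on exact inputs, and then analyse how a small input perturbation propagates through each elementary operation. The construction proceeds in three stages. First, from $\tilde C = (\tilde q, \tilde l, \tilde r)$ we extract the head symbol $r_{0}\in\{0,1,3\}$ by forming $4\tilde r$ and applying $\sigma_{1}$ or $\sigma_{2}$ from Corollary \ref{corobestiary}; the symbol $l_{0}$ is extracted analogously for use in left-moves. Second, we apply the pair-switch $\sendtanh$ of Lemma \ref{lem:switch:pairs:tanh} to $(\tilde q, r_{0})$ to retrieve the three components of the transition $\delta(q,r_{0})$: the next state $q'$, the symbol $s$ to be written, and a direction bit $d\in\{0,1\}$. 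Third, we assemble the two candidate updated encodings using only $+$, $-$, $\times$ and $x/4$ (available as $(x/2)/2$): the right-move candidate $(q',\,(s+\tilde l)/4,\,4\tilde r - r_{0})$ and the symmetric left-move candidate involving $l_{0}$ and $s$. A componentwise conditional $\TTtanh$ (Lemma \ref{tricksigmoidtanh}) gated by $d$ picks the correct one and outputs $\bar{\Next}(2^{m},2^{S},\tilde C)$.

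For the robustness claim, the key observation is that multiplying the input error bound $4^{-(S+2)}$ by $4$ yields at most $4^{-(S+1)}\le 1/16<1/4$, so both $4\tilde r$ and $4\tilde l$ lie safely inside the tolerance region of $\sigma_{i}$ around the correct integer. Hence the extracted digits are within $2^{-S}$ of $r_{0}$ and $l_{0}$, while $\tilde q$ itself stays within $4^{-(S+2)}$ of $q$. These errors are all well inside the $1/4$ tolerance required by $\sendtanh$, so the switch returns $q'$, $s$ and $d$ at distance at most $2^{-S}$ from the correct integers. The subsequent arithmetic updates amplify this residual error by a constant factor of at most $4$, and the final $\TTtanh$ picks the correct branch because $d$ is within $1/4$ of $0$ or $1$.

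The main obstacle, and the only technical point, is the careful bookkeeping of the internal precisions. Each call to $\sigma_{i}$, $\sendtanh$, $\sigtanh$ and $\TTtanh$ must receive a precision argument chosen large enough, relative to both $m$ and $S$, so that the aggregated output error is bounded by $2^{-m}$ in both the noiseless and the noisy regime. Since every elementary bound in the bestiary is exponentially small in its precision parameter, feeding these primitives with a parameter of the form $2^{m+c S}$ for a suitable constant $c$ suffices to close the argument, while keeping $\bar{\Next}$ inside the algebra $\contClasslight+\xi$.
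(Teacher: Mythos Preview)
Your proposal is correct and follows essentially the same route as the paper. The paper itself gives no self-contained argument here: it simply observes that the proof of \cite[Lemma~30]{BlancBournezMFCS2023vantardise} uses only the bestiary functions of Corollary~\ref{corobestiary} and Lemmas~\ref{tricksigmoidtanh}--\ref{lem:switch:pairs:tanh}, all of which have been re-established in $\contClasslight+\xi$, so the conclusion transfers verbatim. Your sketch is a faithful unfolding of that cited construction (digit extraction via $\sigma_i$, transition lookup via $\sendtanh$, arithmetic reassembly, branch selection via $\TTtanh$), together with the straightforward error-propagation check that the input perturbation $4^{-(S+2)}$ stays inside the $1/4$-tolerance windows throughout and that feeding the primitives with precision $2^{m+cS}$ absorbs the constant blow-ups.
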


\paragraph*{Converting integers an dyadics to words and conversely}

The authors of \cite{BlancBournezMFCS2023vantardise} also defined some functions for converting integers and dyadics to their encoding as words, and conversely. Namely, they consider the following encoding: every digit in the binary expansion of dyadic $d$  is encoded by a pair of symbols in the radix $4$ expansion of $\overline{d} \in \Image \cap [0,1]$: digit $0$ (respectively: $1$) is encoded by $11$ (resp. $13$) if before the ``decimal'' point in $d$, and digit $0$ (respectively: $1$) is encoded by $31$ (resp. $33$) if after. For example, for $d=101.1$ in base $2$, $\overline{d}=0.13111333$ in base $4$. Conversely, given $\overline{d}$, they provided a way to construct $d$.
This corresponds to  {\cite[Lemmas 33 and 34]{BlancBournezMFCS2023vantardise}}:

\begin{lemma}[{From $\N$ to $\Image$}]  \label{lem:manquant} We can construct some function $\Decode: \N^{2} \to \R$ in $\manonclasslighttanh$ that maps $m$ and $n$ to some point at distance less than $2^{-m}$ from $\encodagemot(\overline{n})$.
\end{lemma}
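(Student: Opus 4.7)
Write $n$ in binary as $b_{k-1} b_{k-2} \cdots b_0$, with $k = \ell(n)$. Each bit $b_i$ is encoded as a pair of base-$4$ symbols $c_i \in \{11, 13\}$, and $\overline{n}$ is the word $c_{k-1} c_{k-2} \cdots c_0$ read left-to-right from the most significant bit. Viewing the integer represented in base $4$ by the block $c_i$ as $5 + 2 b_i$, a direct computation shows that if $E_j$ denotes $\encodagemot(c_j c_{j-1} \cdots c_0)$, then
\[
E_0 \;=\; \frac{5 + 2 b_0}{16}, \qquad E_{j+1} \;=\; \frac{5 + 2 b_{j+1} + E_j}{16},
\]
and $E_{k-1} = \encodagemot(\overline{n})$. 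This recurrence is affine in $E_j$ with contraction factor $1/16$, so errors incurred at a single step decay geometrically along the iteration rather than accumulating.

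To implement this inside $\manonclasslighttanh$, I would introduce an auxiliary state $N_j$ tracking successive integer halvings of $n$, namely $N_0 = n$ and $N_{j+1} = \div_{2}(N_j)$. The $j$-th bit is then $b_j = \mod_{2}(N_j)$, and both $\mod_{2}$ and $\div_{2}$ are available with controlled precision via Corollary \ref{corobestiary}, provided one supplies a bound on the magnitude of $N_j$; since $0 \le N_j \le n \le 2^{\ell(n)}$ at every step, feeding $2^{\ell(n)}$ as the range parameter suffices. The pair $(E_j, N_j)$ therefore evolves through an update that is affine in $E_j$ and affine in $N_j$, i.e.\ the solution of a linear length ODE indexed by $j$ running from $0$ to $\ell(n)-1$; since $\ell$ is a base function of $\manonclasslighttanh$ and linear length ODEs are an allowed schema, the whole iteration sits in the class.

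The remaining step is to tune the precision parameters so that the final error is at most $2^{-m}$. There are two sources of error: the approximation error of $\mod_{2}$ and $\div_{2}$ at each step, and the propagation of that error through the recurrence. The contraction factor $1/16$ actually helps with the second source, but the per-step $\mod_{2}/\div_{2}$ errors still have to be absorbed; running the sub-calls at precision $2^{-(m + c \ell(n))}$ for a suitable constant $c$ suffices, since after $\ell(n)$ contracting iterations the accumulated error is bounded by a geometric sum dominated by $O(2^{-m})$. The main obstacle, if any, is purely bookkeeping: one must verify that each $N_j$ stays inside the validity domain of the approximations to $\mod_{2}$ and $\div_{2}$, which follows from the coarse bound $0 \le N_j \le 2^{\ell(n)}$ that is fed into their range argument. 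Projecting the final coordinate $E_{k-1}$ out of the linear length ODE then yields the required $\Decode \in \manonclasslighttanh$.
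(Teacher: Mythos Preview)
The paper does not give its own proof of this lemma: it is quoted verbatim as \cite[Lemma 33]{BlancBournezMFCS2023vantardise}, with no argument beyond the surrounding description of the encoding $\overline{d}$. So there is nothing in the present paper to compare your argument against.

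On its own merits, your plan is the natural one and matches what the cited construction does: peel off the bits of $n$ one at a time with (approximate) $\mod_{2}$ and $\div_{2}$, and accumulate the base-$4$ encoding through the contracting recurrence $E_{j+1}=(5+2b_{j+1}+E_j)/16$. Your derivation of that recurrence and of $E_{k-1}=\encodagemot(\overline{n})$ is correct, and the observation that the $1/16$ factor makes the $E$-error geometrically self-limiting is exactly the right point. The error bookkeeping you sketch (boost the precision of $\mod_2$ and $\div_2$ by an additive $c\,\ell(n)$, rely on their rounding behaviour to keep $N_j$ within $1/4$ of the true integer at every step) is also the standard mechanism here.

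One phrasing deserves tightening: calling the $N_j$-update ``affine in $N_j$'' is not literally accurate, since $\div_{2}$ is a step-like function. What actually makes the system fit the linear-length-ODE schema is that $\div_2(\cdot,\cdot,x)=\tfrac{x}{2}+(\text{bounded correction})$ (via $\sigma_1(x)=x-\xi_1(x)$), so the discrete derivative of $N$ is $-\tfrac{1}{2}N_j$ plus a bounded term, and similarly the $E$-update has bounded nonlinear dependence on $N$ through $\mod_2$. That is precisely the ``essentially linear'' shape the schema in \cite{MFCS2019,BlancBournezMFCS2023vantardise} allows. If you replace ``affine'' by that more careful statement, the argument goes through.
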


\begin{lemma}[{From $\Image$ to $\R$}, and multiplying in parallel]  \label{lem:codage:manon}
	We can construct some  function $\EncodeMul: \N^{2} \times [0,1] \times \R \to \R$ in $\manonclasslighttanh$ that  maps $m$, $2^{S}$, $\encodagemot(\overline{d})$  and (bounded) $\lambda$ to some real at distance at most $2^{-m}$ from  $\lambda d$, whenever  $\overline{d}$ is of length less than $S$.
\end{lemma}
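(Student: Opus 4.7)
The plan is to recover $\lambda d$ by iterating through the at most $S$ pairs of base-$4$ digits of $\encodagemot(\overline{d})$, extracting one binary bit of $d$ per pair, and accumulating the appropriately weighted contributions. Each pair $(a,b)\in\{1,3\}^{2}$ of consecutive digits encodes a bit $b'=(b-1)/2\in\{0,1\}$ of $d$, with the marker $a=1$ meaning \emph{before the decimal point} and $a=3$ meaning \emph{after}. Writing $b'_{1},\dots,b'_{P}$ for the bits of $d$ in order and $P_{a}$ for the number of after-decimal bits, one checks that $d=\bigl(\sum_{j=1}^{P} b'_{j}\, 2^{P-j}\bigr)\cdot 2^{-P_{a}}$, so computing $\lambda d$ reduces to computing $\lambda D:=\lambda\sum_{j=1}^{P} b'_{j}\, 2^{P-j}$ and then halving the result $P_{a}$ times.

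I would first set up a linear length ODE over $k=0,\dots,S$ maintaining three quantities: a ``tail'' $T_{k}$ with $T_{0}=\encodagemot(\overline{d})$ and $T_{k+1}$ obtained from $16\,T_{k}$ by subtracting a clean extraction of the two leading base-$4$ digits of $16\,T_{k}$; a linear accumulator $A_{k}=\lambda\cdot\sum_{j=1}^{k} b'_{j}\, 2^{k-j}$ satisfying the linear recurrence $A_{k+1}=2 A_{k}+\lambda\, b'_{k+1}$; and a counter $N_{k}$ of the $3$-markers encountered, with $N_{k+1}=N_{k}+\mathbf{1}_{a_{k+1}=3}$. The digit extractions from $T_{k}$ are performed by combining suitably rescaled versions of the tools from Corollary~\ref{corobestiary} (to approximate integer and fractional parts of $4\,T_{k}$ and $16\,T_{k}$) with the four-way switch $\sendtanh$ of Lemma~\ref{lem:switch:pairs:tanh}, which converts each pair $(a,b)\in\{1,3\}^{2}$ into the correct bit $b'_{k+1}$, the correct indicator $\mathbf{1}_{a_{k+1}=3}$, and the correct shift for $T_{k+1}$, all up to exponentially small error.

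After $S$ iterations, the trailing positions of $\encodagemot(\overline{d})$ beyond its actual length are $0$ and contribute nothing, so $A_{S}$ approximates $\lambda D$; a second linear length ODE then applies the halving operator $x\mapsto x/2$ exactly $N_{S}=P_{a}$ times to produce $\lambda D\cdot 2^{-P_{a}}\approx \lambda d$.

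The principal obstacle will be the precision bookkeeping. Because $A_{k}$ can reach magnitude $|\lambda|\cdot 2^{S}$, each of the $S$ iterations amplifies input errors by a bounded multiplicative factor, so all internal approximations of the base-$4$ digits, of the switch function, and of the recurrences must be invoked with precision parameter $2^{m+cS}$ for a suitable constant $c$. Equally importantly, one must guarantee that every tail $T_{k}$ stays in a $\tfrac{1}{4}$-neighbourhood of the integer-shifted value expected, so that the extraction functions of Corollary~\ref{corobestiary} return clean digit values; this is handled exactly as in the robustness argument for $\bar{\Next}$ (Lemma~\ref{robrob}), by inflating every precision parameter by a polynomial amount in $S$ and propagating the triangle inequality across the $S$ steps.
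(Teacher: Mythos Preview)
The paper does not give its own proof of this lemma: it is simply recalled from \cite[Lemmas~33 and~34]{BlancBournezMFCS2023vantardise} and then treated as an imported black box (the paper only worries about obtaining a \emph{continuous}-ODE surrogate for it later, via Proposition~\ref{the:manon:trick}). So there is no in-paper argument to compare against; your digit-pair extraction via the bestiary of Corollary~\ref{corobestiary} together with a linear length-ODE accumulator is exactly the natural construction and is in the same spirit as the one-step Turing-machine simulation the paper does spell out.

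There is one genuine slip to repair. Your accumulator recurrence $A_{k+1}=2A_{k}+\lambda\,b'_{k+1}$ doubles \emph{unconditionally}. Since $\overline{d}$ has length $<S$, after the $P$ genuine digit-pairs are exhausted the remaining tail pairs are the blank $(0,0)$, and running the recurrence to step $S$ yields $A_{S}=2^{\,S-P}\lambda D$, not $\lambda D$. Your halving counter $N_{k+1}=N_{k}+\mathbf{1}_{a_{k+1}=3}$ does not absorb this extra factor, because blanks have $a=0\neq 3$. The fix is immediate with the tools you already invoke: add $(0,0)$ as a fifth branch of the $\sendtanh$ switch and either (i) have it output multiplier $1$ in place of $2$, so that $A_{k+1}=A_{k}$ on blanks, or (ii) have it also increment $N$, so that $N_{S}=P_{a}+(S-P)$ and the second-phase halvings exactly cancel the spurious doublings. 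With that amendment your precision bookkeeping (inflating all internal parameters by a polynomial in $S$) goes through as you describe.
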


As for $\xi$, it is not clear that we have some real extensions of these functions in  $\contClasslight$: we write $\contClasslight+\xi+\Decode+\Encode$ for the algebra where some real extension of these functions is added as a base function.

\subsection{Constructing the missing functions}

We need a way to construct some substitute of ``missing functions'' ($\xi$, $\Decode$ and $\EncodeMul$). As all of them are defined using discrete ODEs, an idea is to use a continuous ODE to simulate the respective discrete ODEs: we hence revisit the construction of the ideal iteration trick of Section \ref{ideal:branicky}, dealing with errors and not exact functions $\theta(z)$ and $r(x)$. 

The key is to revisit Lemma \ref{bran:funda}, and do a more detailed analysis of possible involved errors in dynamics of the form \eqref{eq:de:branicky}. This equation has been studied by various authors in several articles, including \cite{braverman2005hyperbolic,CMC00,dsg05, JournalOfComplexity2016,TheseRiccardo}. We use the following statement from \cite[Lemma 4.5]{TheseRiccardo}, \cite[Lemma 5.2]{BGDPRiccardo2022}, obtained basically  by a case analysis of error propagations in Lemma  \ref{bran:funda}. 

\begin{lemma}[{Improved error analysis of ODE \eqref{eq:de:branicky},   \cite[Lemma 4.5]{TheseRiccardo} \cite[5.2]{BGDPRiccardo2022}}]  \label{lem:quatre:cinq}
	Consider a point $b \in \R$, some $\gamma>0$ some reals $t_0<t_1$, and a function $\phi: \R \rightarrow \R$ with the property that $\phi(t) \geq 0$ for all $t \geq t_0$ and $\int_{t_0}^{t_1} \phi(t) d t>0$. Let $\rho, \delta \geq 0$ and let $\bar{b}, E: \mathbb{R} \rightarrow \mathbb{R}$ be functions such that that $|\bar{b}(t)-b| \leq \rho$ and $|E(t)| \leq \delta$ for all $t \geq t_0$. Then the IVP defined by
	$$
	z^{\prime}=c(\bar{b}(t)-z)^3 \phi(t)+E(t)
	$$
	with the initial condition $z\left(t_0\right)=\bar{z}_0$, 
	where $\gamma>0$ and $c \ge \frac{1}{2 \gamma^2 \int_{t_0}^{t_1} \phi(t) dt}$ satisfies
	\begin{enumerate}
		\item  $\left|z\left(t_1\right)-b\right|<\rho+\gamma+\delta\left(t_1-t_0\right)$, independently of the initial condition $\bar{z}_0 \in \mathbb{R}$
		\item $\min \left(\bar{z}_0, b-\rho\right)-\delta\left(t_1-t_0\right) \leq z(t) \leq \max \left(\bar{z}_0, b+\rho\right)+\delta\left(t_1-t_0\right)$ for all $t \in\left[t_0, t_1\right]$.
	\end{enumerate}
\end{lemma}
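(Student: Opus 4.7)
The plan is to split the statement into its two parts: part (2), the containment envelope, first, and then part (1), the error bound, by comparison with an auxiliary unperturbed ODE.

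For part (2), I would use a barrier argument with envelope $M(t) = \max(\bar{z}_0, b+\rho) + \delta(t-t_0)$. Clearly $z(t_0) = \bar{z}_0 \leq M(t_0)$. Wherever $z(t) > M(t)$, one has $z(t) > b+\rho \geq \bar{b}(t)$, so the cubic restoring term $c(\bar{b}(t)-z(t))^3\phi(t)$ is $\leq 0$ and therefore $z'(t) \leq E(t) \leq \delta = M'(t)$. Taking $\sigma = \inf\{t \in [t_0,t_1] : z(t) > M(t)\}$ and using continuity gives $z(\sigma) = M(\sigma)$, but then the inequality $z' \leq M'$ on a right neighbourhood of $\sigma$ contradicts the infimum property. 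The lower envelope is obtained symmetrically.

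For part (1), I would first handle the auxiliary noiseless IVP $y' = c(b-y)^3 \phi(t)$, $y(t_0) = \bar{z}_0$. This is separable: $\frac{d}{dt}[(y(t)-b)^{-2}] = 2c\phi(t)$, which integrates to $|y(t_1)-b|^2 \leq \frac{1}{2c\int_{t_0}^{t_1}\phi}\leq \gamma^2$ by the hypothesis on $c$, giving $|y(t_1)-b|\leq \gamma$ independently of $\bar{z}_0$ (the degenerate case $y\equiv b$ is trivial). Next set $w = z - y$, with $w(t_0)=0$. Using the identity $A^3 - B^3 = (A-B)(A^2+AB+B^2)$ with $A = \bar{b}(t)-z$ and $B = b-y$, together with the fact that the quadratic factor is always $\geq 0$, the ODE for $w$ becomes a \emph{linear} time-varying equation $w'(t) = -Q(t)\,w(t) + Q(t)(\bar{b}(t)-b) + E(t)$, where $Q(t) = c\phi(t)(A^2+AB+B^2) \geq 0$. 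Variation of parameters with integrating factor $\exp\bigl(\int_{t_0}^t Q\bigr)$ and $w(t_0)=0$ yields $|w(t_1)| \leq \rho\bigl(1-e^{-\int_{t_0}^{t_1}Q}\bigr) + \delta(t_1-t_0) \leq \rho + \delta(t_1-t_0)$. The triangle inequality $|z(t_1)-b|\leq |w(t_1)|+|y(t_1)-b|$ then produces the bound in (1), with strict inequality coming from either $1-e^{-\int Q}<1$ or the strict inequality in the Branicky estimate for $y$.

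The main obstacle is the algebraic manoeuvre in the second step. A direct Grönwall estimate on the nonlinear equation for $w$ would produce a Lipschitz constant for the cube (unbounded in general, since $A$ and $B$ are only a priori bounded by part (2)) and turn the additive structure $\rho+\gamma+\delta(t_1-t_0)$ into an exponentially amplified blow-up. The factorisation of $A^3-B^3$ together with the nonnegativity of the quadratic remainder are precisely what convert the nonlinearity into a \emph{linear} ODE in $w$ with a nonnegative coefficient $Q$; this is what forces $\rho$ and $\delta$ to enter additively rather than multiplicatively. Once this linearisation is in hand, the remainder of the argument (variation of parameters, barriers, triangle inequality) is entirely classical.
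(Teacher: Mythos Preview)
Your argument is correct. The paper does not actually prove this lemma; it quotes it from \cite{TheseRiccardo,BGDPRiccardo2022} and only remarks that it is ``obtained basically by a case analysis of error propagations in Lemma~\ref{bran:funda}''. That description suggests the cited proofs proceed by splitting into cases according to the position of $z(t)$ relative to the band $[b-\rho,b+\rho]$ and tracking the sign of the cubic term in each region. Your route is different and arguably cleaner: you isolate the unperturbed flow $y$ (for which Lemma~\ref{bran:funda} gives $|y(t_1)-b|<\gamma$ directly) and then observe that the factorisation $A^3-B^3=(A-B)(A^2+AB+B^2)$ with $A^2+AB+B^2\ge 0$ turns the equation for $w=z-y$ into a genuinely linear ODE with nonnegative damping $Q(t)$. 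Variation of parameters then delivers the additive bound $\rho+\delta(t_1-t_0)$ without any case distinction. What your approach buys is a uniform mechanism that explains \emph{why} the three error sources combine additively; what a case analysis buys is that it avoids introducing the auxiliary solution $y$.

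Two small points of care. In part~(2), your barrier argument is morally right but the phrasing with $\sigma=\inf\{t:z(t)>M(t)\}$ is slightly loose: the infimum need not have $z>M$ on a full right neighbourhood. The standard fix is to take instead $\sigma=\sup\{t<t^*:z(t)\le M(t)\}$ for a hypothetical violation time $t^*$, so that $z>M$ on all of $(\sigma,t^*]$ and the inequality $h'=z'-M'\le 0$ there forces $h(t^*)\le h(\sigma)=0$. In part~(1), your justification of the strict inequality is fine: the separable computation for $y$ gives $(y(t_1)-b)^{-2}=(y(t_0)-b)^{-2}+2c\int\phi>2c\int\phi$, hence $|y(t_1)-b|<\gamma$ regardless of whether $\int Q>0$.
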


\begin{proposition}[Simulating a discrete ODE by a continuous ODE] \label{the:manon:trick}
Assumge $\tu G$ is almost constant around  $\N\delta$ and $r$ is a rounding function around $\N\delta$  for some $\delta>0$. 
	Suppose that, in \eqref{code}, we replace function $\theta(z)$ and function $r(z)$ by some suitable approximations: we take $\theta(x)=\relu(x)$, $\theta_{\epsilon'}(x)$, $r_{\epsilon'(z)}$ such that $\theta(z)=_{\epsilon'}\theta(z)$, and $r_{\epsilon'}(x)=_{\epsilon'}$,  and take constant $c$ big enough. 	Then the solution of the obtained ODE will continuously simulate the discrete ODE \eqref{dode}, with same bounds as in the analysis in Section \ref{ideal:branicky}, i.e. with error at most $\epsilon$ if $\epsilon'$ is taken sufficiently small. To guarantee $\epsilon=2^{-n}$, it is sufficient to take $\epsilon'=2^{-p(n)}$  and $\theta_{\epsilon'}(x)=\relutanh(2^{p(n)},x)$ for some polynomial $p$. 
\end{proposition}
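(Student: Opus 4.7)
The plan is to mimic the ideal analysis of Section~\ref{ideal:branicky} half-interval by half-interval, applying Lemma~\ref{lem:quatre:cinq} at each step to absorb the deviations caused by replacing $\theta$ and $r$ by their smooth surrogates $\theta_{\epsilon'}$ and $r_{\epsilon'}$. First I would fix an integer step $n$ and analyse what happens on $[n,n+\tfrac{1}{2}]$. In the ideal regime $\tu y_2$ is frozen and $\tu y_1$ contracts toward the target $b=\tu G(r(\tu y_2(n,\tu x)),n,\tu x)$. Under the perturbations the cubic drives $\tu y_1$ toward $\bar b(t)=\tu G(r_{\epsilon'}(\tu y_2(t,\tu x)),r_{\epsilon'}(t),\tu x)$ instead, which lies within some $\rho=\mathcal{O}(\epsilon')$ of $b$ because $\tu G$ is almost constant around $\N\delta$ and the relevant arguments sit in a $\delta/4$-neighbourhood of $\N\delta$. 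Moreover, substituting $\theta_{\epsilon'}$ for $\theta$ injects into $\tu y_1'$ an additive perturbation $E(t)$ of size at most $c\epsilon' M^3$, where $M$ is a uniform bound on $|\tu G-\tu y_1|$ on the invariant region; a parallel computation shows $\tu y_2$ is no longer exactly frozen but drifts by at most $c\epsilon' M^3/2$.

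I would then apply Lemma~\ref{lem:quatre:cinq} with $\phi(t)=\theta(\sin(2\pi t))$, $[t_0,t_1]=[n,n+\tfrac{1}{2}]$, the parameters $\rho,\delta$ just described, and $\gamma$ a chosen half-step precision, selecting $c\ge 1/(2\gamma^{2}\int_{n}^{n+1/2}\phi)$. Item~(1) then yields
\[
|\tu y_1(n+\tfrac{1}{2},\tu x)-b|\ \le\ \rho+\gamma+\tfrac{\delta}{2},
\]
independently of the initial condition, while item~(2) confines $\tu y_1$ to a slab around the invariant region, justifying a posteriori the uniform bound $M$ used above by a standard bootstrap. A symmetric argument on $[n+\tfrac{1}{2},n+1]$ handles $\tu y_2$, so that after one full integer step both coordinates have approximately stored one more iterate of the discrete dynamics~\eqref{dode}.

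An induction on $n$ then propagates the half-step bound: if $(\tu y_1(n,\tu x),\tu y_2(n,\tu x))$ is within $\mu_n$ of $(\tu f(n,\tu x),\tu f(n,\tu x))$, the rounding performed by $r_{\epsilon'}$ brings us back inside the basin of the next target as long as $\mu_n\le \delta/4$, giving $\mu_{n+1}\le \mu_n+\mathcal{O}(\gamma+\epsilon' c)$, and errors do not blow up. Choosing $\gamma$ and $\epsilon'$ so that the per-step error is at most $2^{-n-1}$ and summing a geometric series produces the final bound $\epsilon=2^{-n}$. Explicitly, it suffices to take $\epsilon'=2^{-p(n)}$ with $p$ a polynomial that absorbs $c$, the number of half-steps considered, and the constants in the inductive estimate; the corresponding $\theta_{\epsilon'}(x)=\relutanh(2^{p(n)},x)$ then realises $\theta$ within $\epsilon'$ thanks to Lemma~\ref{lemmeRelu}.

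The main obstacle is exactly the bootstrap used above: one must guarantee that $(\tu y_1,\tu y_2)$ never escapes the neighbourhood of $\N\delta$ where $\tu G$ is almost constant and $r_{\epsilon'}$ behaves as a genuine rounding, for otherwise both $\rho$ and $M$ become uncontrolled and the per-step error ceases to be $\mathcal{O}(\epsilon')$. This is precisely what item~(2) of Lemma~\ref{lem:quatre:cinq} provides uniformly in $t\in[n,n+1]$; combined with the almost-constancy hypothesis on $\tu G$, it closes the induction and gives the stated polynomial dependence $p$ of the precision on the target error exponent $n$.
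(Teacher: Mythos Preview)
Your overall structure—half-interval analysis applying Lemma~\ref{lem:quatre:cinq}, then induction on the integer step—is exactly the paper's approach. The bootstrap via item~(2) to keep the trajectory inside the region where $r_{\epsilon'}$ acts as a rounding is also right, and your estimate $|E(t)|\le c\epsilon'M^{3}$ is more explicit than the paper's.

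There is, however, a genuine gap in your inductive error recursion. You write $\mu_{n+1}\le \mu_n+\mathcal{O}(\gamma+\epsilon'c)$, but this misses the whole point of the rounding. Once $\mu_n\le\delta/4$, the argument $r_{\epsilon'}(\tu y_2)$ fed into $\tu G$ is within $\epsilon'$ of the \emph{exact} value $\tu f(n,\tu x)$, not within $\mu_n+\epsilon'$ of it: the rounding erases the accumulated error $\mu_n$ and replaces it by a fresh $\epsilon'$. Consequently the target $\bar b(t)$ sits within $\rho=\mathcal{O}(\epsilon')$ of the true next iterate regardless of $\mu_n$, and Lemma~\ref{lem:quatre:cinq} yields $\mu_{n+1}\le \mathcal{O}(\gamma+\epsilon'c)$ with \emph{no} $\mu_n$ term. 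This is exactly the error-correction mechanism the paper emphasises (``rounding function corrects errors of order $\epsilon$\dots but the latter is corrected at the next step''), and it is what makes the bound uniform in the step index.

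With your recursion the error grows linearly in the number of steps, which is fatal: the number of iterations can be exponential in the input size in the intended applications, so no polynomial choice of $p$ in $\epsilon'=2^{-p(n)}$ could absorb it. Your ``summing a geometric series'' remark is then a symptom of this confusion—there is nothing geometric to sum; the correct argument gives a constant per-step bound that is simply maintained by the invariant $\mu_n\le\delta/4$.
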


\begin{proof}
	The key is that  involved errors propagate additively,  from Lemma \ref{lem:quatre:cinq}. Namely, they are in $\mathcal{O}(\epsilon')$, but they are then corrected from the reasoning in Section \ref{ideal:branicky}: rounding function corrects errors or order $\epsilon$ whenever its argument is at distance less than $1/4\delta$ of some $n\delta$ exactly as in the reasoning in Section \ref{ideal:branicky} (where $\delta=1$, even if now it introduces some error $\epsilon'$ at every step; but the latter is corrected at the next step). Observe that the involved constant $c$, is of order $2^{n}$. 
	
	We did up to that point the reasoning, assuming that parameter $\tu x$ is fixed. But if we consider a function that either depends in a controlled way on $t$ on that parameter,  or that is such that a small perturbation on its inputs does not change much its output (we mean in a controlled way, in the way we consider the rounding function $r$), then the analysis remains perfectly valid, even when this parameter may not be exact. 
\end{proof}

We claim that for all $n\in\N$, $\tu y_1(n,\tu x) =_\epsilon \tu y_2(n,\tu x) =_\epsilon \tu f(n,\tu x)$, and $\tu y_1(t+\frac12,\tu x) =_\epsilon \tu y_2(t,\tu x) =_\epsilon \tu f(n,\tu x)$ for all $t \in [n,n+\frac12]$.

For $n=0$, initially $\tu f(0,\tu x) = \tu y_{1}(0,\tu x) = \tu y_{2}(0,\tu x) = \tu g(\tu x)$.
For $t \in [n,n+1/2]$, we have $\theta(- \sin(2\pi t))=_{\epsilon'} 0$, and hence 
$\tu y_2' =_{\epsilon'} 0$, so $\tu y_2$ is  kept close to  value $\tu g(\tu x)$ for $t \in[0,\frac{1}{2}]$, with an error less than $\frac12 \epsilon'$.

Consequently, for $t \in [0,1/2]$, $r(\tu y_{2})$ is kept close to a constant value $\tu g(\tu x)$, when an error less than $\epsilon'$, if we choose $\epsilon' < \frac14 \delta$. Meanwhile, $r(t)$ is also at a value close to $n$ with error less than $\epsilon'$.

Consequently, on this interval, 
if we write $C(t) = c\theta(\sin(2\pi t))$, then the dynamics of $\tu y_1$ is given by a dynamic of the form of Lemma \ref{lem:quatre:cinq}.
This lemma states that $\tu y_1(t,\tu x)$ will approach $\tu G(\tu g(\tu x),0,\tu x)=\tu f(1,\tu x)$ on this interval, with an error of order
$\epsilon' + \epsilon' + \frac12 \epsilon'$. 

Here the hypothesis that $\tu G$ is almost constant around  $\N\delta$ means that its value is guaranteed to be at $\epsilon'$ from $\tu G(\tu g(\tu x),0,\tu x)$ on the interval. 

Thus, $\tu y_1(\frac{1}{2},\tu x) =_{\epsilon/2}  \tu f(1,\tu x)$, if we choose $\frac52 \epsilon' < \epsilon/2$. 
At $t=n+\frac{1}{2}$, $\tu y_1$ will hence have simulated one step of discrete ODE \eqref{dode}, with error less than $\epsilon/2$, 
and $\tu y_{2}$ will be close to $ \tu g(\tu x)$ with error less than $\epsilon' < \epsilon/2$.

Now, for $t \in [n+\frac{1}{2}, n+1]$ the roles of $\tu y_1$ and $\tu y_2$ are exchanged : $\tu y_1'(t, \tu x) =_{\epsilon'} 0$, so $\tu y_1$ is kept almost fixed, with a new error less than $\frac12 \epsilon'$.  In the same time $\tu y_2$ approaches $r(\tu y_{1})=\tu f(1, \tu x)$ by Lemma \ref{lem:quatre:cinq}, with some new error
of order less than $\frac52 \epsilon' < \epsilon/2$. 

Consequently, we get the property at rank $n+1$.

\begin{remark} Observe that, somehow, the constructions always replace every function with a function that does not change much locally (i.e., change in a controlled way). This is the key that provides a robust ODE as in Definition \ref{defRobODE}, leading to polynomial space complexity by Theorem \ref{main-direction-one}.
\end{remark}

In other words, whenever we have some discrete ODE as in \eqref{dode} defining some function $\tu f(t,\tu x)$,  we can construct some continuous ODE, using only functions from $\contClasslight$, such that one of its projection provides a function $\tu f(z, t, \tu x)$, with the guarantee $\tu f(2^{n}, t, \tu x)$ is $2^{-n}$ close to  $\tu f(n,\tu x)$, whenever $t$ is close (at distance less than $1/4$)  to some integer $n$. 

This works, as we can obtain such a $r_{\epsilon'}(x)$ from the functions from Corollary \ref{corobestiary}: Consider 
$r(x,2^{m})=\sigma_{2}(2^{m},2^{n},x+\frac14)$ that works over $\lfloor x \rfloor \in [-2^{n}+1 , 2^{n}]$, and observe that this is sufficient to apply the trick for the required functions, from the form of the considered discrete ODE in \cite{BlancBournezMFCS2023vantardise}. 

Except that we have a bootstrap problem: $\xi$ was defined using a discrete ODE in \cite{BlancBournezMFCS2023vantardise}, and as the functions from Corollary \ref{corobestiary} are defined above using $\xi$, we cannot apply this reasoning to get function $\xi$.  But the point is that for the special case of $\xi$, it is easy to construct a function in $\contClass$ that corresponds to some real extension of $\xi$, as we have functions such as $\sin(x)=\cos(\frac \pi2-x)$ and $\pi$. 
\begin{lemma} \label{lem:torture}
Function $\xi$ has some real extension in $\contClasslight$.
\end{lemma}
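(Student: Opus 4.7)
The plan is to give an explicit closed-form expression for a real extension of $\xi$ inside $\contClasslight$, using only the primitives $+,-,\times,\tanh,\cos,\pi,x/2,x/3$ and composition. The key tool, as emphasised just before the lemma, is that $\sin(x)=\cos(\pi/2-x)$ and $\pi$ are directly available, which unlocks direct trigonometric manipulations.

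First I would reduce the task to approximating $\{x\}$ on the safe zone: since $\{x\}\ge 1/8$ there, one has $\{x-1/8\}=\{x\}-1/8$, and moreover $|\sin(\pi x)|\ge \sin(\pi/8)>0$ uniformly on the safe zone (as $\sin(\pi\cdot)$ vanishes only at integers). Both $\sin(\pi x)$ and $\cos(\pi x)$ are then in $\contClasslight$, and any expression with $\sin(\pi x)$ in a denominator is safe provided the denominator is smoothly regularised.

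Next, I would use the classical identity
\[ \{x\}=\tfrac12-\tfrac1\pi\arctan\bigl(\cot(\pi x)\bigr),\qquad x\notin\Z, \]
which on the safe zone writes the target as a smooth function of $(\cos(\pi x),\sin(\pi x))$ with values in $[0,1]$. The two non-primitive operations involved---the quotient $\cot(\pi x)=\cos(\pi x)/\sin(\pi x)$ and $\arctan$ on the bounded range $[-\cot(\pi/8),\cot(\pi/8)]$---are each realised by a fixed-depth composition of $\contClasslight$ primitives whose internal constants depend on $M$. For the reciprocal of $\sin(\pi x)$ (bounded below by $\sin(\pi/8)$), one applies a constant number of Newton-style multiplicative corrections $v\mapsto v(2-uv)$, seeded from a coarse $M$-dependent polynomial guess; quadratic convergence drives the error below $2^{-m}$ in a fixed number of steps as soon as the seed is sharpened by $M$. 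For $\arctan$ on a bounded interval, one applies a fixed number of half-angle reductions $\arctan(y)=2\arctan\bigl(y/(1+\sqrt{1+y^2})\bigr)$, in which $\sqrt{\cdot}$ is built by the same Newton device, followed by a low-order polynomial on the final small argument; the precision is again driven by $M$-dependent scaling factors inside the $\tanh$ sigmoids. Combining everything via the triangle inequality yields $|\xi^{*}(2^m,2^n,x)-\{x-1/8\}|\le 2^{-m}$ on the safe zone, with $N=2^n$ used only to bound the range of $x$.

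The main obstacle is to carry out both the reciprocal and the $\arctan$ reductions with \emph{structurally bounded} composition depth (independent of $m$) while still achieving precision $2^{-m}$. The standard resolution, already in use throughout the paper, is to push all $m$-dependence into numerical constants---typically multiplicative sharpening factors inside $\tanh$ or initial seeds for Newton's iteration---rather than into the depth of the composition; thanks to the quadratic convergence of the Newton step, a constant number of arithmetic operations then suffices to reach precision $2^{-m}$, and the whole construction remains a finite composition of $\contClasslight$ primitives.
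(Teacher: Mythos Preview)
Your proposal has a real gap precisely at the step you flag as ``the main obstacle''. You claim that the reciprocal and $\arctan$ can be computed to precision $2^{-m}$ by a \emph{fixed} number of Newton steps, with all $m$-dependence pushed into the initial seed. But a Newton iteration of fixed depth $k$ from a seed with error $\epsilon_0$ yields error $\approx \epsilon_0^{2^k}$; to reach $2^{-m}$ you need a seed already accurate to $2^{-m/2^k}$. You say the seed is a ``coarse $M$-dependent polynomial guess'', yet a polynomial of fixed degree approximates $1/u$ (or $\sqrt{\cdot}$, or $\arctan$) on a fixed interval only to a fixed precision, no matter how you choose the coefficients. The $\tanh$-sharpening trick used elsewhere in the paper works because the targets there (sign, $\relu$, $\sig$) are piecewise linear; for genuinely curved targets like $1/u$ or $\arctan$, multiplying the $\tanh$ argument by $M$ does not drive the error to zero. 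So you have not actually produced the $M$-dependent seed, and without it the whole chain (reciprocal, square root, half-angle reduction, $\arctan$) stays at fixed precision independent of $m$. The same objection applies to the ``low-order polynomial on the final small argument'' for $\arctan$: with a fixed number of half-angle steps the argument is only reduced by a fixed factor, and a fixed-degree polynomial then has fixed error.

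The paper's proof takes a completely different route and avoids division and $\arctan$ altogether. Using $\sin(2\pi x)$, $\sin(4\pi x)$ and the sigmoid $\sigtanh$, it builds a $1$-periodic bump $t_e(x)$ that is $\approx 0$ on $[\lfloor x\rfloor,\lfloor x\rfloor+\tfrac78]$ and $\approx 1$ on $[\lfloor x\rfloor+\tfrac78,\lfloor x\rfloor+1]$. It then \emph{integrates} this bump via an ODE, $I_e(t)=8\int_0^t t_e(x)\,dx$, producing a staircase close to $\lfloor x\rfloor$ on the safe zone; subtracting from $x$ gives $\xi$. Note that this last step uses the ODE schema, so the construction actually lands in $\contClass$ rather than $\contClasslight$ (the paragraph immediately preceding the lemma says $\contClass$, and the statement of the lemma seems to be a slip). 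If you genuinely want a purely compositional construction in $\contClasslight$, you would need a new idea beyond fixed-depth Newton iteration.
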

\begin{proof}
If we succeed to obtain a function $i(2^{m},2^{n},x)$ that values $\lfloor x \rfloor$ whenever $x \in [\lfloor x \rfloor, \lfloor x \rfloor + \frac{3}{4}]$, we are done, as we can then obtain
$\xi(2^{m},2^{n},x)$ by considering  $\xi(2^{m},2^{n},x)=x+\frac78-i(2^{m},2^{n},x+\frac78)$. 

A possible solution is then the following: consider function $R_{e}(x):=\sig(x,0,e/2)$, and then $t_{e}(x)=(1-R_{e}(\sin(2\pi x)))((1-R_{e}(\sin(4\pi x)))$. If we put aside some interval of width $e/2$ around $\frac12$ and $\frac78$ where it takes values in $[0,1]$,  it values $0$ on $[\lfloor x \rfloor,\lfloor x \rfloor+\frac78]$, and then $1$ on $[\lfloor x \rfloor+\frac78,\lfloor x \rfloor+1]$ (see following graphical illustration). We can then consider $I_{e}(t)= 8 \int_{0}^{t} t_{e}(x) dx$ (i.e. the solution of ODE $l'_{e}= 8 t_{e}$), and then $i(t)=_{e.t} l_{e}(t)$. 
%
%
%
It is then sufficient to replace $\sig$ by $\sigtanh$, in the above expressions, in order to  control the error and make it smaller than $2^{-m}$. 

Here is a graphical representation of $R_{\frac1{10}}(x)$:

\begin{center}
	\includegraphics[width=0.5\textwidth]{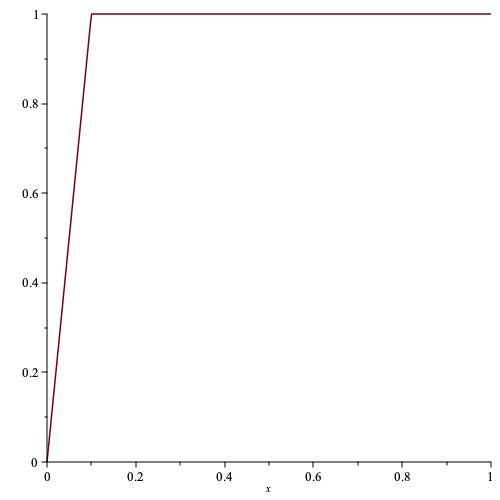}
\end{center}

Then of $R_{\frac1{10}}(\sin(2\pi x))$:

\begin{center}
	\includegraphics[width=0.5\textwidth]{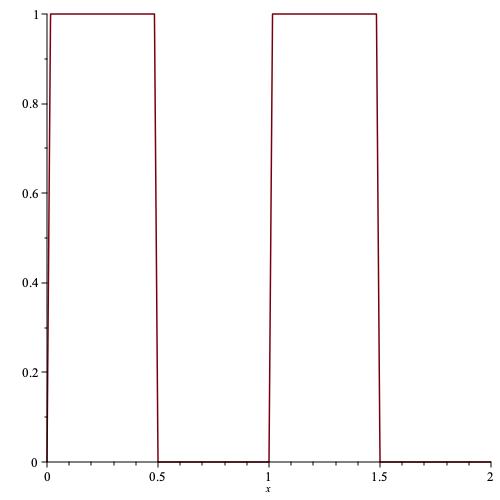}
\end{center}

and $R_{\frac1{10}}(\sin(4\pi x))$:

\begin{center}
	\includegraphics[width=0.5\textwidth]{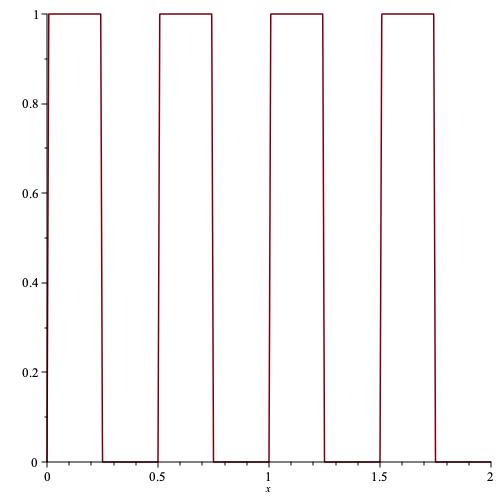}
\end{center}

We then get $t_{\frac1{10}}(x)$.

\begin{center}
	\includegraphics[width=0.5\textwidth]{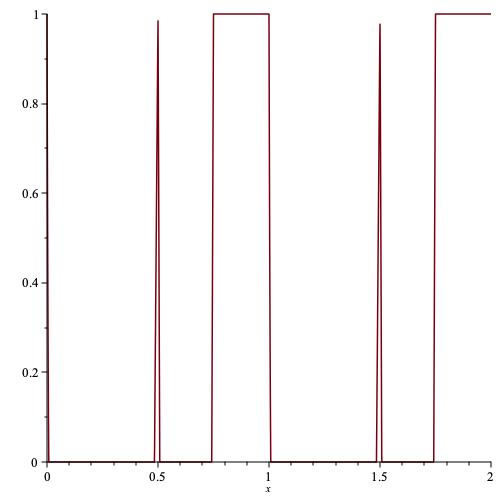}
\end{center}

Its integral is then close to $\frac18 \lfloor x \rfloor$ on $[\lfloor x \rfloor, \lfloor x \rfloor + \frac{3}{4}]$.

\end{proof}
%

Consequently, this is true that we can substitute a discrete ODE with a continuous ODE for the required functions $\Decode$ and $\EncodeMul$: just replace $\xi$ in the involved schemas by the above function.  Notice that we can also easily get a real extension of the function that maps $n$ to $2^{n}$.

\subsection{Working with all steps of a Turing machines}

We can then go from one step of a Turing machine, to arbitrarily many steps. We are following the idea of \cite{BlancBournezMFCS2023vantardise}, but replacing discrete ODEs with continuous ODEs.

\begin{theorem} \label{th:trendeux:new} 
	Consider some Turing machine $M$ that computes some function $f: \Sigma^{*} \to \Sigma^{*}$ in some polynomial space $S(\ell(\omega))$ on input $\omega$.  One can construct some function $\tilde{ f}: \N^{2} \times \R \to \R$ in $\contClass$ that does the same: we have $\tilde{ f}(2^{m}, 2^{S(\ell(\omega))},
	\encodagemot(\omega))$ that is at most $2^{-m}$ far from $\encodagemot(f(\omega))$.
\end{theorem}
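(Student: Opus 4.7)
The plan is to iterate the one-step simulation function $\bar{\Next}$ provided by Lemma \ref{robrob} for the maximum number of steps that $M$ can take on input $\omega$, and then project the resulting configuration onto its right-tape component. Since $M$ uses space $S = S(\ell(\omega))$, it halts after at most $T(|\omega|) \le 2^{c S}$ steps for some constant $c$ depending only on $M$ (that many reachable configurations); once halted, further applications of $\bar{\Next}$ leave the configuration unchanged, so iterating for exactly $T$ steps is safe. Starting from $\bar{C_0} = \encodageconfiguration(C_0)$ with $C_0$ the initial configuration encoding $\omega$, the target is to produce $\bar{C_T}$ and then read off its right-tape component, which by construction equals $\encodagemot(f(\omega))$.

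Concretely, I would set up a discrete IVP of the form \eqref{dode}, with $\tu g(\tu x) = \bar{C_0}$ and one-step map $\tu u(\tu v, n, \tu x) = \bar{\Next}(2^{m'}, 2^S, \tu v) - \tu v$, then apply Proposition \ref{the:manon:trick} to turn this iteration into a continuous ODE whose right-hand side lies entirely in $\contClasslight$. This uses the real extension of $\xi$ granted by Lemma \ref{lem:torture} together with the substitutes for $\Decode$ and $\EncodeMul$ constructed earlier in this section. The internal precision $m'$ is chosen polynomial in $m + S$ so that the per-half-period error introduced by the Branicky-style simulation stays strictly below the robustness margin $4^{-(S+2)}$ of $\bar{\Next}$, ensuring that at the end of each integer step the state has been re-snapped onto a $2^{-m'}$ neighbourhood of the correct discrete configuration rather than drifting off.

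To conclude, I would verify that the resulting IVP fits the schema \robODE{} of Definition \ref{defRobODE}. Polynomial numerical stability follows by combining the per-step bound of Lemma \ref{lem:quatre:cinq}, the self-correcting behaviour of the rounding function $r_{\epsilon'}$ at the end of each half-period (which erases any perturbation of size smaller than $\tfrac{1}{4}\delta$ by snapping back to $\N \delta$), and the robustness of $\bar{\Next}$ under input perturbations of size at most $4^{-(S+2)}$; the recursive halving clause of Definition \ref{defRobODE} is then automatic. Defining $\tilde f(2^m, 2^S, \encodagemot(\omega))$ as the projection of the flow at time $t = T$ onto the right-tape coordinate, post-composed with the obvious read-off, yields a function built from base functions of $\contClass$ with one application of the \robODE{} schema, hence a function of $\contClass$.

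The main obstacle is propagating stability uniformly over the exponential time horizon $T$: Proposition \ref{the:manon:trick} gives only a local per-step bound, and to have it hold for $T = 2^{c S}$ iterations one has to check that the iterate of $\bar{\Next}$ applied to an encoded configuration is indeed ``almost constant around $\N \delta$'' in the sense required, and simultaneously tune the internal precision $m'$ and the Branicky constant $c$ so that the cumulative error remains polynomial in $m + S + \ell(\omega)$. Once this is established, Theorem \ref{main-direction-one} applied to the constructed ODE closes the $\subseteq$ direction of Theorem \ref{th:mainone} in the converse way to what the rest of Section \ref{sec:cstm} supplies.
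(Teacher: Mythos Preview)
Your approach is essentially the paper's: set up the discrete execution ODE with one-step map $\bar{\Next}(2^{m'},2^{S},\cdot)-\mathrm{id}$, invoke the robustness margin $4^{-(S+2)}$ from Lemma~\ref{robrob} so that the rounding at each half-period prevents error accumulation over the exponential time horizon, and apply Proposition~\ref{the:manon:trick} to obtain a continuous ODE in $\contClass$. Two minor points: the substitutes for $\Decode$ and $\EncodeMul$ are not needed here (the input and output of $\tilde f$ are already in $\encodagemot$-form; those functions enter only later in the proof of Theorem~\ref{th:mainone}), and your closing sentence is confused about directions---Theorem~\ref{th:trendeux:new} feeds the $\supseteq$ direction of Theorem~\ref{th:mainone}, and Theorem~\ref{main-direction-one} plays no role in its proof.
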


\begin{proof}
	We denote by $\mathcal{M}$ the Turing machine computing $f$. Similarly to the arguments in \cite{BlancBournezMFCS2023vantardise}, we can state that there exists a function $Exec$ solution of a robust linear discrete ODE (E) that "computes" the execution of $\mathcal{M}$, with $C_{init}$ the initial configuration : 
	\begin{equation*}
		(E):
		\left\{
		\begin{aligned}
			&Exec(2^m, 0, 2^S,C_{init}) = C_{init} \\
			&\frac{\delta Exec(2^m, t, 2^S,C_{init})}{\delta t} = Next(2^m, 2^S, Exec(2^m, t, 2^S, C_{init})) - Exec(2^m, t, 2^S, C_{init})
		\end{aligned}
		\right.
	\end{equation*}
	
	For any configuration $\overline{C}$ of $\mathcal{M}$, let write 	$F(\overline{C}) )=F(2^m, 2^S,\overline{C}) = Next(2^m, 2^S, \overline{C}) + \overline{C}$, associated to the righthand side of the above discrete ODE. 	Denoting by $\tilde{C}$ the errorless encoding of the configuration $C$,  from the constructions of \cite{BlancBournezMFCS2023vantardise} (Lemma \ref{robrob}), it is true that 
	if $\left| \overline{C} - \tilde{C} \right| \leq 4^{-(S+2)} $, then  $\left|F(\overline{C})-F(\tilde{C})\right| \leq 4^{-(S+2)}$. 	$F$ does not change much locally on the space of configuration. Denoting by $S$ the space of $\mathcal{M}$,
	and replacing $m$ by $m + 2S + 4$ as in \cite{BlancBournezMFCS2023vantardise}, we have
	$\left|Next(2^m, 2^S, \overline{C}) - \overline{C} \right| \leq 4^{-(S+2)}$. So at each step of the TM,
	the error is fixed (and bounded).
	We can then apply the above arguments (Proposition \ref{the:manon:trick})  to simulate continuously (E), with some controlled error: all involved quantities have encoding polynomials in the size of the inputs. 
\end{proof}

\section{Proof of Theorem \ref{th:mainone}}
\label{sec:cstm}

\begin{proof}
\noindent $\subseteq$:
	In this direction, we just need to prove that $\contClass$ contains only functions over the reals that are computable in polynomial space.  Indeed, then for a function $\tu f: \R^{d} \to \R^{d'}$ sending every integer $\vn \in \N^{d}$ to the vicinity of some integer of  $\N^{d}$,  at a distance less than $1/4$, by approximating its value with precision $1/4$ on its input arguments, and taking the closest integer, we will get a function from the integers to the integers, that corrresponds to $\DP(f)$, and that will be in $\FPspace  \cap \N^{\N}$.
	
	This is indeed the case, since i) all the base functions of $\contClasslim$ are in $\FPspace$: they are even in $\FPtime$, see \cite{Ko91} ii)  $\R^\R \cap \FPspace$ is stable under $\composition$. 
	iii)  stability under $\robODE$ follows from Theorem \ref{main-direction-one}. 

%

\noindent  $\supseteq$: 
	In the other direction, we use an argument similar to \cite{BlancBournezMFCS2023vantardise}: namely, as the function is polynomial space computable, this means that there is a polynomial space computable function $g: \N^{d''+1} \to \{\symboleun,\symboledeux\}^{*}$ so that on $\tu m,2^{n}$, it provides the encoding $\bar{\phi(\tu m,n)}$ of some dyadic  $\phi(\tu m,n)$ with $\|\phi(\tu m,n)-\tu f(\tu m)\| \le 2^{-n}$ for all $\tu m$.  
	The problem is then to decode, compute and encode the result to produce this dyadic. 
	More precisely, from Theorem \ref{th:trendeux:new},  we get $\tilde{g}$ with  
	$$|\tilde{g}(2^{e},2^{p(max(\tu m,n))}, \Decode(2^{e},\tu m,n)) -\encodagemot(g(\tu m,n)) | \le 2^{-e}$$ for some polynomial $p$ 
	corresponding to the time required to compute  $g$, and $e=\max(p(max(\tu m,n)),n)$. Then we  need to transform the  value to the correct dyadic: we mean \\
	$\tilde{\tu f}(\tu m,n)=\EncodeMul(2^{e},2^{t},\tilde{g}(2^{e},2^{t},\Decode(2^{e},\tu m,n)),1)$, where \\$t=p(max(\tu m,n))$,   $e=\max(p(max(\tu m,n)),n)$ provides a solution with 
	$\|\tilde{\tu f}(\tu m,2^{n})-\tu f(\tu m)\| \le 2^{-n}.$
\end{proof}

\section{Proof of Theorem \ref{th:main:twopSpace} } \label{sectionMain}

\begin{proof}
\noindent  $\subseteq$:
	To prove that $\contClasslim \subseteq \R^\R \cap \FPspace$, we only need to add to the previous arguments that 
	$\R^\R \cap \FPspace$ is also stable under $\MANONlim$.
%
%
%
%

\noindent   $\supseteq$:  In this direction, we have the same issue as in \cite{BlancBournezMFCS2023vantardise}: the strategy of decoding, working with the Turing machine, and encoding is not guaranteed to work for all inputs. But, we can solve it by using an adaptative barycenter technique as in \cite{BlancBournezMFCS2023vantardise}. 

We recall the principle here for a function whose domain is $\R$, but it can be generalised to $\R^d$.
	The idea is to construct some function $\lambda:\N^2\times\R \rightarrow [0,1]$ definable in $\contClasslight $ as in Corollary \ref{corobestiary}, but with a continuous ODE : 
	Adapting the proof from \cite{BlancBournezMFCS2023vantardise} and using the simulation of $\xi$ in our continuous framework, we can consider  $\lambda(2^m, N,x)= \varPsi(\Xi(2^{m+1}, N,x-9/8))$ where $\varPsi(x)=\sigtanh(2^{m+1},1/4,1/2,x)$. In particular, by definition, $\lambda \in \contClasslight$.
	Thus, by Lemma \ref{thXi}, if $\lambda(2^m, N, x) =_{2^{-m}} 0$, then 
	$\sigma_2(2^m, N,x) =_{2^{-m}} \lfloor x \rfloor $. If $\lambda(2^m, N, x) =_{2^{-m}} 1$, then 
	$\sigma_1(2^m, N,x) =_{2^{-m}} \lfloor x \rfloor $ and if $\lambda(2^m, N, x)\in (0,1)$, 
	then $\sigma_1(2^m, N,x) =_{2^{-m}} \lfloor x \rfloor +1 $ and furthermore  $\sigma_2(2^m, N,x) =_{2^{-m}} \lfloor x \rfloor $. 
	So, $$\lambda(\cdot, 2^n,x) \textit{Formula}_{1}(x,u,M,n) + (1-\lambda(\cdot, 2^n,n)) \textit{Formula}_{2}(x,u,M,n) $$ and we are
	sure to be close (up to some bounded error) to some $2^{-m}$ approximation of a function $f$.
\end{proof}

%
%
	
	
	

\section{Conclusion} \label{sec:conclusion} 

We characterised polynomial space using an algebraically defined class of functions, using a finite set of basic functions, closure under composition, and a schema for defining functions from robust ODEs. We proposed a concept of robust ODEs solvable in polynomial space. As far as we know, this is an original method for solving ODEs optimising space. It is based on classical constructions such as Savitch's theorem.  We extended existing characterisations to a characterisation of functions over the reals and not only over the integers.

The interesting message from our statements is that we provide a clear and simple concept associated with continuous ODEs for space: space corresponds to the precision for numerically stable systems. Hence, compiled with \cite{JournalACM2017}, we now know the length of solutions corresponds to time and precision to memory.  

Considering future work: 
We have an algebraically defined class of functions. It remains to know whether this could be transferred at the level of polynomial ODE. We know that soluting of polynomial ODEs defined a very robust class of functions, stable by many operations: sum, products, division, ODE solving, etc: see \cite{TheseDaniel,InformationAndComputation2017}. Hence, all the base functions we consider in our algebraic class can be turned into polynomial ODEs, by adding some variables. 
It would be interesting to understand if we could define space complexity directly at the level of polynomial ODEs, using precision. 

Recently, a characterisation of $\Pspace$ was obtained for polynomial ODEs using rather ad-hoc definitions in \cite{TheseRiccardo,BGDPRiccardo2022} and working over a non-compact space. Could our characterisation be put at this simplest class of ODEs, but working with precision? The point is that the characterisation there uses unbounded domains, hence, precision is harder to interpret in their constructions, where the schemas are somehow done to control errors. 

Of course, from our statements, adding any $\FPspace$-computable function over the reals among the base functions would not change the class. However, we did not intend to minimise the number of base functions. For example, $\tanh(t)$ is solution of ODE $f'=1+f^{2}$ and $\cos(t)$ can be obtained by the two dimensional ODE $y_{1}'=-y_{2}$, $y_{2}'=y_{1}$.  Minimising the number of base functions is also left for future work. We believe that even in this settings, proving space complexity corresponds to precision is already significant, independently of this question of a minimal set of base functions. 

\newpage
\bibliographystyle{plainurl}

\bibliography{./bournez,./perso}

\begin{thebibliography}{10}

\bibitem{Abe71}
Oliver Aberth.
\newblock The failure in computable analysis of a classical existence theorem
  for differential equations.
\newblock {\em Proceedings of the American Mathematical Society}, 30:151--156,
  1971.

\bibitem{bertschinger10training}
Daniel Bertschinger, Christoph Hertrich, Paul Jungeblut, Tillmann Miltzow, and
  Simon Weber.
\newblock {Training Fully Connected Neural Networks is $\exists
  \mathbb{R}$-Complete}.
\newblock {\em Preprint, https://doi. org/10.48550/arXiv}, 2204, 2022.

\bibitem{BlancBournezMCU22vantardise}
Manon Blanc and Olivier Bournez.
\newblock A characterization of polynomial time computable functions from the
  integers to the reals using discrete ordinary differential equations.
\newblock In J{\'{e}}r{\^{o}}me Durand{-}Lose and Gy{\"{o}}rgy Vaszil, editors,
  {\em Machines, Computations, and Universality - 9th International Conference,
  {MCU} 2022, Debrecen, Hungary, August 31 - September 2, 2022, Proceedings},
  volume 13419 of {\em Lecture Notes in Computer Science}, pages 58--74.
  Springer, 2022.
\newblock {MCU'22 Best Student Paper Award}.
\newblock \href {https://doi.org/10.1007/978-3-031-13502-6\_4}
  {\path{doi:10.1007/978-3-031-13502-6\_4}}.

\bibitem{BlancBournezMFCS2023vantardise}
Manon Blanc and Olivier Bournez.
\newblock A characterisation of functions computable in polynomial time and
  space over the reals with discrete ordinary differential equations:
  Simulation of turing machines with analytic discrete odes ({MFCS'2023} best
  paper award).
\newblock In {\em 48th International Symposium on Mathematical Foundations of
  Computer Science (MFCS 2023)}, volume 272. Schloss Dagstuhl-Leibniz-Zentrum
  f{\"u}r Informatik, 2023.

\bibitem{BlancBournezMFCS2023Journal}
Manon Blanc and Olivier Bournez.
\newblock Simulation of turing machines with analytic discrete {ODEs}: {FPTIME}
  and {FPSPACE} over the reals characterised with discrete ordinary
  differential equations.
\newblock {\em arXiv preprint arXiv:2307.11747}, 2023.

\bibitem{csl24}
Manon Blanc and Olivier Bournez.
\newblock Quantifiying the robustness of dynamical systems. relating time and
  space to length and precision.
\newblock In {\em Computer Science Logic CSL'24}, Naples, Italy, February 2024.

\bibitem{BCSS98}
Lenore Blum, Felipe Cucker, Mike Shub, and Steve Smale.
\newblock {\em Complexity and Real Computation}.
\newblock Springer, 1998.

\bibitem{BSS89}
Lenore Blum, Mike Shub, and Steve Smale.
\newblock On a theory of computation and complexity over the real numbers; {NP}
  completeness, recursive functions and universal machines.
\newblock {\em Bulletin of the American Mathematical Society}, 21(1):1--46,
  July 1989.

\bibitem{MFCS2019}
Olivier Bournez and Arnaud Durand.
\newblock Recursion schemes, discrete differential equations and
  characterization of polynomial time computation.
\newblock In Peter Rossmanith, Pinar Heggernes, and Joost{-}Pieter Katoen,
  editors, {\em 44th Int Symposium on Mathematical Foundations of Computer
  Science, {MFCS}}, volume 138 of {\em LIPIcs}, pages 23:1--23:14. Schloss
  Dagstuhl - Leibniz-Zentrum f{\"{u}}r Informatik, 2019.

\bibitem{MFCSJournal}
Olivier Bournez and Arnaud Durand.
\newblock A characterization of functions over the integers computable in
  polynomial time using discrete ordinary differential equations.
\newblock {\em Computational Complexity}, 32(2):7, 2023.

\bibitem{BGDPRiccardo2022}
Olivier Bournez, Riccardo Gozzi, Daniel~S Gra{\c{c}}a, and Amaury Pouly.
\newblock A continuous characterization of {PSPACE} using polynomial ordinary
  differential equations.
\newblock {\em Journal of Complexity}, 77:101755, august 2023.
\newblock URL:
  \url{https://www.sciencedirect.com/science/article/pii/S0885064X23000249?dgcid)=author}.

\bibitem{InformationAndComputation2017}
Olivier {Bournez}, Daniel {Gra{\c c}a}, and Amaury {Pouly}.
\newblock {On the Functions Generated by the General Purpose Analog Computer}.
\newblock {\em Information and Computation}, 257:34--57, 2017.
\newblock \href {http://arxiv.org/abs/1602.00546} {\path{arXiv:1602.00546}},
  \href {https://doi.org/10.1016/j.ic.2017.09.015}
  {\path{doi:10.1016/j.ic.2017.09.015}}.

\bibitem{JournalOfComplexity2016}
Olivier Bournez, Daniel Gra{\c c}a, and Amaury Pouly.
\newblock Computing with polynomial ordinary differential equations.
\newblock {\em Journal of Complexity}, 36:106 -- 140, 2016.
\newblock URL:
  \url{http://www.sciencedirect.com/science/article/pii/S0885064X16300255},
  \href {https://doi.org/http://dx.doi.org/10.1016/j.jco.2016.05.002}
  {\path{doi:http://dx.doi.org/10.1016/j.jco.2016.05.002}}.

\bibitem{bournez2012complexity}
Olivier Bournez, Daniel~S Gra{\c{c}}a, and Amaury Pouly.
\newblock On the complexity of solving initial value problems.
\newblock In {\em Proceedings of the 37th International Symposium on Symbolic
  and Algebraic Computation}, pages 115--121, 2012.

\bibitem{ICALP2016vantardise}
Olivier Bournez, Daniel~S. Gra{\c{c}}a, and Amaury Pouly.
\newblock {Polynomial Time corresponds to Solutions of Polynomial Ordinary
  Differential Equations of Polynomial Length. The General Purpose Analog
  Computer and Computable Analysis are two efficiently equivalent models of
  computations ({ICALP'2017} Track B best paper award)}.
\newblock In {\em 43rd International Colloquium on Automata, Languages, and
  Programming, {ICALP} 2016, July 11-15, 2016, Rome, Italy}, volume~55 of {\em
  LIPIcs}, pages 109:1--109:15. Schloss Dagstuhl - Leibniz-Zentrum fuer
  Informatik, 2016.

\bibitem{JournalACM2017}
Olivier Bournez, Daniel~S. Gra{\c c}a, and Amaury Pouly.
\newblock {Polynomial Time corresponds to Solutions of Polynomial Ordinary
  Differential Equations of Polynomial Length}.
\newblock {\em Journal of the ACM}, 64(6):38:1--38:76, 2017.
\newblock \href {https://doi.org/10.1145/3127496} {\path{doi:10.1145/3127496}}.

\bibitem{ICALP2017}
Olivier Bournez and Amaury Pouly.
\newblock A universal ordinary differential equation.
\newblock In {\em International Colloquium on Automata Language Programming,
  ICALP'2017}, 2017.

\bibitem{bournez2021survey}
Olivier Bournez and Amaury Pouly.
\newblock A survey on analog models of computation.
\newblock In {\em Handbook of Computability and Complexity in Analysis}, pages
  173--226. Springer, 2021.

\bibitem{Bra95}
M.~S. Branicky.
\newblock Universal computation and other capabilities of hybrid and continuous
  dynamical systems.
\newblock {\em Theoretical Computer Science}, 138(1):67--100, 6~February 1995.

\bibitem{brattka2008tutorial}
Vasco Brattka, Peter Hertling, and Klaus Weihrauch.
\newblock A tutorial on computable analysis.
\newblock In {\em New computational paradigms}, pages 425--491. Springer, 2008.

\bibitem{braverman2005hyperbolic}
Mark Braverman.
\newblock Hyperbolic {Julia} sets are poly-time computable.
\newblock {\em Electronic Notes in Theoretical Computer Science}, 120:17--30,
  2005.

\bibitem{CMC00}
Manuel~L. Campagnolo, Cristopher Moore, and Jos{\'e}~F{\'e}lix Costa.
\newblock Iteration, inequalities, and differentiability in analog computers.
\newblock {\em Journal of Complexity}, 16(4):642--660, 2000.

\bibitem{chen2018neural}
Tian~Qi Chen, Yulia Rubanova, Jesse Bettencourt, and David~K Duvenaud.
\newblock Neural ordinary differential equations.
\newblock In {\em Advances in Neural Information Processing Systems}, pages
  6571--6583, 2018.

\bibitem{Clo95}
P.~Clote.
\newblock Computational models and function algebras.
\newblock In Edward~R. Griffor, editor, {\em Handbook of Computability Theory},
  pages 589--681. North-Holland, Amsterdam, 1998.

\bibitem{clote2013boolean}
Peter Clote and Evangelos Kranakis.
\newblock {\em Boolean functions and computation models}.
\newblock Springer Science \& Business Media, 2013.

\bibitem{Cob65}
Alan Cobham.
\newblock The intrinsic computational difficulty of functions.
\newblock In Y.~Bar-Hillel, editor, {\em Proceedings of the International
  Conference on Logic, Methodology, and Philosophy of Science}, pages 24--30.
  North-Holland, Amsterdam, 1962.

\bibitem{collins2008effectivesimpl}
Pieter Collins and Daniel~S Gra{\c{c}}a.
\newblock Effective computability of solutions of ordinary differential
  equations the thousand monkeys approach.
\newblock {\em Electronic Notes in Theoretical Computer Science}, 221:103--114,
  2008.

\bibitem{collins2009effective}
Pieter Collins and Daniel~S. Gra{\c{c}}a.
\newblock {Effective Computability of Solutions of Differential Inclusions The
  Ten Thousand Monkeys Approach}.
\newblock {\em Journal of Universal Computer Science}, 15(6):1162--1185, 2009.

\bibitem{Dem96}
J.-P. Demailly.
\newblock {\em Analyse Num\'{e}rique et \'{E}quations Diff\'{e}rentielles}.
\newblock Presses Universitaires de Grenoble, 1996.

\bibitem{Dev89a}
R.~L. Devaney.
\newblock {\em An Introduction to Chaotic Dynamical Systems}.
\newblock Addison-Wesley, 2nd edition, 1989.

\bibitem{etessami2010complexity}
Kousha Etessami and Mihalis Yannakakis.
\newblock On the complexity of {Nash} equilibria and other fixed points.
\newblock {\em SIAM Journal on Computing}, 39(6):2531--2597, 2010.

\bibitem{CMSB17vantardise}
Francois Fages, Guillaume Le~Guludec, Olivier Bournez, and Amaury Pouly.
\newblock Strong turing completeness of continuous chemical reaction networks
  and compilation of mixed analog-digital programs.
\newblock In {\em Computational Methods in Systems Biology-CMSB 2017}, 2017.
\newblock {CMSB'2017 Best Paper Award}.

\bibitem{TheseRiccardo}
Riccardo Gozzi.
\newblock {\em Analog Characterization of Complexity Classes}.
\newblock PhD thesis, Instituto Superior T{\'e}cnico, Lisbon, Portugal and
  University of Algarve, Faro, Portugal, 2022.

\bibitem{TheseDaniel}
Daniel~S. Gra{\c c}a.
\newblock {\em Computability with Polynomial Differential Equations}.
\newblock PhD thesis, Instituto Superior T{\'e}cnico, 2007.

\bibitem{dsg05}
Daniel~S. Gra{\c c}a, Manuel~L. Campagnolo, and Jorge Buescu.
\newblock Robust simulations of {T}uring machines with analytic maps and flows.
\newblock In B.~Cooper, B.~Loewe, and L.~Torenvliet, editors, {\em Proceedings
  of CiE'05, New Computational Paradigms}, volume 3526 of {\em Lecture Notes in
  Computer Science}, pages 169--179. Springer-Verlag, 2005.

\bibitem{GraCos03}
Daniel~S. Gra{\c c}a and Jos{\'e}~F{\'e}lix Costa.
\newblock Analog computers and recursive functions over the reals.
\newblock 19(5):644--664, 2003.

\bibitem{dsg06a}
Daniel~S. Gra{\c c}a, N.~Zhong, and J.~Buescu.
\newblock Computability, noncomputability and undecidability of maximal
  intervals of {IVP}s.
\newblock {\em Transactions of the American Mathematical Society}, 2006.
\newblock To appear.

\bibitem{GracaZhongHandbook}
Daniel~S. Gra{\c c}a and Ning Zhong.
\newblock {\em Handbook of Computability and Complexity in Analysis}, chapter
  Computability of Differential Equations.
\newblock Springer., 2018.

\bibitem{gracca2023robust}
Daniel~S Gra{\c{c}}a and Ning Zhong.
\newblock Robust non-computability and stability of dynamical systems.
\newblock {\em arXiv preprint arXiv:2305.14448}, 2023.

\bibitem{HSD03}
Morris~W. Hirsch, Stephen Smale, and Robert Devaney.
\newblock {\em Differential Equations, Dynamical Systems, and an Introduction
  to Chaos}.
\newblock Elsevier Academic Press, 2003.

\bibitem{kawamura2009lipschitz}
Akitoshi Kawamura.
\newblock {Lipschitz continuous ordinary differential equations are
  polynomial-space complete}.
\newblock In {\em 2009 24th Annual IEEE Conference on Computational
  Complexity}, pages 149--160. IEEE, 2009.

\bibitem{kawamura2014computational}
Akitoshi Kawamura, Hiroyuki Ota, Carsten R{\"o}snick, and Martin Ziegler.
\newblock Computational complexity of smooth differential equations.
\newblock {\em Logical Methods in Computer Science}, 10, 2014.

\bibitem{kawamura2018parameterized}
Akitoshi Kawamura, Florian Steinberg, and Holger Thies.
\newblock Parameterized complexity for uniform operators on multidimensional
  analytic functions and ode solving.
\newblock In {\em International Workshop on Logic, Language, Information, and
  Computation}, pages 223--236. Springer, 2018.

\bibitem{kidger2022neural}
Patrick Kidger.
\newblock On neural differential equations.
\newblock {\em arXiv preprint arXiv:2202.02435}, 2022.

\bibitem{Ko83}
Ker-I Ko.
\newblock On the computational complexity of ordinary differential equations.
\newblock {\em Information and Control}, 58(1-3):157--194,
  July/August/September 1983.

\bibitem{Ko91}
Ker-I Ko.
\newblock {\em Complexity Theory of Real Functions}.
\newblock Progress in Theoretical Computer Science. Birkha{\"u}ser, Boston,
  1991.

\bibitem{Lorenz63}
E.~N. Lorenz.
\newblock Deterministic nonperiodic flow.
\newblock {\em Journal of the Atmospheric Science}, 20:130--141, 1963.

\bibitem{Miller70}
Webb Miller.
\newblock Recursive function theory and numerical analysis.
\newblock {\em Journal of Computer and System Sciences}, 4(5):465--472, October
  1970.

\bibitem{milnor1985concept}
John Milnor.
\newblock On the concept of attractor.
\newblock {\em Communications in Mathematical Physics}, 99:177--195, 1985.

\bibitem{PoulyGraca16}
Amaury Pouly and Daniel Gra{\c c}a.
\newblock Computational complexity of solving polynomial differential equations
  over unbounded domains.
\newblock {\em Theoretical Computer Science}, 2016.

\bibitem{PouRic79}
Marian~Boykan Pour-El and J.~Ian Richards.
\newblock A computable ordinary differential equation which possesses no
  computable solution.
\newblock {\em Annals of Mathematical Logic}, 17:61--90, 1979.

\bibitem{rojas2023algorithmic}
Cristobal Rojas and Mathieu Sablik.
\newblock On the algorithmic descriptive complexity of attractors in
  topological dynamics.
\newblock {\em arXiv preprint arXiv:2311.15234}, 2023.

\bibitem{Ruo96}
Keijo Ruohonen.
\newblock An effective {C}auchy-{P}eano existence theorem for unique solutions.
\newblock {\em International Journal of Foundations of Computer Science},
  7(2):151--160, 1996.

\bibitem{Sha41}
Claude~E. Shannon.
\newblock Mathematical theory of the differential analyser.
\newblock {\em Journal of Mathematics and Physics MIT}, 20:337--354, 1941.

\bibitem{Sip97}
Michael Sipser.
\newblock {\em Introduction to the Theory of Computation}.
\newblock PWS Publishing Company, 1997.

\bibitem{HolgerThiesPhD}
Holger Thies.
\newblock {\em Uniform computational complexity of ordinary differential
  equations with applications to dynamical systems and exact real arithmetic}.
\newblock PhD thesis, University of Tokyo, Graduate School of Arts and
  Sciences, 2018.

\bibitem{tucker2002rigorous}
Warwick Tucker.
\newblock A rigorous {ODE} solver and {Smale}'s 14th problem.
\newblock {\em Foundations of Computational Mathematics}, 2:53--117, 2002.

\bibitem{Tur36}
Alan {Turing}.
\newblock On computable numbers, with an application to the
  {E}ntscheidungsproblem.
\newblock {\em Proceedings of the London Mathematical Society}, 42(2):230--265,
  1936.
\newblock Reprinted in Martin Davis. The Undecidable: Basic Papers on
  Undecidable Propositions, Unsolvable Problems and Computable Functions. Raven
  Press, 1965.

\bibitem{LivreAnalogcomputing}
Bernd Ulmann.
\newblock {\em Analog computing}.
\newblock Walter de Gruyter, 2013.

\bibitem{ulmann2020analog}
Bernd Ulmann.
\newblock {\em Analog and hybrid computer programming}.
\newblock De Gruyter Oldenbourg, 2020.

\bibitem{veritasum}
Veritasum.
\newblock Future computers will be radically different (analog computing).
\newblock Youtube video, 2022.
\newblock URL: \url{https://www.youtube.com/watch?v=GVsUOuSjvcg}.

\bibitem{Wei00}
Klaus Weihrauch.
\newblock {\em Computable Analysis: an Introduction}.
\newblock Springer, 2000.

\end{thebibliography}

\newpage

\section{Appendix}

\subsection{Very basics of computable analysis} \label{appendix:ca}

In very short, repeating \cite{csl24}, the idea behind classical computability and complexity is to fix some representations of objects (such as graphs, integers, etc.) using finite words over some finite alphabet, say $\Sigma=\{0,1\}$ and to say that such an object is computable when such a representation can be produced using a Turing machine. The computable analysis is designed to be able to also talk about objects such as real numbers, functions over the reals, closed subsets, compacts subsets, \dots, which cannot be represented by finite words over $\Sigma$ (a clear reason for it is that such words are countable while the set $\R$, for example, is not).  However, they can be represented by some infinite words over $\Sigma$ and the idea is to fix such representations for these various objects, called \emph{names}, with suitable computable properties. In particular, in all the following proposed representations, it was proved that an object is computable iff 
it has some computable representation.

\begin{remark}
	Here the notion of computability involved is one of Type 2 Turing machines, that is to say, computability over possibly infinite words: the idea is that such a machine has some read-only input tape(s), that contains the input(s), which can correspond to either a finite or infinite word(s), a read-write working tape and one (or several) write-only output tape(s). It evolves as a classical Turing machine, the only difference being that we consider it outputs an infinite word when it writes forever the symbols of that word on its (or one of its) write-only infinite output tape(s): see \cite{Wei00} for details.
\end{remark}

A name for a point $\vx \in \R^{d}$ is a sequence $(I_{n})$ of nested open rational balls  with $I_{n+1} \subseteq I_n$ for all $n \in \mathbb{N}$ and $\{x\}=\bigcap_{n \in \mathbb{N}} I_n$.   Such a name can be encoded as an infinite sequence of symbols. 

We call a real function $f: \subseteq \mathbb{R} \rightarrow \mathbb{R}$ computable, iff some (Type 2 Turing) machine maps any name of any $x \in \operatorname{dom}(f)$ to a name of $f(x)$. For real functions $\tu f: \subseteq \mathbb{R}^n \rightarrow \mathbb{R}$ we consider machines reading $n$ names in parallel.  A computable function is necessarily continuous: see \cite{Wei00} for all details. 

We also need the concept of polynomial time computable function in computable analysis: see \cite{Ko91}. In short, a quickly converging name of $\vx \in \R^{d}$ is a name  of $\vx$, with $I_{n}$ of radius $<2^{-n}$. 
A function $\tu f: \R^{d} \to \R^{d'}$ is said to be computable in polynomial time, if there is some oracle TM $M$, such that, for all $\vx$, given any fast converging name of $\vx$ as an oracle, given $n$, $M$ produces some open rational ball of radius $<2^{-n}$ containing $\tu f(\vx)$,  in a time polynomial in $n$. 


A function $\tu f: \R^{d} \to \R^{d'}$ is computable in polynomial space if there exists
an oracla TM $M$, such that, for all $\vx$, given any fast converging name of $\vx$ as an oracle, given $n$, $M$ produces some open rational ball of radius $<2^{-n}$ containing $\tu f(\vx)$,  in a space polynomial in $n$. 



\subsection{Some classical statements from numerical analysis}


\begin{lemma}[{Discrete Grönwall's lemma, e.g  \cite[page 213]{Dem96}}] \label{discretegronwall}
  Consider sequences  $h_n,\theta_n \ge 0$ and $\varepsilon_n \in \R$
 such that 
  $$\theta_{n+1} \le (1+\Lambda h_n) \theta_n + |\epsilon_n|$$

  Then
  $$\theta_n \le e^{\Lambda (t_n-t_0) } \theta_0 + \sum_{0 \le i \le
    n-1} e^{\Lambda (t_n-t_{i+1}) } |\epsilon_i|$$
  \end{lemma}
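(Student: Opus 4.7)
The plan is to prove the inequality by induction on $n$, with the key analytic ingredient being the elementary bound $1 + \Lambda h_n \le e^{\Lambda h_n}$, which holds for all $\Lambda h_n \ge 0$ by convexity of the exponential (or just by comparing Taylor series). Here I interpret $t_n$ as the cumulative time $t_n = t_0 + \sum_{k=0}^{n-1} h_k$, so that $t_{n+1} - t_n = h_n$; this is the standard convention and is implicit in the statement.

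The base case $n = 0$ is immediate: the right-hand side reads $e^{0}\theta_0 + (\text{empty sum}) = \theta_0$, which matches $\theta_0 \le \theta_0$. For the inductive step, assume the bound holds at rank $n$. Starting from the hypothesis $\theta_{n+1} \le (1+\Lambda h_n)\theta_n + |\epsilon_n|$ and using $1 + \Lambda h_n \le e^{\Lambda h_n} = e^{\Lambda(t_{n+1}-t_n)}$, I would multiply the inductive bound through by $e^{\Lambda(t_{n+1}-t_n)}$ and add $|\epsilon_n|$:
\[
\theta_{n+1} \;\le\; e^{\Lambda(t_{n+1}-t_n)}\left[ e^{\Lambda(t_n-t_0)}\theta_0 + \sum_{0\le i\le n-1} e^{\Lambda(t_n-t_{i+1})}|\epsilon_i|\right] + |\epsilon_n|.
\]
Using the identity $e^{\Lambda(t_{n+1}-t_n)} e^{\Lambda(t_n - s)} = e^{\Lambda(t_{n+1}-s)}$ inside the brackets, the right-hand side simplifies to
\[
e^{\Lambda(t_{n+1}-t_0)}\theta_0 + \sum_{0\le i\le n-1} e^{\Lambda(t_{n+1}-t_{i+1})}|\epsilon_i| + |\epsilon_n|,
\]
and since the extra $|\epsilon_n|$ term equals $e^{\Lambda(t_{n+1}-t_{n+1})}|\epsilon_n|$, it precisely absorbs into the sum as the $i=n$ index, yielding the desired bound at rank $n+1$.

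There is no genuine obstacle: once the inequality $1+\Lambda h_n \le e^{\Lambda h_n}$ is in hand, everything is essentially bookkeeping with exponentials. The only subtle point worth flagging is making sure the convention on $t_n$ is consistent with $h_n = t_{n+1} - t_n$, which matches the continuous Grönwall interpretation where $\theta_n$ approximates the error at time $t_n$ of a step-based numerical scheme with steps $h_n$. The proof is genuinely short; it is the discrete-time analogue of the standard Grönwall estimate obtained from $y' \le \Lambda y + |\epsilon(t)|$ via the integrating factor $e^{-\Lambda t}$, and the induction above is exactly the discrete reflection of that integrating-factor argument.
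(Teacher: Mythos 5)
Your proof is correct and follows essentially the same route as the paper: induction on $n$, with the base case $\theta_0 \le \theta_0$ and the inductive step driven by the bound $1+\Lambda h_n \le e^{\Lambda(t_{n+1}-t_n)}$, then absorbing the trailing $|\epsilon_n|$ into the sum. Your explicit remark that $t_n$ must be the cumulative time with $h_n = t_{n+1}-t_n$ is a useful clarification the paper leaves implicit, but the argument is the same.
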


  \begin{proof}
    By recurrence over $n$. For $n=0$, the inequality is $\theta_0
    \le \theta_0$.

    Suppose now the inequality at order $n$. Observe that
    $$ (1+\Lambda h_n) \le e^{\Lambda (t_{n+1}-t_n)}$$

    By hypothesis, we have

    \begin{eqnarray*}
      \theta_{n+1} &\le& e^{\Lambda (t_{n+1}-t_n)} \theta_n +
                         |\epsilon_n| \\
      &\le& e^{\Lambda (t_{n+1}-t_0)} \theta_0 + \sum_{0 \le i \le
            n-1} e^{\Lambda (t_{n+1}-t_{i+1} )} | \epsilon_i| +
            |\epsilon_n|
    \end{eqnarray*}

    The inequality at order $n+1$ follows.

  \end{proof}

\end{document}